\newtheorem{lemma}{Lemma}
\newtheorem{remark}{Remark}
\theoremstyle{remark}
\theoremstyle{remark}
\newcommand{\floor}[1]{\lfloor #1 \rfloor}
\newcommand{\ceil}[1]{\lceil #1 \rceil}
\DeclareMathOperator*{\argmax}{argmax}
\newcommand{\Expectation}[1][]{ 
    \ifthenelse{ \equal{#1}{} }
    {\mathbb{E}}
    {\mathbb{E} \left [ #1 \right ] }
}
\newcommand{\Proba}[1][]{ 
    \ifthenelse{ \equal{#1}{} }
    {\mathbb{P} }
    {\mathbb{P} \left ( #1 \right )}
}
\newcommand{\Probatilde}[1][]{ 
    \ifthenelse{ \equal{#1}{} }
    {\tilde{\mathbb{P}} }
    {\tilde{\mathbb{P}} \left ( #1 \right )}
}
\newcommand{\indicator}[1][]{ 
    \ifthenelse{ \equal{#1}{} }
    {\mathds{1} }
    {\mathds{1} \left \{ #1 \right \}}
}
\newcommand\numberthis{\addtocounter{equation}{1}\tag{\theequation}}
\newcommand{\bvec}{\mathbf{b}}
\newcommand{\betavec}{\boldsymbol{\beta}}
\title{Comparing Uniform Price and Discriminatory Multi-Unit Auctions through Regret Minimization}
\author{Marius Potfer$^{1,2}$ \\
\And Vianney Perchet$^{1,3}$}
\date{July 2024}
\begin{document}

\maketitle

\begin{center}
\begin{tabular}{l}
   $^1$ Crest (Fairplay joint team), ENSAE\\
   $^2$ EDF R\&D \\
   $^3$ Criteo AI Lab
\end{tabular}

\end{center}

\begin{abstract}
Repeated multi-unit auctions, where a seller allocates multiple identical items over many rounds, are common mechanisms in electricity markets and treasury auctions. We compare the two predominant formats: uniform-price and discriminatory auctions, focusing on the perspective of a single bidder learning to bid against stochastic adversaries. We characterize the learning difficulty in each format, showing that the regret scales similarly for both auction formats under both full-information and bandit feedback, as $\tilde{\Theta} ( \sqrt{T}  )$ and $\tilde{\Theta} ( T^{2/3}  )$, respectively. However, analysis beyond worst-case regret reveals structural differences:  uniform-price auctions may admit faster learning rates, with regret scaling as $\tilde{\Theta} ( \sqrt{T}  )$ in settings where discriminatory auctions remain at $\tilde{\Theta}  ( T^{2/3}  )$. Finally, we provide a specific analysis for auctions in which the other participants are symmetric and have unit-demand, and show that in these instances, a similar regret rate separation appears. 
\end{abstract}

\section{Introduction}

 Uniform price auctions and discriminatory auctions are widely used market mechanisms for the allocation of resources. They have been widely implemented and studied for electricity markets \citep{hortaccsu2008understanding,elmaghraby1999efficiency} and treasury auctions \citep{nyborg2002bidder}. They are the most widespread simultaneous multi-unit auctions of identical goods \citep{krishna2009auction}. Uniform price and discriminatory auctions are both standard mechanisms that allocate items to buyers willing to pay the most; their rules are identical except for the prices buyers pay. Since the two mechanisms are so similar, they can easily replace each other and have been frequently compared \citep{de2013inefficiency}. Their differences have been studied from several points of view: from their efficiency in \cite{ausubel2014demand}, the revenue they generate for the auctioneer, to the expected social welfare in different types of equilibria \citep{syrgkanis2013composable}.
\newline 

Participating in these auctions can be challenging from a bidder's standpoint since the bidding process is not truthful (i.e., the best strategy depends on other buyers' behavior). However, when auctions occur repeatedly, buyers can improve their strategies over time using online learning techniques; this is referred to as \emph{learning to bid} \citep{feng2018learning}. The performance of these online learning strategies is often measured by regret, which compares the cumulative rewards with that of the best fixed strategy in hindsight \citep{lattimore2020bandit}. We propose to leverage this online learning framework to quantify the difficulty of learning to bid optimally, therefore providing a new approach to compare uniform and discriminatory price auctions. 
\newline 

 Learning to bid in repeated multiunit auctions has recently been studied, both in the uniform pricing case \cite{branzei2023learning} and \cite{potfer2024improved}, and in the discriminatory pricing case by \cite{galgana2025learning}. These works focus on adversarial bids and provide regret rates with similar dependency on the time horizon $T$ of $\tilde{\mathcal{O}} ( \sqrt{T} )$ under full-information feedback and   $\tilde{\mathcal{O}} ( T^{2/3} )$  under bandit feedback. These rates have been proven to be tight, except for the usual uniform price auction (referred to as the First Rejected Bid in \cite{potfer2024improved}).

We study the learning-to-bid problem in the case where opposing bids are stochastic. The stochasticity assumption ensures that our analysis characterizes the inherent difficulty of the auction format, rather than the difficulty arising from facing strategic or adversarial agents. We characterize when both problems can be learned with similar regret rates and when the rates differ. In the process, we provide the first worst-case tight lower bounds for uniform price auctions with bandit feedback, as well as the first instance-dependent regret bounds. Finally, we show that, against symmetric unit-demand adversaries, a clear separation of achievable regret appears, and we provide an efficient algorithm for the uniform auction that leverages the specific structure of this setting.

\subsection{Related Work}

Simultaneous auctions of multiple identical items, including uniform, discriminatory, and Vickrey-Clarke-Groves pricing, are standard in auction theory, as described in \cite{krishna2009auction}. Uniform and discriminatory auctions have been studied and compared from an empirical point of view  \citep{NYBORG199663,nyborg2002bidder}, and by theoretical approaches, focusing on the revenue they generate \citep{ausubel2014demand} as well as the social welfare they generate \citep{syrgkanis2013composable,de2013inefficiency}. Special cases involving symmetry and unit-demand participant are also of particular interest \citep{anderson2023multi}.

The repeated setting of auctions has recently attracted some focus, as covered by \cite{nedelec2022learning}. This setting allows for leveraging online and statistical learning tools \citep{lattimore2020bandit} to explore dynamic strategies. These repeated settings were first studied with a focus on the auctioneer, enabling the learning of reserve prices as in the work of \cite{mohri2014learning}. Learning to bid, from the bidder's perspective, was introduced later and studied in several settings, including sealed-bid first-price and second-price auctions \citep{balseiro2019contextual,achddou2021fast,weed2016online}.

Learning to bid in multi-unit auctions has only recently started to be studied, except for a partial result for uniform auctions by \cite{feng2018learning} used as an example.
The discriminatory pricing auction was studied by \cite{galgana2025learning}, who proved regret rates of $\tilde{\mathcal{O}}  ( K \sqrt{T} )$ and $\tilde{\mathcal{O}} \left ( K T^{2/3}\right )$ under full-information and bandit feedback, respectively. Regret bounds of $\tilde{\mathcal{O}}  ( K^{3/2} \sqrt{T} ) $ were also obtained for the uniform-price format in the full-information case by \cite{branzei2023learning}, as well as sub-optimal rates in the bandit case. \cite{potfer2024improved} provided algorithms with a $\tilde{\mathcal{O}} \left ( K^{4/3} T^{2/3}\right ) $ regret upper bound for uniform auction in bandit feedback and showed that it is tight for a less common variant of uniform pricing (Last Accepted Bid pricing, or LAB). They provide no matching lower bound for the usual uniform price auction (First Rejected Bid, or FRB). \cite{golrezaei2024bidding} has also studied uniform pricing in a repeated setting, but focuses on bidders with return-over-investment constraints; they obtain similar regret bounds of $\tilde{\mathcal{O}} \left (  K^{5/3} T^{2/3}\right ) $ in their setting. 

\subsection{Contribution}
We propose a regret-based comparison of repeated uniform and discriminatory multi-unit auctions with stochastic opposing bids. For this comparison, we provide the first analysis of learning to bid in repeated multi-unit auctions when facing stochastic bids for both auction types.

We show that both auction formats admit tight worst-case regret rates of $\tilde{\mathcal{O}}  ( K \sqrt{T} )$ under the full-information setting (improving the best known rates for uniform price auction by a factor of $\sqrt{K}$). We provide the first lower bound for a uniform price auction designed specifically for bandit feedback, showing that the regret must grow as $\Omega \left ( T^{2/3} \right )$, which, in light of previously known upper bounds \citep{potfer2024improved}, fully characterizes how the regrets must scale with respect to $T$.  Our algorithm analysis also yields a more precise regret characterization for the bandit setting of uniform price auctions:   while the upper bound scales as $\tilde{\mathcal{O}} \left ( K^{5/3} T^{2/3}\right )$ for general demand, better regret rates can be achieved when the bidder has low demand: $0$ regret for unit-demand and $\tilde{\mathcal{O}} (\sqrt{T})$ for two-unit-demand.

We further provide an instance-dependent quantity characterizing other instances (with general demand) for which regret bounds scaling as $\mathcal{O} (  \sqrt{T}  )$ can be obtained for the uniform auction under bandit feedback. The discriminatory auction's lower bound of $\Omega \left ( T^{2/3} \right )$ remains valid in these instances. 

Finally, in the bandit feedback when the adversaries are unit-demand and symmetric we show a similar regret separation providing an algorithm which guarantees regret of $\tilde{\mathcal{O}}  ( \sqrt{T} )$ in uniform auctions while lower bound of $\Omega \left ( T^{2/3} \right )$ for the discriminatory price auctions remains valid. 
In the analysis, we derive a concentration inequality for partially observed ordered statistics based on the DKW inequality, which may be of independent interest.

The following table summarizes the achievable regret rates for the different auction formats under bandit feedback considered in this work, we use the notation $\tilde{\Theta} (\cdot)$ when we have matching lower bounds $\Omega (\cdot)$ and upper bounds $\mathcal{O} (\cdot)$, up to log factors. Results that are entirely novel are highlighted in \textcolor{OliveGreen}{green}, while \textcolor{NavyBlue}{blue} denotes marginal improvements or generalizations over existing work.
\begin{table}[ht]
\centering
\renewcommand{\arraystretch}{1.3}
\begin{tabular}{|c|c|c|c|c|c|}
\hline
 & \multicolumn{3}{c|}{\multirow{1}{*}{\textbf{Worst Case}}} & \multicolumn{2}{|c|}{\textbf{Instance Dependent}} \\
\cline{2-6}
 & Unit Demand & Two-unit Demand & Any Demand & $\Delta\text{- sep}\ \ref{par : delta separated}$ & I.I.D. \\
\hline
Discriminatory & \multicolumn{3}{c|}{\textcolor{NavyBlue}{$\tilde{\Theta}(T^{2/3})$}} & \multicolumn{2}{|c|}{\textcolor{NavyBlue}{$\tilde{\Theta} (T^{2/3} )$}} \\
\hline
Uniform & 0 & \textcolor{OliveGreen}{$\tilde{\Theta} ( \sqrt{T})$} & \textcolor{OliveGreen}{$\tilde{\Theta}(T^{2/3} )$} & \multicolumn{2}{|c|}{\textcolor{OliveGreen}{$\tilde{\Theta} (\sqrt{T} )$}}  \\

\hline
\end{tabular}
\caption{Achievable regret under bandit feedback.}
\label{tab:regret-summary}
\end{table}

\subsection{Technical novelty and Challenges}

While we study a similar setting, our approach departs significantly from prior work (Potfer et al., Branzei et al., Galgana et al.) in both algorithmic design and theoretical analysis. Unlike existing methods that discretize the bid space and rely on randomized bandit algorithms, we propose a deterministic algorithm tailored to the stochastic opposing bids setting. The main technical novelty lies in the use of statistical CDF estimation and concentration bands, which allow us to operate directly in a continuous action space without discretization. This analysis yields a cleaner and more interpretable analysis. From a technical standpoint, the main challenges is to carefully choose and apply the right concentration bounds and results to ensure the tightness of our bounds. 

We establish two complementary lower bounds that characterize the fundamental limits of learning in these stochastic auctions. The first provides a general lower bound that extends prior discrete-space results to continuous bid domains, ensuring that the regret scaling remains valid regardless of the action granularity. More importantly, deriving the lower bound for uniform price auctions under bandit feedback required particular care. As shown in \autoref{lemma : bandit feedback uniform decomposed formula 1}, the feedback is richer than standard bandit feedback (since it includes local observation of opposing bids), making it necessary to finely manage the feedback available to obtain our tight lower bound.

\section{Models}
The following encompasses both uniform and discriminatory price multi-unit auctions, and matches those used in \cite{branzei2023learning} and \cite{galgana2025learning}. Indeed, because both auction formats are very similar, we only need to specify different payment rules (equations \eqref{eq : payment uniform} and \eqref{eq : payment discriminatory}) to describe the mechanisms.
\paragraph{Repeated auction}

Let $\left (K,T \right ) \in \mathbb{N}^2$ be, respectively, the number of items sold at each auction and the number of repeated auctions. We describe the repeated $K$-item uniform and discriminatory price auctions from the point of view of one buyer, that we call the \emph{bidder}, facing other stochastic buyers, whose bids are aggregated and who, as a group, are called the \emph{adversary}. The bidder has a fixed set of valuations that quantifies how much utility it derives from being allocated any number of items. We denote by $(v_1,...,v_K) \in [0,1]^K$ the bidder's marginal valuations, they are assumed to be non-increasing. These are marginal in the sense that when the bidder obtains $n$ items, it derives $\sum_{l\leq n} v_l$ from it.
\newline 
At each time-step $t \in [T]$, an instance of the auction proceeds as follows:
\begin{enumerate}[noitemsep, topsep=0pt]
    \item The bidder transmits its bids $\bvec^t:= \{ b^t_1,b^t_2,...,b^t_K\}$ to the auctioneer. For convenience, we consider $\bvec$ ordered in non-increasing order and the bids in $[0,1]^K$. We call $B$ the corresponding subset of $[0,1]^K$. 
    \item The bids of the adversary $\betavec^t:= \{ \beta^t_1,\beta^t_2,...,\beta^t_K\} \in B$ are drawn according to a distribution over $B$, denoted $\mathcal{D}$, and then transmitted to the auctioneer. Note that since opposing bids belong to $B$, they are also assumed to be already sorted in non-increasing order.
    \item The auctioneer orders all the bids in non-increasing order and determines the \emph{allocation} (the number of items won) of the bidder that we denote $x\left ( \bvec^t, \betavec^t  \right ) \in [K]$ as well as the prices $p\left ( \bvec^t, \betavec^t  \right ) \in [0,1]^K$.
    \item The bidder receives their allocated items and pays the price for each item. This gives rise to the following utility:
    $u\left (\bvec^t,\betavec^t \right) := \sum_{l=1}^{x(\bvec^t,\betavec^t )} [ v_l - p(\bvec^t,\betavec^t)(l)  ]$
    \item The bidder may receive additional information, referred to as the feedback. 
\end{enumerate}

\paragraph{Allocation}
Both auctions we study are standard; therefore, the items are allocated to the buyers who submitted the $K$ highest bids. Thus, the allocation is formulated as follows\footnote{This allocation breaks ties in favor of the bidder. The results can be extended to any tie-breaking rule by ensuring that ties occur with probability zero.}: 
\begin{equation} \label{eq: allocation function}
    x(\bvec,\betavec) := \max \{ k \in [K] \mid \forall j \leq k, b_j \geq \beta_{K+1-j} \}
\end{equation} 

\paragraph{Pricing} 
As we noted previously, the pricing rules of the uniform and discriminatory auctions are different. We describe below how the price is set in both auctions: \begin{itemize}[noitemsep, topsep=0pt]
    \item The uniform price auction sets the price of all allocated units identically: the value of the first rejected bid. Formally, it is the $\left ( K+1 \right )^{\text{th}}$ largest element of $(\bvec,\betavec)$, written:
    \begin{equation} \label{eq : payment uniform}
        p(\bvec,\betavec) := \left( \max ( b_{x(\bvec,\betavec)+ 1}, \beta_{K-x(\bvec,\betavec)+1} ) \right)_{k \in [K]]}
    \end{equation}
    \item The discriminatory price auction sets the price of each allocated unit as the value of the corresponding bid (the bid that won its emitter this item). Formally, the discriminatory price is written as follows:
    \begin{equation} \label{eq : payment discriminatory}
        p(\bvec,\betavec) := \left (b_{k} \right)_{k \in [K]}    \end{equation}
\end{itemize}

\paragraph{Regret} The performance of a learning algorithm is quantified by the regret. This metric quantifies the expected cumulative difference between the maximum utility the bidder could have obtained and the utility they actually received. We define it as follows: 
\begin{equation}\label{def : pseudo regret}
R_T = T \sup_{\bvec \in B} \left ( \underset{\betavec \sim \mathcal{D}}{\Expectation} \left [ u\left ( \bvec,\betavec \right ) \right ] \right ) - \sum_{t=1}^T \underset{\betavec^t \sim \mathcal{D}}{\Expectation} \left [ u(\bvec^t,\betavec ^t) \right ]  
\end{equation}
This definition aligns with what is usually defined as pseudo-regret. Note that, with the previous definition, minimizing the regret is equivalent to maximizing the cumulative expected utility. Using this metric allows us to take into account both how well and quickly a bidder can approximate the best bid, as well as how \emph{costly} it is to acquire information about $\mathcal{D}$.

\paragraph{Feedback} The bidder can sequentially improve its bids since it receives some information after each auction, which intuitively provides it with some knowledge about the distribution $\mathcal{D}$. We describe below the \emph{feedback}, defined as the information received after one instance of the auction. We focus on two feedback types: the full-information and bandit feedback setting, which are the most common in online learning literature \citep{lattimore2020bandit} and in online learning in auctions \citep{feng2018learning,achddou2021efficient}. At time $t \in [T]$, at the end of the auction:
\begin{itemize}[noitemsep, topsep=0pt]
    \item in the \emph{full-information} feedback setting, the bidder observes $\betavec^t$;
    \item in the \emph{bandit} feedback setting, the bidder observes only $x(\bvec^t,\betavec^t)$ and $p(\bvec^t,\betavec^t)$.
\end{itemize}

Note that the full-information feedback is strictly more informative than the bandit feedback, because $ x(\bvec^t,\betavec^t)$ and $ p(\bvec^t,\betavec^t)$ can be computed by the bidder who knows both its bids $\bvec^t$ and the auction's rules.
\section{Learning with full-information feedback} \label{section : general}
We first focus on the full information feedback setting where the learner observes the full vector $\betavec^t$ at each time step. The bidder aims to minimize regret by choosing bids as close as possible to the optimal bid vector, namely $\bvec^\star := \argmax_{\bvec \in B} \Expectation_{\betavec \sim \mathcal{D}} [ u (\bvec,\betavec) ]$. Because the auction mechanisms are not truthful, $\bvec^\star$ depends on $\mathcal{D}$. To describe this dependence, let us define, for all $k \in [K]$, and $x\in [0,1]$, $F_k(x) := \Proba_{\betavec \sim \mathcal{D}} (\beta_k \leq x)$ the $k^{th}$ marginal cumulative distribution function. 

The following lemma provides the main intuition for our learning algorithms, which mainly focus on estimating the marginal CDFs rather than the optimal bid vector $\bvec^\star$.

\begin{restatable}{lemma}{lemmaUtilityCdf} \label{lemma : decomposed a one dimensional CDF}
    In both auction formats, the expected utility can be expressed as a function of the bidder's bid vector $\bvec$ and the marginal cumulative distribution functions $(F_k)_{k\in [K]} $.
\end{restatable}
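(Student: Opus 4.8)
The plan is to compute $\mathbb{E}_{\betavec\sim\mathcal{D}}[u(\bvec,\betavec)]$ explicitly in each format and read off its dependence on $\bvec$ and $(F_k)_{k\in[K]}$. The common first step is to simplify the allocation event. Since $\bvec$ is non-increasing while $j\mapsto\beta_{K+1-j}$ is non-decreasing, the predicate ``$b_j\ge\beta_{K+1-j}$'' appearing in \eqref{eq: allocation function} is monotone in $j$: once it fails it keeps failing. Hence the conjunction over $j\le k$ collapses onto its last term, giving the clean equivalence $\{x(\bvec,\betavec)\ge k\}=\{b_k\ge\beta_{K+1-k}\}$, and therefore $\mathbb{P}(x(\bvec,\betavec)\ge k)=F_{K+1-k}(b_k)$. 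This already expresses the entire allocation law through the marginals.

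For the discriminatory format this essentially finishes the proof. I would write $u(\bvec,\betavec)=\sum_{l=1}^{K}(v_l-b_l)\,\mathds{1}\{x(\bvec,\betavec)\ge l\}$ (using $p(\bvec,\betavec)(l)=b_l$) and take expectations term by term, obtaining $\mathbb{E}[u]=\sum_{l=1}^{K}(v_l-b_l)F_{K+1-l}(b_l)$, manifestly a function of $\bvec$ and $(F_k)_k$.

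The uniform format is the substantive case, and where I expect the main difficulty. The value part $\mathbb{E}[\sum_{l\le x}v_l]=\sum_{l}v_l F_{K+1-l}(b_l)$ is handled exactly as above, so the real work is the expected payment $\mathbb{E}[x(\bvec,\betavec)\,p(\bvec,\betavec)]$ with $p=\max(b_{x+1},\beta_{K-x+1})$. I would expand it over the exact allocation level, $\mathbb{E}[xp]=\sum_{k=1}^{K}k\,\mathbb{E}\big[\max(b_{k+1},\beta_{K+1-k})\,\mathds{1}\{x=k\}\big]$, using $\{x=k\}=\{b_k\ge\beta_{K+1-k}\}\setminus\{b_{k+1}\ge\beta_{K-k}\}$. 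The obstacle is that this event a priori couples the two consecutive order statistics $\beta_{K-k}$ and $\beta_{K+1-k}$, so it looks as though the joint law, not the marginals, is needed.

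The crux is to split the $\max$ into two branches and exploit the interlocking of the allocation and pricing thresholds. On the branch $\beta_{K+1-k}\ge b_{k+1}$ the price equals $\beta_{K+1-k}$ and the constraint $b_{k+1}<\beta_{K-k}$ becomes automatic (because $\beta_{K-k}\ge\beta_{K+1-k}\ge b_{k+1}$), leaving the single-variable event $\{b_{k+1}\le\beta_{K+1-k}\le b_k\}$, governed by $F_{K+1-k}$ alone. On the branch $\beta_{K+1-k}<b_{k+1}$ the price equals $b_{k+1}$ and the event reduces to the sandwich $\{\beta_{K+1-k}<b_{k+1}<\beta_{K-k}\}$, whose two constraints share the \emph{single} threshold $b_{k+1}$. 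Here I would use the ordering $\beta_{K-k}\ge\beta_{K+1-k}$, which gives the inclusion $\{\beta_{K-k}\le b_{k+1}\}\subseteq\{\beta_{K+1-k}\le b_{k+1}\}$, so the sandwich probability telescopes to $F_{K+1-k}(b_{k+1})-F_{K-k}(b_{k+1})$, again purely marginal. Throughout I would invoke the standing assumption that ties occur with probability zero to pass freely between strict and weak inequalities, and treat the endpoint $k=K$ separately, where the $b_{K+1}$ term is absent and the price is $\beta_1$. Assembling the value part with the two payment branches then exhibits $\mathbb{E}[u]$ for the uniform auction as an explicit function of $\bvec$ and $(F_k)_k$, which completes the proof.
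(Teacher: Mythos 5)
Your proposal follows essentially the same route as the paper's proof: both decompose the uniform-auction utility over the exact allocation level $k$, split according to whether the clearing price is the bidder's own rejected bid $b_{k+1}$ or the adversary's bid $\beta_{K+1-k}$, and both use the ordering $\beta_{K-k}\ge\beta_{K+1-k}$ to telescope the two-sided event into $F_{K+1-k}(b_{k+1})-F_{K-k}(b_{k+1})$; your survival-sum treatment of the value part and your explicit reduction $\{x(\bvec,\betavec)\ge k\}=\{b_k\ge\beta_{K+1-k}\}$ are cosmetic variants of what the paper does implicitly, and your discriminatory-case computation is identical to the paper's \eqref{eq : utility discriminatory: formula estimated}.

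There is, however, one concrete slip: you place the boundary case in the wrong branch and then cover it with an assumption the model does not grant you. On your branch $\beta_{K+1-k}\ge b_{k+1}$, the constraint $b_{k+1}<\beta_{K-k}$ is \emph{not} automatic: on the event $\{\beta_{K-k}=\beta_{K+1-k}=b_{k+1}\}$ the tie-breaking of \eqref{eq: allocation function} awards $x\ge k+1$, so your branch-one event $\{b_{k+1}\le\beta_{K+1-k}\le b_k\}$ overcounts it, and your sandwich probability on the other branch likewise needs left limits of the CDFs when there are atoms. You repair this by invoking a ``standing assumption that ties occur with probability zero,'' but the paper makes no such assumption --- its footnote only fixes tie-breaking in favor of the bidder --- and the lemma is genuinely needed for atomic distributions elsewhere in the paper (the lower-bound constructions of \autoref{lemma : lower bound iid} and \autoref{par : delta separated} place point masses at values such as $0$, $2/3$, or $1-\tfrac{k}{2K}$). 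The fix is one line and is exactly the paper's choice of events: assign the tie $\beta_{K+1-k}=b_{k+1}$ to the $b_{k+1}$-price branch (the price value agrees there anyway) and keep the strict constraint explicit, yielding $\{\beta_{K-k}>b_{k+1}\ge\beta_{K+1-k}\}$ and $\{b_k\ge\beta_{K+1-k}>b_{k+1}\}$, whose probabilities and marginal expectations are exact for arbitrary $\mathcal{D}$. Finally, where you merely assert that the $\beta$-price term is ``governed by $F_{K+1-k}$ alone'' --- sufficient for the qualitative statement of the lemma --- the paper makes the dependence explicit via Stieltjes integration by parts, $\Expectation\left[\beta\,\mathds{1}\{b_{k+1}<\beta\le b_k\}\right]=b_kF(b_k)-b_{k+1}F(b_{k+1})-\int_{b_{k+1}}^{b_k}F(t)\,dt$, because the resulting closed form \eqref{eq : estimate of expected utility uniform price auction} is what the learning algorithms actually consume.
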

The proof follows from writing the utility and taking the expectation, it is provided in Appendix~\ref{app : general simplification}.
We denote $U_d$ the functions such that, in the discriminatory auction, for all $\bvec \in B, U_d ( (F_k)_{k \in [K]} ,\bvec) = \Expectation_{\betavec \sim \mathcal{D}} [u (\bvec,\betavec)]$, and $U_u$ the equivalent in the uniform auction. Explicit formulas are provided in the appendix (see \eqref{eq : utility : estimate discriminatory} and \eqref{eq : estimate of expected utility uniform price auction}) to save space.

\subsection{Algorithm}

At time $t \in [T]$, the bidder has observed $\betavec^1,\hdots, \betavec^{t-1}$, which are i.i.d. samples from $\mathcal{D}$. These are sufficient to compute empirical marginal CDFs, which are known to have strong convergence properties \cite{massart1990tight}. They are defined, for $k \in [K]$, and $ x \in [0,1]$ as $ \hat{F}_k^t(x) := \frac{1}{t-1}\sum_{j=1}^{t-1} \indicator \{ \beta_k^j \leq x \}$. Depending on the auction, we define the expected utility estimate as either $\hat{u}^t(\bvec) := U_u ( (\hat{F}_k^t ), \bvec)$ or $\hat{u}^t(\bvec) := U_d ( (\hat{F}_k^t ), \bvec)$. Equipped with these, we can now provide an algorithm that guarantees tight regret bounds for both auction formats. 

\begin{algorithm}
    \caption{Full-information feedback}
    \label{algorithm : full-info}
    \textbf{Input:} time horizon $T$
    
    \textbf{Output:} bids for each time step $\left ( \bvec^1,\bvec^2,\hdots, \bvec^{T-1},\bvec^T \right ) \in (B)^T $. 

    \For{$t = 1,2,\hdots, T$}{
    Play $\bvec^t := \argmax_{\bvec \in {B}}  \hat{u}^{t} (\bvec) $.
    
    Receive the utility $ u(\bvec^t, \betavec^t)$ and observe $\betavec^t$.}
\end{algorithm}

\begin{restatable}{theorem}{TheoremFullInfo} \label{theorem : full-info algo regret garantee}
    Under full-information feedback, \autoref{algorithm : full-info} achieves a regret $R_T = \tilde{\mathcal{O}} \left ( K\sqrt{T} \right ) $ for both the discriminatory and the uniform price auction. 
\end{restatable}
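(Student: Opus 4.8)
\emph{Proof plan.} The plan is to read \autoref{algorithm : full-info} as a certainty-equivalence (follow-the-leader) scheme and to reduce its regret entirely to the quality of the empirical CDF estimates. Write $\mathbf{F} := (F_k)_{k\in[K]}$ and let $U \in \{U_u, U_d\}$ denote the relevant expected-utility functional from \autoref{lemma : decomposed a one dimensional CDF}, so that $U(\mathbf{F},\bvec) = \mathbb{E}_{\betavec\sim\mathcal{D}}[u(\bvec,\betavec)]$. The first step is to observe that $\bvec^t$ is measurable with respect to $\betavec^1,\dots,\betavec^{t-1}$ and hence independent of the fresh draw $\betavec^t$; conditioning on the history therefore gives $\mathbb{E}[u(\bvec^t,\betavec^t)] = \mathbb{E}[U(\mathbf{F},\bvec^t)]$, and the pseudo-regret \eqref{def : pseudo regret} collapses to a sum of one-step suboptimality gaps,
\[
R_T = \sum_{t=1}^T \mathbb{E}\!\left[ U(\mathbf{F},\bvec^\star) - U(\mathbf{F},\bvec^t) \right].
\]

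The second step is the standard optimism decomposition. Since $\bvec^t = \argmax_{\bvec\in B} U(\hat{\mathbf{F}}^t,\bvec)$, the quantity $U(\hat{\mathbf{F}}^t,\bvec^t) - U(\hat{\mathbf{F}}^t,\bvec^\star)$ is nonnegative; adding and subtracting it, each gap is bounded by the uniform estimation error of the utility functional,
\[
U(\mathbf{F},\bvec^\star) - U(\mathbf{F},\bvec^t) \le 2 \sup_{\bvec\in B} \left| U(\mathbf{F},\bvec) - U(\hat{\mathbf{F}}^t,\bvec) \right|.
\]
This reduces the whole problem to controlling $\sup_{\bvec} | U(\mathbf{F},\bvec) - U(\hat{\mathbf{F}}^t,\bvec) |$, uniformly in $\bvec$.

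The third step, which carries the bulk of the work, is to show that $U$ is Lipschitz in the marginal CDFs with constant \emph{linear} in $K$, namely
\[
\sup_{\bvec\in B} \left| U(\mathbf{F},\bvec) - U(\hat{\mathbf{F}}^t,\bvec) \right| \le C \sum_{k=1}^K \left\| F_k - \hat{F}_k^t \right\|_\infty .
\]
For the discriminatory format this is immediate: using the monotonicity of the sorted vectors one gets the closed form $U_d(\mathbf{F},\bvec) = \sum_{l=1}^K (v_l - b_l)\, F_{K+1-l}(b_l)$, and since $|v_l - b_l|\le 1$ the difference telescopes coordinate-wise into $\sum_k \|F_k - \hat F_k^t\|_\infty$. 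For the uniform format the same target needs more care: the first-rejected-bid price is a maximum of order statistics, so its contribution to $U_u$ (see \eqref{eq : estimate of expected utility uniform price auction}) enters through integrals of the marginal CDFs. I would bound these via integration by parts, using that the relevant integrands have total variation $O(1)$, so that each marginal again contributes only $\|F_k - \hat F_k^t\|_\infty$ and the aggregate constant stays linear in $K$ rather than $K^{3/2}$.

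Finally, I would invoke the DKW inequality \citep{massart1990tight}: for each $k$, $\| F_k - \hat{F}_k^t \|_\infty \le \sqrt{\log(2Kt^2)/(2(t-1))}$ fails with probability at most $1/(Kt^2)$, so a union bound over the $K$ coordinates yields a good event $G_t$ of probability at least $1 - 1/t^2$ on which the per-step gap is $O(K\sqrt{\log(Kt)/t})$. On $G_t^c$ the gap is at most $2K$ since $|u|\le K$, contributing $\sum_t 2K/t^2 = O(K)$ in total (the first round, with no samples, is absorbed into this crude bound). Summing $\sum_{t=1}^T K\sqrt{\log(Kt)/t} = O(K\sqrt{\log(KT)\,T})$ then gives $R_T = \tilde{\mathcal{O}}(K\sqrt{T})$ for both formats. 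I expect the main obstacle to be the third step for the uniform auction: obtaining the linear-in-$K$ Lipschitz constant uniformly over $\bvec$, where the price term forces one to rewrite expectations of maxima of order statistics as CDF integrals and control them coordinate-wise so that no spurious $\sqrt{K}$ factor creeps in — precisely the refinement that improves the previous $\tilde{\mathcal{O}}(K^{3/2}\sqrt{T})$ bound of \cite{branzei2023learning} to $\tilde{\mathcal{O}}(K\sqrt{T})$.
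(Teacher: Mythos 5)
Your steps 1, 2 and 4 (the reduction of the pseudo-regret to one-step gaps, the ERM/optimism decomposition, and the DKW-plus-union-bound machinery) are correct and coincide with the paper's argument, and your treatment of the discriminatory format is essentially the paper's proof verbatim: there $U_d(\mathbf{F},\bvec)=\sum_i F_{K-i+1}(b_i)(v_i-b_i)$ has coefficients bounded by $1$, so the coordinate-wise telescoping gives the linear-in-$K$ constant immediately.

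The gap is exactly where you anticipated trouble: step 3 for the uniform auction. The claimed Lipschitz bound $\sup_{\bvec}|U_u(\mathbf{F},\bvec)-U_u(\hat{\mathbf{F}}^t,\bvec)| \le C\sum_k \|F_k-\hat F_k^t\|_\infty$ with $C=O(1)$ is not true under sup-norm control alone, and integration by parts does not rescue it. In $U_u$ (see \eqref{eq : estimate of expected utility uniform price auction}) each marginal $G_k$ is multiplied by weights of order $K$: the cumulative valuations $\sum_{l\le i} v_l$ and the factor $i$ on the price terms. Integration by parts does handle the terms where the weight $i$ comes attached to the increment over $(b_{i+1},b_i]$, because $\sum_i i(b_i-b_{i+1}) \le \sum_i b_i \le K$ telescopes — and the paper bounds exactly these terms by $K\epsilon$ with plain DKW, matching your intuition. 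But it does not handle the event-probability term $\sum_i \left(F_{K-i+1}(b_{i+1})-F_{K-i}(b_{i+1})\right)\left(\sum_{l\le i} v_l - b_{i+1}\right)$: regrouping by coordinate, $F_k$ enters evaluated at the two \emph{different} points $b_{K-k+1}$ and $b_{K-k+2}$, each carrying a weight of order $K-k$, and the deviation $\hat F_k^t - F_k$ may take values $+\epsilon$ and $-\epsilon$ at those two points while respecting the DKW band. Hence coordinate $k$ can contribute $\Theta\left((K-k)\epsilon\right)$ and the aggregate is $\Theta(K^2\epsilon)$, so a per-coordinate sup-norm argument only yields $\tilde{\mathcal{O}}(K^2\sqrt T)$; your assertion that "the relevant integrands have total variation $O(1)$" is precisely what fails. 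The paper closes this by a \emph{joint} concentration argument that no per-coordinate DKW bound can deliver: the $2K$ events $\{\beta_{K-i}>b_{i+1}\ge\beta_{K-i+1}\}$ and $\{b_i\ge\beta_{K-i+1}>b_{i+1}\}$ are disjoint for each $\bvec$, so their empirical frequencies are strongly negatively correlated; viewing the map $\betavec\mapsto$ (which event occurs) as a multiclass classifier indexed by $\bvec\in B$, bounding the Natarajan dimension of this family by $2K+2$, and invoking the multiclass fundamental theorem of \cite{shalev2014understanding} gives a uniform bound on the deviation of the whole weighted sum at once. To repair your proof you would need either this multiclass uniform-convergence step or a variance-sensitive (local) concentration bound on CDF increments; the sup-norm Lipschitz route as written cannot produce the linear-in-$K$ constant for the uniform format.
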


\begin{proof}[Proof Sketch]
    The proof for the discriminatory auction uses the Dvoretzky–Kiefer–Wolfowitz inequality from \cite{massart1990tight} to build high-probability confidence bands for $\hat{F_k}$, which yields confidence bands for the utility estimate. Combining that with the optimality of $\bvec^t$ with respect to $\hat{u}^t$ allows us to show that the per-round regret is upper bounded by  $2 K \sqrt{ \ln  ( 2 / \alpha ) / 2t}$, therefore providing the correct regret rates when summing.
    
    The proof for the uniform auction's is very similar. The main difference appears when ensuring error bands scales as $K$ and not $K^2$, this requires our concentration bnads to leverage the strong negative correlation between the different terms estimated in excepted utility. This is achieved by using the Natarajan dimension, a multi-class classifier generalization of VC dimension \cite{shalev2014understanding}.
\end{proof}
The complete proof is in \autoref{app : proofs regret upper bounds}. These regret bounds are tight, as it is known that any algorithm must incur a regret growing as $\Omega  (K \sqrt{T}  )$ in both auction format \citep{branzei2023learning,galgana2025learning}.

\begin{remark} Note that, because $\hat{u}^t$ is always a piece-wise linear function and is separable into functions that only depend on two components of $\bvec$, a maximizer can be found by using dynamic programming in $\mathcal{O} \left ( K^3 t^2 \right )$. 
\end{remark}

\section{Learning with bandit feedback}
We now focus on learning in the bandit feedback setting. Let us recall that in this setting, at time $t$, the bidder only observes his allocation $x(\bvec^t,\betavec^t)$ and the price paid per unit $p(\bvec^t,\betavec^t)$.
Since the allocation function is the same across both auction's type \eqref{eq: allocation function}, the observed allocation provides the same information : \begin{equation} \label{eq : allocation}
    \left ( \indicator \left \{ b_i^t \geq \beta_{K-i+1}^t \right \} \right )_{i\in [K]}
\end{equation}

However, the information conveyed by the price differs: \begin{itemize}
    \item The discriminatory price depends only on the bidder's own bid and therfore provides no additional information. 
    \item The uniform price can be set by an opposing bid $\beta_{K-i+1}$, potentially revealing information about the distribution $\mathcal{D}$.
\end{itemize}
The following lemma formalizes the feedback the bidder receives in the uniform price auction. 

\begin{restatable}{lemma}{lemmaObersavtionBandit} \label{lemma : bandit feedback uniform decomposed formula 1}
    At time $t \in [T]$, let $\bvec^t,\betavec^t \in B$ be the bids of the learner and the adversary, in the uniform auction with bandit feedback the bidder observes $\left (\indicator \left \{ \beta_{K-i+1}^t \in (b_{i+1}^t,b_i^t] \right \} \beta_{K-i +1 }^t \right )_{i \in [K]} $. 
\end{restatable}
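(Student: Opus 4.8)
The plan is to show that the claimed vector $\bigl(\indicator\{\beta_{K-i+1}^t \in (b_{i+1}^t,b_i^t]\}\,\beta_{K-i+1}^t\bigr)_{i\in[K]}$ is a deterministic function of the bandit feedback, namely the allocation $x:=x(\bvec^t,\betavec^t)$ and the scalar uniform price $p:=\max(b_{x+1}^t,\beta_{K-x+1}^t)$ read off \eqref{eq : payment uniform}, together with the bidder's own (known) bids $\bvec^t$. Throughout I drop the superscript $t$ and adopt the boundary convention $b_{K+1}:=0$. The one structural fact I would establish first is the characterization of the allocation set: since $\bvec$ and $\betavec$ are both non-increasing, $b_i-\beta_{K-i+1}$ is non-increasing in $i$ (as $i$ grows, $b_i$ decreases while $\beta_{K-i+1}$ increases), so the definition \eqref{eq: allocation function} collapses to $\{i\in[K]:b_i\ge\beta_{K-i+1}\}=\{1,\dots,x\}$. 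Equivalently, $b_i\ge\beta_{K-i+1}$ iff $i\le x$, which is exactly the all-ones-then-all-zeros indicator vector \eqref{eq : allocation} that the bidder already reconstructs from $x$.

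Next I would show that every coordinate $i\ne x$ of the target vector vanishes, so those carry no information beyond $x$. For $i\ge x+1$ the characterization gives $\beta_{K-i+1}>b_i$, hence $\beta_{K-i+1}\notin(b_{i+1},b_i]$ and the indicator is $0$. For $i\le x-1$, I would apply the characterization at index $i+1\le x$ to get $b_{i+1}\ge\beta_{K-i}$, then combine with $\beta_{K-i}\ge\beta_{K-i+1}$ (monotonicity of $\betavec$) to obtain $\beta_{K-i+1}\le b_{i+1}$, so again $\beta_{K-i+1}\notin(b_{i+1},b_i]$. Thus the vector is supported on at most the single coordinate $i=x$ (and is identically zero when $x=0$, where there is nothing to observe).

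It then remains to recover the coordinate $i=x$ from $p$. From the characterization, $\beta_{K-x+1}\le b_x$, so the upper endpoint of $(b_{x+1},b_x]$ is automatic and the $x$-th coordinate reduces to $\indicator\{\beta_{K-x+1}>b_{x+1}\}\,\beta_{K-x+1}$. Comparing with $p=\max(b_{x+1},\beta_{K-x+1})$: if $\beta_{K-x+1}>b_{x+1}$ then $p=\beta_{K-x+1}$ and the coordinate equals $p$; otherwise $p=b_{x+1}$ and the coordinate is $0$. Both branches are captured by the single expression $\indicator\{p>b_{x+1}\}\,p$, which depends only on $p$, $x$, and the known bid $b_{x+1}$. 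This exhibits the entire target vector as a function of the bandit feedback and completes the argument.

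The routine part is the index bookkeeping (the shifts between $\beta_{K-i+1}$, $\beta_{K-i}$ and the endpoints $b_{i+1}$) and the degenerate cases $x=0,K$ handled via $b_{K+1}=0$; the tie-breaking convention of \eqref{eq: allocation function}, under which ties occur with probability zero, lets me ignore the endpoint equality $\beta_{K-x+1}=b_{x+1}$. The only genuinely load-bearing step is the monotonicity characterization $\{i:b_i\ge\beta_{K-i+1}\}=\{1,\dots,x\}$, since it is what collapses the vector onto its single informative coordinate and thereby pins the continuous value $\beta_{K-x+1}$ revealed by the price to precisely the slot $i=x$ selected by the allocation — this is the feedback richness (a local observation of the opposing bids) that the introduction flags as the source of difficulty for the matching lower bound.
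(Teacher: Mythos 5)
Your proposal is correct and takes essentially the same route as the paper's proof: both reduce to the observation that, by monotonicity of the sorted bid vectors, the event $\bigl\{ \beta_{K-i+1}^t \in (b_{i+1}^t,b_i^t] \bigr\}$ coincides with $\{ x(\bvec^t,\betavec^t)=i \} \cap \{ p(\bvec^t,\betavec^t) \neq b_{i+1}^t \}$, in which case the price itself reveals $\beta_{K-i+1}^t$. Your write-up is in fact more careful than the paper's terse argument (you make explicit the prefix characterization $\{i : b_i \geq \beta_{K-i+1}\} = \{1,\dots,x\}$ and the vanishing of all coordinates $i \neq x$, which the paper leaves implicit), but the underlying idea is identical.
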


This feedback reveals components of $\betavec^t$ if they fall into the right intervals, providing richer feedback than in the discriminatory case.We now examine achievable regret rates in both auction types and the effects of this feedback discrepancy.

\subsection{Discriminatory price auction} 
The discriminatory price auction with fixed valuation has been studied by \cite{galgana2025learning} who provided algorithms guaranteeing regret upper bounds of $\mathcal{O} \left ( KT^{2/3} \right )$. They also provided a regret lower bound of $\Omega \left ( K^{2/3} T^{2/3} \right )$ for discretized bid strategies. We provide a strengthening of this result with a lower bound that holds for \emph{any} algorithm, including those operating over continuous bid spaces.

\begin{restatable}{lemma}{LemmaLowerBoundFirstPrice} \label{lemma : lower bound first price}
    Any algorithms for bidding in repeated first price auctions with known valuation must incur a regret of $\Omega \left ( T^{2/3} \right )$
\end{restatable}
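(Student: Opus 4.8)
The plan is to reduce to the single-item first-price auction ($K=1$, the simplest discriminatory instance, so any lower bound here transfers verbatim to the multi-unit discriminatory format) and to exhibit an explicit family of hard stochastic adversaries. With one item and known valuation $v$, bidding $b$ wins iff $b \geq \beta$ and pays $b$, so the expected utility is $g(b) := (v-b)F(b)$ with $F$ the CDF of $\beta$; crucially, as already noted in the excerpt, the discriminatory price equals the bidder's own bid and carries no information, so the bandit feedback collapses to the win/lose indicator, a single $\mathrm{Bernoulli}(F(b))$ draw. The learner therefore faces a continuum-armed bandit in which querying $b$ yields reward of mean $g(b)$ but information of ``resolution'' only $\mathrm{Bernoulli}(F(b))$; this mismatch is exactly what forces a $T^{2/3}$ rate, in the spirit of classical posted-price lower bounds.

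For the construction I fix $v$ and an interval $[b_{\min},b_{\max}]$ bounded away from $v$ on which I set $F_0(b)=c/(v-b)$, so that $g_0 \equiv c$ is flat (a plateau of near-optimal bids) with $F_0$ kept in, say, $[1/4,3/4]$. I partition the plateau into $N$ cells $I_1,\dots,I_N$ of width $w=(b_{\max}-b_{\min})/N$, and for each $j$ define $F_j$ by adding, inside $I_j$ only, a small bump so that $g_j(b)=g_0(b)+\Delta\,\phi_j(b)$, where $\phi_j$ is a fixed bump profile supported on $I_j$ with peak $1$; equivalently $F_j(b)=F_0(b)+\Delta\,\phi_j(b)/(v-b)$. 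Monotonicity of $F_j$ holds as long as the bump's slope $\sim \Delta/w$ stays below $F_0'\sim\text{const}$, i.e.\ $w\gtrsim\Delta$. The global maximiser of $g_j$ then lies in $I_j$ with value $c+\Delta$, while every bid outside $I_j$ has reward at most $c$, giving a suboptimality gap of at least $\Delta$.

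For the information-theoretic step, let $n_j$ be the expected number of rounds, under the plateau instance $H_0$ (with $F_0$), at which the algorithm's bid lies in $I_j$. Bids outside the plateau are suboptimal under every $H_j$, and $\sum_j n_j \le T$, so at least half of the indices satisfy $n_j \le 2T/N$. By the chain rule for KL and the fact that $F_0$ and $F_j$ differ only on $I_j$,
\[
\mathrm{KL}(H_0 \,\|\, H_j) \;=\; \mathbb{E}_{H_0}\!\Big[\textstyle\sum_{t:\, b_t \in I_j} \mathrm{kl}\big(F_0(b_t)\,\|\,F_j(b_t)\big)\Big] \;\le\; n_j \cdot C\Delta^2 ,
\]
where $F_0\in[1/4,3/4]$ makes each per-round Bernoulli KL $\Theta(\Delta^2)$. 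For the selected indices this is $\le 2CT\Delta^2/N \le 1/4$ once $\Delta^2 \le N/(8CT)$; then Bretagnolle--Huber/Pinsker gives total variation $\le \sqrt{\tfrac12 \mathrm{KL}}<\tfrac12$, so the count of rounds spent inside $I_j$ grows by at most $T/2$ when passing from $H_0$ to $H_j$. Hence under $H_j$ the algorithm plays outside $I_j$ on at least $T-2T/N-T/2 \ge T/4$ rounds (for $N\ge 8$) and incurs regret $\ge \Delta T/4$. Taking $N=\lceil T^{1/3}\rceil$ and $\Delta \asymp T^{-1/3}$ (which also yields $w \asymp T^{-1/3}\gtrsim \Delta$, so the construction is valid) gives $R_T \ge \Delta T/4 = \Omega(T^{2/3})$ for a worst-case $j$.

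The main obstacle is the joint calibration of the construction: simultaneously keeping each $F_j$ a valid monotone $[0,1]$-valued CDF, forcing the reward gap to equal $\Delta$, and keeping the per-query Bernoulli KL at $\Theta(\Delta^2)$, all at the critical scaling $w \asymp \Delta \asymp T^{-1/3}$, $N \asymp T^{1/3}$. The remaining pieces (KL chain rule, averaging over $j$, Pinsker) are standard; and the continuum is handled for free, since the bookkeeping variable $n_j$ counts rounds by which cell the bid falls into rather than assuming a discretized action set, so no restriction on the algorithm beyond measurability is used.
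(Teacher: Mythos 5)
Your proposal is correct and follows essentially the same route as the paper's proof: both reduce to the $K=1$ first-price auction with Bernoulli win/lose feedback, use a revenue-flat base CDF of the form $c/(v-b)$, embed $\Theta\left(T^{1/3}\right)$ locally perturbed instances of gap $\Delta \asymp T^{-1/3}$, and control the transcript divergence via the KL chain rule with a per-round Bernoulli bound of order $\Delta^2$. The remaining differences are routine variations within the same machinery --- a tent-shaped bump versus the paper's flattening of the CDF on each cell, a Markov-selection-plus-Pinsker counting step versus the paper's averaging over a uniformly random index, and handling randomized algorithms directly through the chain rule versus the paper's deterministic-plus-Yao reduction --- none of which changes the substance of the argument.
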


Given this result and the almost matching upper bound, we believe the achievable regret rates are well characterized. In the following we characterize what can be achieved in the uniform price auction, and when the richer feedback allows for better rates than in discriminatory auctions. 

\subsection{Uniform price auction}

The additional information provided by the richer bandit feedback in the uniform price auction can be used to design learning algorithms. We illustrate how the feedback allows to estimate the marginal CDFs $F_k$ with \autoref{algorithm : Explore then commit}. During its estimation phase, it leverages the feedback detailed in \autoref{lemma : bandit feedback uniform decomposed formula 1} to ensure observation of each $\beta_k$ in a round robbin fashion. Once the estimated CDFs are precise enought, an estimated best bid $\bvec^{T_{expl}}$ is computed and then played repeatedly during the exploitation phase.

We define the empirical CDFs for the uniform price auction with bandit feedback as follows for $k \in [K]$:
\begin{equation}
    \forall x \in [0,1], \tilde{F_k} \left ( x \right ) := \frac{\sum_{j=1}^t \indicator\left \{ x \in (b_{K+2 - k}^t, b_{K+1 - k}^t ] \right \} \indicator \left \{ \beta_k^t \leq x \right \}}{\sum_{j=1}^t \indicator\left \{ x \in (b_{K+2 - k}^t, b_{K+1 - k}^t ] \right \}}
\end{equation}
We also define the empirical expected utility $\tilde{u}$, for $\bvec \in B$ as $ U_u(( \tilde{F} )_{k \in [K]},\bvec)$.
With these we can provide our algorithms and the associated regret guarantees.

\begin{algorithm}
    \caption{Estimate then commit for uniform price auction }
    \label{algorithm : Explore then commit}
    \textbf{Input:} time horizon $T$, $T_{expl}$ duration of the exploration.
    
    \textbf{Output:} bids for each time step $\left ( \bvec^1,\bvec^2,\hdots, \bvec^{T-1},\bvec^T \right ) \in B^T $. 

    \textbf{Estimate: } \For{$t = 1,2,\hdots, T_{expl}$}{
    $k \gets T- K\floor{T/K} + 1$.
    
    Play $\bvec^t$ such that $\forall i \leq k, b_i^t=1$ and $\forall i > k, b_i^t=0$. 
    
    Receive the utility $u^t = u(\bvec^t, \betavec^t)$ and the feedback. 
    }
    \textbf{Commit: }\For{$t =T_{expl}+1, \hdots, T$}{
    Play $\bvec^t = \argmax_{\bvec \in B} \hat{u}^{T_{expl}} (\bvec)$
    }
\end{algorithm}

\begin{restatable}{theorem}{ExploreThenCommitRegret} \label{theorem : explore then commit guarantee}
    The estimate then commit algorithms, with the exploration time $T_{expl} = K^{2/3} T^{2/3}$ achieves a regret bound of $\tilde{\mathcal{O}} \left ( K^{5/3} T^{2/3} \right ) $
\end{restatable}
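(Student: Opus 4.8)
The plan is to bound the regret of \autoref{algorithm : Explore then commit} by splitting it along its two phases, $R_T = R_{\mathrm{expl}} + R_{\mathrm{commit}}$, where $R_{\mathrm{expl}}$ gathers the regret incurred over the first $T_{expl}$ rounds and $R_{\mathrm{commit}}$ that incurred over the remaining commit rounds, during which the single bid $\bvec^{T_{expl}}$ is repeated. Since the per-round utility $u(\bvec,\betavec)=\sum_{l\le x}[v_l - p(l)]$ lies in $[-K,K]$ (at most $K$ items, with $v_l,p(l)\in[0,1]$), the per-round regret never exceeds $2K$, so trivially $R_{\mathrm{expl}} \le 2K\, T_{expl} = \mathcal{O}(K^{5/3}T^{2/3})$ for the prescribed $T_{expl}=K^{2/3}T^{2/3}$. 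This term will dominate, so the substance of the proof is to show $R_{\mathrm{commit}}$ is of the same order or smaller.

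First I would check that the exploration bids produce clean, fully observed samples of every marginal. With $b^t_i=1$ for $i\le k$ and $b^t_i=0$ for $i>k$, the observation interval $(b^t_{i+1},b^t_i]$ of \autoref{lemma : bandit feedback uniform decomposed formula 1} equals $(0,1]$ precisely when $i=k$ and is empty otherwise, so each exploration round reveals the exact value of $\beta^t_{K-k+1}$. Cycling $k$ over $\{1,\dots,K\}$ in round-robin therefore yields, after $T_{expl}$ rounds, $\Theta(T_{expl}/K)$ i.i.d. samples of each marginal $\beta_j$, and $\tilde{F}_k$ reduces to an ordinary empirical CDF on those samples. Consequently the DKW inequality \cite{massart1990tight}, applied to each marginal and union-bounded over $k$, gives with high probability uniform confidence bands $\|\tilde{F}_k-F_k\|_\infty = \tilde{\mathcal{O}}(\sqrt{K/T_{expl}})$ simultaneously for all $k\in[K]$.

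Next I would turn these CDF bands into a uniform bound on the utility-estimation error $\sup_{\bvec\in B}|\tilde{u}(\bvec)-U_u((F_k)_k,\bvec)|$, where $U_u((F_k)_k,\bvec)$ is the true expected utility from \autoref{lemma : decomposed a one dimensional CDF}. Re-using the argument of \autoref{theorem : full-info algo regret garantee} verbatim — in particular the Natarajan-dimension / negative-correlation device that keeps the aggregate error at order $K$ rather than $K^2$ — this error is $\tilde{\mathcal{O}}(K\sqrt{K/T_{expl}})$. The explore-then-commit sandwich then controls the committed bid: with $\bvec^\star$ the optimal bid and $\bvec^{T_{expl}}=\argmax_{\bvec}\tilde{u}(\bvec)$,
\[
U_u((F_k)_k,\bvec^\star)-U_u((F_k)_k,\bvec^{T_{expl}}) \le \big[U_u((F_k)_k,\bvec^\star)-\tilde{u}(\bvec^\star)\big] + \big[\tilde{u}(\bvec^\star)-\tilde{u}(\bvec^{T_{expl}})\big] + \big[\tilde{u}(\bvec^{T_{expl}})-U_u((F_k)_k,\bvec^{T_{expl}})\big],
\]
where the middle bracket is non-positive by optimality of $\bvec^{T_{expl}}$ for $\tilde{u}$ and each outer bracket is bounded by the uniform utility error. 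Hence the per-round commit regret is $\tilde{\mathcal{O}}(K\sqrt{K/T_{expl}})$, and summing over at most $T$ commit rounds gives $R_{\mathrm{commit}} = \tilde{\mathcal{O}}(TK^{3/2}/\sqrt{T_{expl}}) = \tilde{\mathcal{O}}(K^{7/6}T^{2/3})$.

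Combining the phases yields $R_T = \tilde{\mathcal{O}}(K^{5/3}T^{2/3}+K^{7/6}T^{2/3}) = \tilde{\mathcal{O}}(K^{5/3}T^{2/3})$, and one verifies that $T_{expl}=K^{2/3}T^{2/3}$ is exactly the choice balancing the dominant exploration cost against the commit cost. I expect the main obstacle to be the utility-error step rather than the bookkeeping: transporting the joint-concentration argument of the full-information analysis to the uniform-price utility $U_u$ so that the aggregate error scales as $K$ and not $K^2$ is delicate, and one must confirm that the round-robin counts are large enough — each marginal seen $\Theta(T_{expl}/K)$ times — for the $K$ DKW bands to hold simultaneously with only logarithmic overhead. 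A secondary care point is that during exploration the utility can be strictly negative (bidding $1$ wins items that must be paid for at the uniform price), so the $\mathcal{O}(K)$ per-round range must be read off the utility definition rather than assumed via non-negativity.
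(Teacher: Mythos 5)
Your skeleton coincides with the paper's proof: the same exploration/commit decomposition, the trivial bound $R_{\mathrm{expl}} = \mathcal{O}(K\,T_{expl})$, the observation that the round-robin threshold bids make the interval of \autoref{lemma : bandit feedback uniform decomposed formula 1} equal to $(0,1]$ so that each marginal collects $\lfloor T_{expl}/K \rfloor$ clean samples, the per-marginal DKW bands, and the three-term explore-then-commit sandwich. The genuine gap is in the one step you yourself flagged as delicate and then asserted: that the Natarajan-dimension argument of \autoref{theorem : full-info algo regret garantee} transfers ``verbatim''. It does not, and it cannot. That argument controls the \emph{joint} empirical measure over full vectors $\betavec^1,\dots,\betavec^t$: the multiclass functions $f(\bvec)$ classify a sample $\betavec$ by events such as $\{\beta_{K-i} > b_{i+1} \geq \beta_{K-i+1}\}$, which involve two coordinates of $\betavec$ simultaneously. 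Under bandit feedback each exploration round reveals exactly one coordinate, and the estimators $\tilde{F}_k$ are built on disjoint blocks of rounds; the joint empirical expectation $\hat{\Expectation}^t \left [ f(\bvec)(\betavec) \right ]$ is not even computable from the feedback, and the cross-class cancellation that the Natarajan bound exploits (each joint sample falls in exactly one class) has no counterpart when the $K$ marginal estimators use independent sample sets, whose errors simply add. Your claimed utility error $\tilde{\mathcal{O}} ( K\sqrt{K/T_{expl}} )$ is therefore unsupported; indeed the paper's own proof sketch warns that plain per-marginal DKW yields only $K^2 T^{2/3}$, so \emph{some} replacement mechanism is unavoidable.

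The paper's replacement is as follows. Since $\betavec$ is sorted, $\Proba ( \beta_{K-i} > x \geq \beta_{K-i+1} ) = F_{K-i+1}(x) - F_{K-i}(x)$, so the expected utility is a functional of the marginals alone; the dangerous coefficients $i \leq K$ (which naively give $K^2 \epsilon$) are then tamed by an Abel-summation rearrangement turning $\sum_{i} i\, b_{i+1} ( \cdots )$ into telescoping increments $F_{K-i+1}(b_{i+1}) - F_{K-i+1}(b_{i})$ over the intervals $(b_{i+1}, b_i]$. These increments are controlled not by DKW but by a variance-localized relative-deviation inequality for interval indicator classes (VC dimension $2$; Section 12.5 of \cite{boucheron2013concentration}, in the spirit of \cite{blanchard2024tight,bartl2023variance}), under which the error on $(a,b]$ scales as $\sqrt{F(b)-F(a)}\,\epsilon$; Cauchy--Schwarz then gives $\sum_{i} \sqrt{F_{K-i+1}(b_i) - F_{K-i+1}(b_{i+1})} \leq \sqrt{K}$, so the aggregate utility error is $6 K^{3/2} \epsilon$ with $\epsilon = \tilde{\mathcal{O}} ( \sqrt{K/T_{expl}} )$, i.e.\ $\tilde{\mathcal{O}} ( K^2 / \sqrt{T_{expl}} )$ per commit round --- a factor $\sqrt{K}$ worse than your claim, but still sufficient: the commit phase contributes $\tilde{\mathcal{O}} ( T K^2 / \sqrt{T_{expl}} ) = \tilde{\mathcal{O}} ( K^{5/3} T^{2/3} )$ at $T_{expl} = K^{2/3} T^{2/3}$, matching the exploration cost. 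So your decomposition, sampling argument, and final arithmetic are all repairable, but the proof as written is missing its central concentration mechanism (your treatment of the possibly negative exploration utility, bounding per-round regret by $2K$ rather than assuming non-negativity, is a correct and slightly more careful point than the paper's).
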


\begin{proof} [Proof Sketch]
    The proof separates the analysis of the regret incurred during the exploration phase and the commitment. During the exploration phase it is no bigger than $K^{5/3}T^{2/3}$. The regret of the exploitation phase can be upper bounded by $ K^{5/3} T^ {2/3}$; obtaining this bound requires the use of variants of the DKW concentration inequality, which exhibit local dependencies as in \cite{blanchard2024tight,bartl2023variance}. A simpler analysis, using DKW inequalities \citep{massart1990tight} yields a looser bound of $ K^{2} T^ {2/3}$ when setting $T_{expl}=K T^{2/3}$.
\end{proof}

These regret guarantees match the rates in $T$ of the ones for the discriminatory price auction. Since the regret rates in the discriminatory auction are tight in $T$, it is natural to ask whether that is also true for the uniform price auction. \autoref{thm : uniform price auction lower bound regret} provides a matching lower bound, proving that a regret that grows as $T^{2/3}$ is optimal.

\begin{restatable}{theorem}{LowerBoundUniform} \label{thm : uniform price auction lower bound regret}
In the uniform price auction, any algorithms must incur a regret of $\Omega \left ( T^{2/3} \right ) $ for learning to bid with known valuations, when valuations are such that $v_3 >0$.
\end{restatable}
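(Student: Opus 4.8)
The plan is to exhibit, for any fixed valuation vector with $v_3>0$, a finite family of adversary distributions $\{\mathcal{D}_j\}$ over $B$ against which no algorithm can keep regret small simultaneously, and then to close with a standard information-theoretic (Fano / Le Cam type) argument. The target is the pricing scaling $\epsilon=c\,T^{-1/3}$ for a small constant $c$: I build $\Theta(1/\epsilon)=\Theta(T^{1/3})$ candidate instances, indexed by the location of a small perturbation (``bump'') of height $\Theta(\epsilon)$ in a single marginal CDF $F_k$, designed so that (i) the instances are mutually $\epsilon$-separated in optimal expected utility, so that playing the optimum of the wrong instance costs $\Omega(\epsilon)$ per round, and (ii) distinguishing instance $j$ from a reference instance requires $\Omega(1/\epsilon^2)=\Omega(T^{2/3})$ informative observations localized at location $j$. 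Since the horizon affords only $T$ observations, the bidder can deeply probe at most $\Theta(T\epsilon^2)=\Theta(T^{1/3})$ locations; choosing $c$ small leaves a constant fraction of the $\Theta(T^{1/3})$ locations unprobed, so for those instances the bump is never identified and the bidder pays $\Omega(T\epsilon)=\Omega(T^{2/3})$.

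First I would reduce the $K$-item problem to an effectively one-dimensional pricing problem using the three valued units. I fix the top bids so that the bidder reliably wins the uncontested units, and arrange each $\mathcal{D}_j$ so that the only statistically uncertain decision concerns a single marginal unit and its per-unit price. The condition $v_3>0$ is exactly what makes this marginal decision genuinely profitable to contest, and, having three relevant bids, it endows the bid vector with an interior coordinate whose adjacent gap can be tuned to include or exclude the perturbed region of the support. This interior gap is unavailable when only two units are valued (its lower neighbor collapses to $0$), which is the structural reason two-unit demand admits $\sqrt{T}$ while $v_3>0$ does not.

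The crux — and the step I expect to be the main obstacle — is to neutralize the extra feedback identified in \autoref{lemma : bandit feedback uniform decomposed formula 1}. Unlike the discriminatory case (\autoref{lemma : lower bound first price}), here the bidder observes \emph{exact} values of opposing bids whenever they fall into the windows between consecutive own-bids, so a verbatim pricing lower bound would collapse: a learner could read off the bump location from a single well-placed bid. I must therefore design the bump and the support of each $\mathcal{D}_j$ so that, whenever the bidder bids near the optimal configuration, the perturbed region lies outside every revealing window, while any bid that brings the perturbed region into a window is itself $\Omega(\epsilon)$-suboptimal. Concretely, the informative observation of the relevant $\beta$-coordinate must be censored exactly as the win/lose signal is censored in single-item pricing, so that each informative round costs $\Omega(\epsilon)$ regret. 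Verifying this requires writing the utility through $U_u$ and the feedback channel explicitly and checking that, across the family, the only coordinate whose law changes is observed solely through this costly channel.

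Finally I would make the information bottleneck rigorous. Using the feedback decomposition, I bound the per-round KL divergence between any two instances by the probability that the bidder's current window captures the perturbed coordinate times the squared bump height, and sum via the chain rule for interactive protocols (as in standard bandit lower bounds) to obtain a total information budget $O\!\big(\sum_j N_j \epsilon^2\big)$, where $N_j$ is the number of rounds spent probing location $j$. A Fano (or Assouad) argument over the uniform prior on the bump location then shows that, since $\sum_j N_j\le T$ while identification needs $N_j\gtrsim 1/\epsilon^2$, a constant fraction of locations remain indistinguishable, forcing expected regret $\Omega(T\epsilon)=\Omega(T^{2/3})$. The delicate accounting is ensuring that the real-valued (non-Bernoulli) observations contribute no more information than the binary pricing signal, which is precisely what the censoring property established in the previous step guarantees.
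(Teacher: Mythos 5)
Your architecture coincides with the paper's (a three-valued-unit reduction with $K=3$; a family of $\Theta(T^{1/3})$ instances obtained by a local bump of height and width $\Theta(\epsilon)$, $\epsilon=\Theta(T^{-1/3})$, in the single marginal $F_2$; optimal bids satisfying $b_2^\star=b_3^\star$ so that the revealing window of \autoref{lemma : bandit feedback uniform decomposed formula 1} is closed at the optimum; chain rule for KL plus an averaging step over the uniformly drawn bump location), but your information--regret budget is not the one that closes the argument, and as stated it has two genuine gaps. First, the claim that ``each informative round costs $\Omega(\epsilon)$ regret'' fails in any construction of this shape: the window penalty is linear in the window width $b_2-b_3$, so a window of width $w\ll\epsilon$ that merely intersects the perturbed interval is informative at cost only $\Theta(w)$; moreover the censored Bernoulli signal $\indicator\{\beta_2\le b_2\}$ is observed \emph{every} round, so a diagonal bid $b_2=b_3=y$ placed inside the perturbed region collects $\Theta(\epsilon^2)$ of KL per round at zero cost under the base measure. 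Consequently ``identification needs $N_j\gtrsim 1/\epsilon^2$ rounds at $\Omega(\epsilon)$ each'' is not a valid charge. The paper instead (i) reduces WLOG to algorithms with $\sum_t (b_2^t-b_3^t)=o(T^{2/3})$, since otherwise the cumulative window widths alone force the bound; (ii) bounds the per-round KL by $O(\epsilon^2)$ times the indicator that $[b_3^t,b_2^t]$ touches the perturbed interval $I^i_\epsilon$; and (iii) uses that each round touches $O(1)$ of the $\Theta(1/\epsilon)$ intervals, so $\sum_i \mathbb{E}^0[N_T(i)]\lesssim T$, whence after averaging most instances remain unresolved while every non-touching round pays $r_i=\Omega(\epsilon)$.

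Second, and more seriously, your crux step --- that censoring makes the real-valued observations ``contribute no more information than the binary pricing signal'' --- does not follow from censoring. When the window does capture $\beta_2$, its \emph{exact value} is revealed, and the per-capture KL is governed by the density ratio between instances, not by the squared CDF-bump height: a bump of CDF height $\Theta(\epsilon)$ supported on width $\Theta(\epsilon)$ forces a density perturbation of amplitude $\Theta(1)$, hence per-capture KL $\Theta(1)$ and per-round KL $\Theta(\epsilon)$ on window rounds (capture probability $\Theta(\epsilon)$ times $\Theta(1)$), not $O(\epsilon^2)$. At that rate a scanner that opens an $\epsilon$-wide window at each of the $\Theta(1/\epsilon)$ locations for $\tilde{\mathcal{O}}(1/\epsilon)$ rounds pays only $\tilde{\mathcal{O}}(1)$ per location, i.e.\ $\tilde{\mathcal{O}}(T^{1/3})$ in total, and your budget collapses. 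To repair this you must either establish a bound of the paper's form $\int_{I^i_\epsilon}\bigl(f_2-f^i_{2,\epsilon}\bigr)^2/f^i_{2,\epsilon}\le O(\epsilon^2)$, which constrains the density amplitude of the bump and is exactly the inequality your plan takes for granted (note this should be checked carefully against the bump you pick: the natural slope-$\pm 1$ triangular bump makes this integral $\Theta(\epsilon)$, not $O(\epsilon^2)$), or change the accounting so that window rounds are charged in proportion to the information they actually deliver. As written, the proposal identifies the right obstacle but asserts rather than establishes its resolution, and the quantitative claim it rests on is false for the most natural instantiation of the construction.
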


\begin{proof}[Proof idea]
The proof of this theorems relies on standard approach \citep{cesa2024transparency,kleinberg2003value} to create regret lower bounds in bandit setting with continuous action space by embedding into our auction model a hard instance which is akin to a multiarmed bandit problem with $\mathcal{O} \left ( T^{1/3}\right ) $ arms.
The main challenge is that the bandit feedback in the uniform price auction is relatively rich, as pointed out in \autoref{lemma : bandit feedback uniform decomposed formula 1}.
    We overcome this difficulty with the two following techniques, for a specific $i \in [K]$:\begin{itemize}
        
     \item the optimal bids are such that $b_i^\star = b_{i+1}^\star$ while the bidder needs to observe $\beta_{K-i+1}$,
     \item when submitting a bid with $b_i \neq b_{i+1}$, the bidder suffers an additional regret term which scales linearly with $ b_i-b_{i+1}$.
     \end{itemize} 
\end{proof}

We have shown that the achievable regret in the two auction formats behaves similarly regardless of the feedback types. The remainder of the paper is dedicated to showing that, beyond worst cases, the uniform auction can be easier to learn (regret scaling with $\sqrt{T}$) across different families of instances for which the discriminatory auctions remain equally hard (regret scaling as $T^{2/3})$. We provide general algorithms that exhibit better regret scaling when possible without information about the instance and then focus on an I.I.D setting, whose structure is especially interesting. 

\section{Bandit feedback, beyond worst-case}

\subsection{Instance dependant regret}

To exhibit that the regret scales as $\sqrt{T}$ on family of instances that are well-suited for the uniform price auction, we need an algorithm which adapts to such instances. This adaptability is key in guaranteing worst-case regret, as in \autoref{theorem : explore then commit guarantee}, while also leveraging easy instances when presented with one. \autoref{algorithm : Explore then commit} lacks such flexibility, because of both its fixed-length estimation and commitment phases, as well as its fixed marginal CDF estimation intervals (namely $[0,1]$). 

We propose two natural improvements: $(i)$ using sequentially shrinking estimation intervals and $(ii)$ using distinct intervals for each marginal CDF. Intuitively, the shrinking rule should allow us to estimate CDFs only However, the choice of the shrinking rule must be made with care.
We propose a successive elimination approach \citep{audibert2010best} which iteratively reduces a set of candidate bids $B^t \subset B$ based on the current utility estimates and their uncertainty. As the set of relevant bids $B^t$ gets smaller, the intervals where estimates of the marginal CDFs are useful to estimate the utility do as well; this yields our shrinking rules for the intervals, that we denote $(\mathcal{I}_k^t)_{k \in [K]}$.

\SetKwInput{KwInput}{Input}
\SetKwInput{KwInit}{Initialize}
\SetKwInput{KwOutput}{Output}

\begin{algorithm}[H]
\caption{Bandit Feedback}
\label{algorithm : bandit K+1}

\KwInput{Time horizon $T$}
\KwInit{For each $i \in [K]$, initialize $\mathcal{I}^0_i \gets [0,1]$}
\KwOutput{Bids $\left(\bvec^1,\dots, \bvec^T\right) \in B^T$, with $\Delta^0 \gets -1$}

\For{$t = 1,\dots,T$}{
    \eIf{$t \leq T_{\text{expl}}$ \textbf{and} $\Delta^{t-1} \leq 0$}{
        $k \gets T - K\lfloor T/K \rfloor + 1$ \;
        $b^t_k \gets \max \{ b_k \in \mathcal{I}^{t-1}_k \}$, \quad
        $b^t_{k+1} \gets \min \{ b_{k+1} \in \mathcal{I}^{t-1}_{k+1} \}$ \;
        $\bvec^t \gets \arg\max_{\substack{\bvec \in B \\ b_k = b^t_k,\, b_{k+1} = b^t_{k+1}}} \tilde{u}^t(\bvec)$ \;
    }{
        $\tilde{k} \gets T - 2(K{-}1)\lfloor T / 2(K{-}1) \rfloor + 2$, \quad
        $k \gets \lfloor \tilde{k}/2 \rfloor$ \;
            $b^t_k \gets \left \{ \begin{matrix}
        \max \{ b_k \in \mathcal{I}^{t-1}_k \} & \text{if $k'$ is even} \\
        \min \{ b_k \in \mathcal{I}^{t-1}_k \} & \text{if $k'$ is odd}
    \end{matrix}  \right.$ \;
        $\bvec^t \gets \arg\max_{\substack{\bvec \in B \\ b_k = b^t_k}} \tilde{u}^t(\bvec)$ \;
    }
    Play $\bvec^t$ and receive feedback \;
    Update $\left(\mathcal{I}^{t}_{k}\right)_{k \in [K]}, \Delta^t$ using \autoref{algorithm : refinement interval} \;
}
\end{algorithm}

\SetKwInput{KwInput}{Input}
\SetKwInput{KwOutput}{Output}

\begin{algorithm}[H]
\caption{Interval Refinement}
\label{algorithm : refinement interval}

\KwInput{Time $t$, horizon $T$, intervals $\left( \mathcal{I}_k^{t-1} \right)_{k \in [K]}$, utility estimate $\tilde{u}^t$}
\KwOutput{Updated intervals $\left( \mathcal{I}_k^{t} \right)_{k \in [K]}$, gap $\Delta^t$}

$u^t_{\text{max}} \gets \max_{\bvec \in \prod_{k \in [K]} \mathcal{I}_k^{t-1}} \tilde{u}^t(\bvec)$ \tcp*{\small{Max observed utility}}

\vspace{1mm}

    $ \underset{k \in [K]}{\prod} \mathcal{I}_k^t \gets \mathrm{conv} \left( U^{-1} \left( \left[u^t_{\text{max}} - \sqrt{ \frac{\ln(2T^2)}{2 \lfloor t/K \rfloor }}, \, u^t_{\text{max}} \right] \right) \right)$  \tcp{\small{$U^{-1}$ denote the preimage of $\tilde{u}^t$}} 

\vspace{1mm}

$\Delta^t \gets \min_{k \in \{2, \dots, K\}} \left( \min \mathcal{I}^t_k - \max \mathcal{I}^t_{k+1} \right)$ \;

\end{algorithm}

Let us denote $B^\star \subset B$ the set of maximizers of the expected utility and the optimal intervals $\mathcal{I}_k^\star \subset \mathbf{R}$ the smallest closed intervals such that $B^\star \subset \prod_{k=1}^K \mathcal{I}_k^\star$. We introduce the interval gap $\Delta : = \min_{k\in \{2,\hdots,K\} } \left ( \min \mathcal{I}_k^\star - \max \mathcal{I}_{k+1}^\star \right )$, the following theorem provides instance dependant bounds, valid when $\Delta >0$ ( $\mathcal{O} (\cdot)$ hides terms constant in $T$).

\begin{restatable}{theorem}{SmartAlgoBanditUniform} \label{theorem : instance dependant bounds uniform}
    Algorithm \ref{algorithm : bandit K+1} guarantee a regret of $\tilde{\mathcal{O}}  ( K^{5/3} T^{2/3} ) $ in general and, when $\Delta >0$, an instance-dependent regret bounds of $\tilde{\mathcal{O}}  ( K\sqrt{T} ) $.
\end{restatable}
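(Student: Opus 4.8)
The plan is to run one high-probability argument that simultaneously controls, for all $t$ and all $\bvec$ in the current candidate box $\prod_k \mathcal{I}_k^{t-1}$, the deviation of the empirical utility $\tilde{u}^t$ from the true $U_u((F_k)_k,\bvec)$, and then to sum the resulting per-round bound \emph{differently} in the two regimes. Writing $w_t := \sqrt{\ln(2T^2)/(2\lfloor t/K \rfloor)}$ for the radius hard-coded in \autoref{algorithm : refinement interval}, the first step is to define the good event $\mathcal{E}$ on which every marginal estimate stays inside its band. By \autoref{lemma : bandit feedback uniform decomposed formula 1}, $\beta_k$ is revealed exactly on the rounds where it falls in the slot $(b^t_{K+2-k}, b^t_{K+1-k}]$, so $\tilde{F}_k(x)$ is a \emph{conditional} empirical CDF built from a random, partially observed subsample; I would control it with the partially observed order-statistics concentration inequality (a localized DKW variant), giving bands of width $\tilde{\mathcal{O}}(1/\sqrt{N_k^t})$, where $N_k^t$ counts the rounds whose slot contained the evaluation point. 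Transferring these marginal bands to a band on $\tilde{u}^t$ of radius exactly $w_t$ uses the device of \autoref{theorem : full-info algo regret garantee}: a union bound over the terms of $U_u$ (\autoref{lemma : decomposed a one dimensional CDF}) would lose a factor $K$, so I instead invoke the Natarajan-dimension / negative-correlation estimate. Crucially, this band is valid only while $N_k^t \ge \lfloor t/K \rfloor$ at the relevant points.

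Granting $\mathcal{E}$, two facts follow at once. \emph{(i) The optimum is never eliminated:} since $|\tilde{u}^t - U_u| \le w_t$, every $\bvec \in B^\star$ has $\tilde{u}^t(\bvec) \ge u^t_{\max} - w_t$, so $B^\star \subseteq \prod_k \mathcal{I}_k^t$ and $\mathcal{I}_k^\star \subseteq \mathcal{I}_k^t$ for all $k,t$. \emph{(ii) When the band is valid, the per-round regret is at most $3w_t$:} the played bid maximizes $\tilde{u}^t$ subject to fixing one or two coordinates to an endpoint of their current interval, and because each endpoint is, by the convex-hull construction, attained by some kept bid with empirical utility $\ge u^t_{\max} - w_t$, the forcing costs nothing beyond $w_t$ in empirical utility, and another $2w_t$ when passing back to $U_u$. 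Thus the forced-coordinate exploration never costs more than the current confidence radius.

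For the worst-case bound I partition at $T_{\mathrm{expl}} = K^{2/3}T^{2/3}$, exactly as in \autoref{theorem : explore then commit guarantee}. On the exploration branch the round-robin forcing of a wide slot guarantees $N_k^t \ge \lfloor t/K \rfloor$ for every $k$, so the band stays valid there, but I bound its regret only by the $\mathcal{O}(K)$ range of the expected utility, giving $\mathcal{O}(K)\cdot T_{\mathrm{expl}} = \tilde{\mathcal{O}}(K^{5/3}T^{2/3})$; the commit branch is then fed with the $\ge T_{\mathrm{expl}}/K$ samples per coordinate secured during exploration, and its regret is bounded by $T$ times the frozen estimation error, which the localized inequality keeps at the same order (plain DKW would only yield the weaker $K^2 T^{2/3}$, cf. \autoref{theorem : explore then commit guarantee}). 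For the instance-dependent bound, when $\Delta > 0$ I first show separation is detected after a round count $T_0$ that is constant in $T$: once $w_t$ drops below the utility growth forced by the separation, every $\mathcal{I}_k^t$ is squeezed to within $\Delta/4$ of $\mathcal{I}_k^\star$, whence $\min \mathcal{I}_k^t - \max \mathcal{I}_{k+1}^t \ge \Delta/2 > 0$ and \autoref{algorithm : bandit K+1} switches permanently to the refined branch. The pre-$T_0$ regret is $\le \mathcal{O}(K)\, T_0 = \tilde{\mathcal{O}}_T(1)$ and is absorbed. After $T_0$ every played bid is near-optimal, so all its slots keep width $\ge \Delta/2$, each marginal is sampled at its critical location from $N_k^t = \Omega(t)$ rounds rather than $\Omega(t/K)$, the band is valid and sharpens to $\tilde{\mathcal{O}}(\sqrt{K/t})$, and summing the $3w_t$ bound of (ii) over $t \le T$ gives $\tilde{\mathcal{O}}(\sqrt{KT}) \le \tilde{\mathcal{O}}(K\sqrt{T})$.

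I expect the main obstacle to be the detection step: arguing that $\Delta > 0$ actually forces the candidate intervals below $\Delta$ in $T$-independent time. Separation of the optimal coordinates does not by itself rule out flat directions of $U_u$, so I would use the explicit form from \autoref{lemma : decomposed a one dimensional CDF} to show that moving $b_k$ out of a neighbourhood of $\mathcal{I}_k^\star$ strictly decreases the expected utility at a rate bounded below in terms of $\Delta$ and the adversary densities—i.e.\ that separation of the first-rejected-bid prices \emph{pins} each coordinate and supplies the growth that drives the elimination. The second delicate point, present in both regimes, is keeping the $K$-dependence tight: the CDF-to-utility transfer must pass through the Natarajan / negative-correlation estimate rather than a union bound, and the conditional, partially observed samples must be handled by the localized order-statistics inequality, since the vanilla DKW already degrades the worst case to $K^2 T^{2/3}$.
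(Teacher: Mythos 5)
Your overall skeleton --- successive-elimination boxes with confidence radius $w_t$, a worst-case branch mirroring \autoref{theorem : explore then commit guarantee}, and an instance-dependent branch with a detection time constant in $T$ followed by a $\sqrt{T}$-type summation --- is exactly the paper's, and your explicit endpoint-forcing argument (the projection of the convex hull onto coordinate $k$ has its endpoints attained by surviving bids, so forcing a coordinate costs at most one confidence radius in empirical utility) is a cleaner version of a step the paper leaves implicit. However, two of your bandit-specific concentration steps fail as stated. First, the Natarajan-dimension uniform-convergence bound cannot be invoked here: it requires i.i.d.\ observations of the full vectors $\betavec^t$ and is used in the paper only under full information (\autoref{lemma : concentration of utility estimates}). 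Under bandit feedback each round reveals at most local information about one coordinate in one window (\autoref{lemma : bandit feedback uniform decomposed formula 1}), so there is no common sample on which to run multiclass uniform convergence over $\mathcal{H}_B$. The paper instead rewrites $U_u$ by summation by parts so that only increments of the form $F_{K-i+1}(b_i)-F_{K-i+1}(b_{i+1})$ appear, and bounds these per coordinate with the variance-weighted concentration for interval indicator classes (VC dimension $2$, Boucheron et al.\ style), paying a radius of order $K^{3/2}\epsilon$ rather than your claimed band of radius ``exactly $w_t$''. Since the algorithm's hard-coded threshold $w_t$ is smaller than any radius this machinery certifies, your fact \emph{(i)} (the optimum is never eliminated) is not established by your argument; it needs the elimination threshold to dominate (twice) the provable band.

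Second, your post-detection sampling claim $N_k^t=\Omega(t)$ is false for \autoref{algorithm : bandit K+1} as written: the else-branch cycles through $2(K-1)$ coordinate/endpoint forcings, and full coverage of $\mathcal{I}^t_k\cup\mathcal{I}^t_{k+1}$ for a given $k$ is guaranteed only once every $2K$ rounds, i.e.\ $N_k^t=\Omega(t/K)$ --- in the other rounds a point $x\in\mathcal{I}_k^t$ lying above the currently played $b_k$ falls outside the slot $(b_{k+1},b_k]$ and is simply not observed. Consequently your sharpened band $\tilde{\mathcal{O}}(\sqrt{K/t})$ and the final $\tilde{\mathcal{O}}(\sqrt{KT})$ are unsupported; the paper's own accounting at this stage gives a per-round error of order $K^{3/2}\sqrt{K\log t/t}$. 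Finally, regarding the obstacle you flag yourself: you do not need (and likely cannot obtain in general) a quantitative lower bound on the utility drop in terms of $\Delta$ and the adversary densities. The paper sidesteps flat directions by \emph{defining} instance-dependent gaps $\delta_k^{\pm}$ (the suboptimality of bids whose $k$-th coordinate is displaced by $\Delta/2$ beyond $\mathcal{I}_k^\star$), showing $\Delta^{t}>0$ as soon as the confidence radius falls below $\min_k(\delta_k^{+},\delta_k^{-})$, and bounding $t^\star$ by a quantity depending only on $K$ and these gaps --- constants in $T$, which the theorem's $\mathcal{O}(\cdot)$ convention explicitly absorbs. Your plan is the right shape, but these three repairs (a feedback-compatible concentration transfer, the correct $t/K$ sampling rate, and gap constants in place of a density-based rate) are needed to make it go through.
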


\vspace{-1mm}

\begin{proof}
    Worst-case regret bounds are essentially proved as in \autoref{theorem : explore then commit guarantee}. We only need to check that our shrinking rule is well-behaved. Noticing that with high probability, the intervals $\mathcal{I}_i^t$ always contain the best-response set  $B^\star$ suffices. \newline
    Instance-dependent bounds proofs rely on the fact that as soon as intervals become disjoint (ie $\Delta^t >0$), the observation of \autoref{lemma : bandit feedback uniform decomposed formula 1} covers the whole intervals, which is why we recover full-information type bounds.
\end{proof}

\subsection{Regret separation}

We illustrate the implications of \autoref{theorem : instance dependant bounds uniform} by describing instance families for which the achievable regret rates in uniform and discriminatory price auctions diverge significantly. This highlights how the choice of auction impacts learning complexity. 

A straightforward example is when the bidder has \emph{unit demand}, meaning it only values a single unit ($v_1 > 0$ and $v_i = 0$ for $i > 1$). In this case, the uniform auction is truthful (\autoref{lemma : truthfulness uniform}), so the bidder can bid truthfully and incur zero regret. In contrast, the discriminatory auction still requires strategic bidding, and the lower bound from \autoref{lemma : lower bound first price} applies, resulting in a regret of $\Omega \left( T^{2/3} \right)$.

\paragraph{Two unit demand} 

A similar but weaker regret rate discrepancy also appears in the two-unit demand setting (i.e. $v_1 >0, v_2 \geq 0$ and $v_i=0$ for $i>2$). In that setting, both auctions require strategic bidding; in discriminatory auction, \autoref{lemma : lower bound first price} still applies resulting in regret $\Omega ( T^{2/3})$ while \autoref{theorem : instance dependant bounds uniform} yields a regret of $\mathcal{O} ( \sqrt{T})$ on the regret in the uniform auction.

\paragraph{$\Delta$ separated distributions} 
We finally consider a more general condition on the distribution of adversary bids. Let $\mathcal{D}$ be a distribution over $B$ such that when $\betavec \sim \mathcal{D}$ each coordinate $\beta_k$ almost surely lies in disjoint intervals $\mathcal{I}_k$, and adjacent intervals are at least separated by a gap $\Delta$. We refer to these as $\Delta$ separated. To rule out degenerate cases, we focus on $\Delta < \frac{1}{2K}$:
\begin{restatable}{lemma}{lemmadeltaseparated} \label{par : delta separated}
    Learning in a discriminatory or uniform auction when the adversary's distribution is $\Delta$ separated, with $ \Delta < \frac{1}{2K}$ can be achieved with respectively regret of $\Omega ( T^{2/3} )$ and $\mathcal{O} ( \sqrt{T} )$.
\end{restatable}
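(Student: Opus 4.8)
The plan is to prove the two bounds separately, since they concern different auctions and opposite directions.

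For the uniform auction (the $\mathcal{O}(\sqrt{T})$ upper bound) I would invoke \autoref{theorem : instance dependant bounds uniform} and reduce the statement to checking that a $\Delta$-separated distribution forces the instance-dependent regime of that theorem to activate. The structural backbone is the following consequence of separation: because the bids are sorted and the supports $\mathcal{I}_1 > \cdots > \mathcal{I}_K$ are pairwise separated by a gap $\geq \Delta$, at each round at most one coordinate $b_i$ can fall inside its active comparison interval $\mathcal{I}_{K+1-i}$, since $b_i \geq b_{i'}$ for $i < i'$ while $\mathcal{I}_{K+1-i} < \mathcal{I}_{K+1-i'}$; for every other coordinate the comparison $b_i \geq \beta_{K+1-i}$ appearing in \eqref{eq: allocation function} is deterministic. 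I would make this precise and use it to argue that, once restricted to the separated supports, the candidate intervals $\mathcal{I}_i^t$ maintained by \autoref{algorithm : bandit K+1} separate after a short burn-in, so that $\Delta^t > 0$ is triggered. Once the candidate intervals are disjoint, the feedback of \autoref{lemma : bandit feedback uniform decomposed formula 1} reveals the single stochastic coordinate $\beta_{K-i+1}$ on essentially every round; the estimation problem is then full-information-like for the one coordinate that actually needs learning, and the analysis of \autoref{theorem : instance dependant bounds uniform} yields $\tilde{\mathcal{O}}(K\sqrt{T})$, i.e. $\mathcal{O}(\sqrt{T})$ with $K$ treated as a constant.

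For the discriminatory auction (the $\Omega(T^{2/3})$ lower bound) I would embed the hard single-unit instance of \autoref{lemma : lower bound first price} into a $\Delta$-separated distribution. Concretely, I would place $K-1$ of the opposing coordinates on deterministic (hence trivially separated) values inside their prescribed intervals, chosen so that the bidder wins a fixed prefix of units and loses a fixed suffix regardless of its remaining freedom, and let the one remaining coordinate $\beta_{K-m}$ carry the hard first-price distribution inside its interval $\mathcal{I}_{K-m}$. Because in the discriminatory format the per-unit payment equals the bidder's own bid \eqref{eq : payment discriminatory} and the bandit feedback reduces to the win/loss indicator \eqref{eq : allocation}, the contested unit is an independent copy of the first-price problem, so the $\Omega(T^{2/3})$ bound of \autoref{lemma : lower bound first price} transfers verbatim. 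The role of the hypothesis $\Delta < \frac{1}{2K}$ is to guarantee there is enough room in $[0,1]$ to lay down $K$ intervals separated by gaps of size $\Delta$ while still leaving the embedded instance a non-trivial value range.

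I expect the main obstacle to be the uniform upper bound, specifically making rigorous the claim that separation of the opposing supports implies separation of the algorithm's candidate intervals so that \autoref{theorem : instance dependant bounds uniform} fires. The delicate point is the set of maximizers $B^\star$: coordinates on which the bidder wins or loses deterministically are payoff-irrelevant and therefore free over wide ranges, which a priori can make the projection intervals $\mathcal{I}_i^\star$ overlap. I would handle this by arguing that only the single active coordinate carries regret, pinning the deterministic coordinates to canonical values lying inside the separating gaps, and checking that the refinement rule of \autoref{algorithm : refinement interval} confines each candidate interval to the closure of one support interval, so that disjointness, and hence the $\sqrt{T}$ regime, is reached. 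The discriminatory side is comparatively routine once the embedding is set up, the only care needed being that the multi-unit structure leaks no additional information about the contested coordinate beyond its own win/loss signal.
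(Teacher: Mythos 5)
Your discriminatory half is essentially the paper's own construction: the paper likewise freezes $K-1$ opposing coordinates at deterministic values $\left(1,\,1-\tfrac{1}{2K},\dots,1-\tfrac{K-2}{2K}\right)$ and lets the remaining coordinate carry the hard first-price distribution of \autoref{lemma : lower bound first price}, rescaled to $[0,\tfrac12]$, then uses a dominance argument (for $v_1 \le \tfrac12$, bids above $\tfrac12$ are dominated) so that the multi-unit problem collapses to the single contested unit and the $\Omega(T^{2/3})$ bound transfers. One caution on your variant: the paper has the bidder \emph{lose} all deterministic units rather than win a prefix; if the bidder wins units deterministically in a discriminatory auction it still controls the prices it pays on them, so you would need to argue that this side optimization decouples from the contested unit --- losing them avoids the issue entirely.

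The uniform half has a genuine gap, and it is precisely the one you flagged but did not repair: the hypothesis $\Delta > 0$ of \autoref{theorem : instance dependant bounds uniform} is generally \emph{false} for $\Delta$-separated adversaries, so that theorem cannot be invoked as a black box. Concretely, if $i^\star$ is the optimal number of units to win, then any bid with $b_{i^\star} \ge \max \mathcal{I}_{K-i^\star+1}$ and $b_{i^\star+1} < \min \mathcal{I}_{K-i^\star+1}$ is an exact maximizer: the coordinates $b_j$ for $j < i^\star$ are payoff-irrelevant up to $1$ and the coordinates $b_j$ for $j > i^\star+1$ are free down to $0$, so consecutive projection intervals $\mathcal{I}_k^\star$ of $B^\star$ overlap and the paper's interval gap satisfies $\Delta \le 0$ (e.g.\ with $K=3$ and $i^\star = 1$, both $\mathcal{I}_2^\star$ and $\mathcal{I}_3^\star$ contain a neighborhood of $0$). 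The same fatness of the maximizer set defeats your burn-in claim: \autoref{algorithm : refinement interval} sets $\prod_k \mathcal{I}_k^t$ to the convex hull of the preimage of near-optimal utility levels, and since the near-maximizers form this fat product-like set, the candidate intervals never become disjoint, $\Delta^t$ stays nonpositive, and \autoref{algorithm : bandit K+1} never leaves its exploration branch; nothing in the algorithm permits ``pinning the deterministic coordinates to canonical values.'' The paper's actual proof sidesteps the theorem with a structural reduction: (i) against a $\Delta$-separated distribution there is always an optimal bid whose components are only $0$s and $1$s (straddling $\mathcal{I}_{K-i^\star+1}$), (ii) the empirical CDFs $\tilde{F}_k$ of a $\Delta$-separated distribution are themselves $\Delta$-separated, so with a tie-breaking rule preferring $0/1$ bids the algorithm only ever plays the roughly $K$ threshold bids, i.e.\ it faces a $K$-armed bandit; the concentration bounds from the proof of \autoref{theorem : instance dependant bounds uniform} then eliminate suboptimal arms once $6K^{3/2}\epsilon^t$ falls below the utility gap $\Delta_i^\star$, yielding the gap-dependent $\tilde{\mathcal{O}}(\sqrt{T})$ regret. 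To salvage your route you would need some such reduction in place of the $\Delta^t > 0$ trigger.
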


\vspace{-1mm}

\begin{proof} [Proof sketch]
    The lower bound from \autoref{lemma : lower bound first price} for the discriminatory auction can be easily extended by leveraging the fact that $\Delta > \frac{1-v_1}{K}$. Proving the upper bound is straightforward by noticing the following : (i) there is always an optimal bids against $\Delta$-separated distribution whose component are only 0s and 1s, (ii) the $\bar{F}$ of $\Delta$-separated distribution are $\Delta$-separated.
\end{proof}

\subsection{I.I.D.\ adversaries}

This section characterizes achievable regret when the bidder faces $N \in \mathbb{N}$ symmetric, unit-demand participants. Each participant submits a single bid, so the opposing bid vector $\betavec$ consists of the $K$ highest bids among them. By symmetry, all bids are i.i.d.\ from a distribution $\mathcal{P}$, and we assume independence in their bidding. We denote by $\mathcal{P}^N_K$ the induced distribution on $\betavec$.

\paragraph{Opposing bids} The first $K$ order statistics of $N$ i.i.d.\ samples from $\mathcal{P}$ constitute the opposing bids. This structure provides additional information on the relationships between marginal CDFs \citep{casella2024statistical}. Letting $F: [0,1] \to [0,1]$ denote the CDF of $\mathcal{P}$, we have for all $k \in [K]$: \begin{equation} \label{eq : cdf of order statistic distribution}
    F_k(x) := \underset{ \betavec \sim \mathcal{P}^N_K}{\Proba} \left ( \beta_k \leq x \right ) = F^{N-K}(x) \sum_{j=1}^{k} \begin{pmatrix}  N \\ N- j \end{pmatrix} F^{K-j+1}(x) \left ( 1- F(x) \right )^{j-1}
\end{equation} 
Note that this is a polynomial in $F$; in the following, we denote it $P_k$.

\vspace{-1mm}

\paragraph{Concentration Inequalities for order statistics} 
Exploiting \eqref{eq : cdf of order statistic distribution}, we build cdf bands for all order statistics based on observations of a single order statistic. For simplicity, it is stated in the full information feedback. For $k,k' \in [K]^2$ define the empirical estimate of the CDF of the ${k'}^\text{th}$ order statistics based on the $k^\text{th}$:  $\hat{F}_{k\rightarrow k'}^t : = P_{k'}  ( P_k^{-1} ( \hat{F}_k^t  )  )$

\begin{restatable}{lemma}{DKWforOrderStat} \label{lemma : DKW order}
    Let $t,k \in \mathbb{N}$, and let $X^i_j$ for $i \in [t], j \in [k]$ be i.i.d.\ samples from a distribution $\mathcal{P}$ with cdf $F$.
    With probability $1- \alpha$ : 
    \begin{align}
        \hat{F}_{k\rightarrow k'}^t (x) - \alpha_{k,k'} \epsilon \leq F_{k'}(x) \leq \hat{F}_{k\rightarrow k'}^t (x) +  \alpha_{k,k'} \epsilon 
    \end{align}
    Where $\epsilon =\sqrt{\frac{\ln \left ( \frac{2}{\alpha} \right )}{2t}}$ and $ \alpha_{k,k'}$ is a constant which only depends on $k,k'$ and $N$. 
\end{restatable}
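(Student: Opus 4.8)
The plan is to start from the plain Dvoretzky--Kiefer--Wolfowitz inequality \citep{massart1990tight} applied to the empirical CDF $\hat{F}_k^t$ of the observed $k^{\text{th}}$ order statistic, giving $\sup_x |\hat{F}_k^t(x) - F_k(x)| \le \epsilon$ with probability $1-\alpha$ for $\epsilon = \sqrt{\ln(2/\alpha)/(2t)}$, and then to transport this band through the deterministic reparametrisation that links the marginals. Recall from \eqref{eq : cdf of order statistic distribution} that each marginal is a fixed polynomial reparametrisation of the base CDF, $F_k = P_k \circ F$, and that each $P_k$ is a strictly increasing bijection of $[0,1]$ (it is a binomial-tail CDF), so $P_k^{-1}$ is well defined. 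Writing $g_{k,k'} := P_{k'} \circ P_k^{-1}$, we have $F_{k'} = g_{k,k'} \circ F_k$ and, by definition, $\hat{F}_{k\to k'}^t = g_{k,k'} \circ \hat{F}_k^t$. Hence, pointwise, $|\hat{F}_{k\to k'}^t(x) - F_{k'}(x)| = |g_{k,k'}(\hat{F}_k^t(x)) - g_{k,k'}(F_k(x))|$, and the entire problem reduces to controlling the modulus of continuity of the single univariate map $g_{k,k'}$ on $[0,1]$.

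To control $g_{k,k'}$ I would use the regularized incomplete-beta representation of the binomial tail, $P_k(u) = I_u(N-k+1,k)$, whose derivative is the beta density $P_k'(u) = u^{N-k}(1-u)^{k-1}/B(N-k+1,k)$. By the chain rule and the inverse-function theorem, at $y = P_k(u)$,
\[
g_{k,k'}'(y) = \frac{P_{k'}'(u)}{P_k'(u)} = \frac{B(N-k+1,k)}{B(N-k'+1,k')}\left(\frac{u}{1-u}\right)^{k-k'}.
\]
The natural candidate for the constant is then $\alpha_{k,k'} := \sup_{u\in(0,1)} P_{k'}'(u)/P_k'(u)$, which indeed depends only on $k,k',N$, and a mean-value-theorem step applied to $g_{k,k'}$ between $\hat{F}_k^t(x)$ and $F_k(x)$ would yield $|\hat{F}_{k\to k'}^t(x) - F_{k'}(x)| \le \alpha_{k,k'}\,\epsilon$ uniformly in $x$. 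Note that $\alpha_{k,k} = 1$, recovering plain DKW.

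The step I expect to be the main obstacle is exactly the finiteness of this supremum. The ratio above is monotone in $u$ and diverges at one endpoint of $[0,1]$ whenever $k \neq k'$ (at $u \to 1$ if $k > k'$, at $u \to 0$ if $k < k'$), so $g_{k,k'}$ is globally only Hölder, not Lipschitz: a naive argument gives control of order $\epsilon^{(N-k'+1)/(N-k+1)}$ when $k<k'$ (and symmetrically for $k>k'$), with exponent strictly below one, rather than the claimed $\mathcal{O}(\epsilon)$. Overcoming this is the crux, and I see two routes. The first is to transport the band only over the effective range of $F$ on the relevant bid interval, where $P_k'$ is bounded below by a positive constant determined by $N$ and $K$, so that the supremum defining $\alpha_{k,k'}$ is taken over a compact subinterval of $(0,1)$ and is finite. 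The second, which aligns with the tooling already invoked for \autoref{theorem : explore then commit guarantee}, is to pair the argument with a variance-adaptive DKW band in the spirit of \cite{blanchard2024tight,bartl2023variance}: near the degenerate endpoint $F_k$ is small, so the band shrinks like $\sqrt{F_k/t}$ at precisely the rate needed to cancel the growth of $g_{k,k'}'$, restoring uniform $\mathcal{O}(\epsilon)$ control. I would therefore carry out the clean mean-value bound on the non-degenerate range and handle the boundary with one of these two devices.
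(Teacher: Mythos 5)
Your proposal follows the paper's own route step for step: DKW applied to the empirical CDF of the observed $k^{\text{th}}$ order statistic, the identity $F_k = P_k \circ F$ with $P_k$ strictly increasing on $(0,1)$, and transport of the band through $g_{k,k'} := P_{k'} \circ P_k^{-1}$. Where you diverge is at the last step, and there you are more careful than the paper. The paper's proof concludes by asserting that $\alpha_{k,k'}$ ``only depends on the upper bound of the derivatives of both $P_{k'}$ and $P_k^{-1}$ on $(0,1)$'' --- but that upper bound is infinite: since $P_k'(u) \propto u^{N-k}(1-u)^{k-1}$ vanishes at an endpoint whenever $N>k$ or $k>1$, the derivative of $P_k^{-1}$ is unbounded on $(0,1)$, and your explicit computation $g_{k,k'}'(y) \propto \left( u/(1-u) \right)^{k-k'}$ at $u = P_k^{-1}(y)$ shows the composed map is only H\"older, not Lipschitz, whenever $k \neq k'$. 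The ``main obstacle'' you flag is therefore a genuine gap in the paper's own argument, not an artifact of your approach.

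Your diagnosis can in fact be sharpened into a counterexample to the statement read uniformly over $x \in [0,1]$: take $N=K=3$, $\mathcal{P}$ uniform, $k=1$, $k'=3$, and $x_0$ with $F(x_0) = t^{-1/3}$, so that $F_1(x_0) = t^{-1}$ and $F_3(x_0) \approx 3t^{-1/3}$. With probability $(1-1/t)^t \approx e^{-1}$ no observed copy of the maximum falls below $x_0$, whence $\hat{F}_{1\to 3}^t(x_0) = g_{1,3}(0) = 0$ and the error is of order $t^{-1/3} \gg \epsilon$; so no constant depending only on $k,k',N$ can make the band hold near the boundary. This also calibrates your two proposed repairs. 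Fix (a) --- restricting to a range on which $F$ is bounded away from $\{0,1\}$, with $\alpha_{k,k'}$ absorbing that bound --- is the sound one, and matches how the lemma is actually consumed in \autoref{theorem : regret i.i.d. uniform}, where the estimates are only evaluated on the bid range; but it requires an assumption absent from the lemma as stated. Fix (b), variance-adaptive bands in the spirit of \cite{blanchard2024tight,bartl2023variance}, restores $\mathcal{O}(\epsilon)$ control only when the H\"older exponent $(N-k'+1)/(N-k+1)$ (resp.\ $k'/k$) is at least $1/2$; in the example above the exponent is $1/3$ and the failure stems from zero observed mass, which no refinement of the band repairs. Modulo making the range restriction explicit, your mean-value argument on the non-degenerate range is correct and amounts to a complete, rigorous version of the paper's sketch, with the constant $\alpha_{k,k'} = \sup P_{k'}'/P_k'$ taken over the restricted range.
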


\vspace{-1mm}

\begin{proof}
    The lemma is proved using the DKW inequality (\cite{massart1990tight}), the order statistics cdf formula \eqref{eq : cdf of order statistic distribution} and by noticing that all $P_k$ are strictly increasing on $(0,1)$. 
\end{proof}

\paragraph{Algorithm} These concentration inequalities effectively enable recovery of full-information feedback in the uniform price auction by selecting the best estimate. For $k \in [K]$ and $x \in [0,1]$, let $t_k(x)$ be the number of times $\indicator[\beta_k \leq x]$ was observed (\autoref{lemma : bandit feedback uniform decomposed formula 1}). Define $k^\star(x) = \argmax_{k \in [K]} t_k(x)$ as the index with the most observations, and estimate marginal CDFs as $\bar{F}_k^t(x) := \tilde{F}_{k^\star(x)\rightarrow k}^t(x)$. The resulting utility estimate is $\bar{u}^t(\cdot) := U_u((\bar{F}_k^t)_{k \in [K]}, \cdot)$

\vspace{-1mm}

\begin{algorithm}
    \caption{UBIID algorithm}
    \label{algorithm : Uniform Bid I.I.D}
    \textbf{Input:} time horizon $T$, number of adversary $N$.
    
    \textbf{Output:} bids for each time step $\left ( \bvec^1,\bvec^2,\hdots, \bvec^{T-1},\bvec^T \right ) \in B^T $. 
    \For{$t = 1,2,\hdots, T$}{
    Play $\bvec^t := \argmax_{\bvec \in B} \bar{u}^{t} (\bvec)$
    }
\end{algorithm}

\begin{restatable}{theorem}{BanditIIdUniformAlgo} \label{theorem : regret i.i.d. uniform}
    When facing i.i.d.\ adversaries in the uniform auction with bandit feedback, \autoref{algorithm : Uniform Bid I.I.D} guarantees the regret is upper bounded by $\tilde{\mathcal{O}} \left ( \sqrt{T} \right ) $. 
\end{restatable}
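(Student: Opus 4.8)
The plan is to run a greedy, certainty-equivalence argument and to exploit the order-statistics coupling \eqref{eq : cdf of order statistic distribution} so that the estimator $\bar{u}^t$ enjoys the $\tilde{\mathcal{O}}(1/\sqrt{t})$ accuracy of full information rather than the $\tilde{\mathcal{O}}(t^{-1/3})$ accuracy of generic bandit exploration. Writing $\bvec^\star$ for a maximizer of $u := U_u((F_k)_{k},\cdot)$ and using that $\bvec^t$ maximizes $\bar{u}^t$, the instantaneous (pseudo-)regret telescopes:
\begin{equation}
u(\bvec^\star) - u(\bvec^t) \leq \big(u(\bvec^\star) - \bar{u}^t(\bvec^\star)\big) + \big(\bar{u}^t(\bvec^t) - u(\bvec^t)\big) \leq 2 \sup_{\bvec \in B} \big| \bar{u}^t(\bvec) - u(\bvec) \big|.
\end{equation}
Since $U_u$ is piecewise linear and Lipschitz in the marginal CDFs it evaluates, the right-hand side is controlled by the estimation error $\sum_{k} |\bar{F}_k^t - F_k|$ at the bid values (and over the price intervals) that actually enter the utility, reducing everything to the accuracy of the recovered marginals.

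First I would turn \autoref{lemma : DKW order} into a uniform, data-dependent confidence band. Because each $F_k = P_k(F)$ is a strictly increasing polynomial reparametrisation of the single base CDF $F$, a single observation of any order statistic at $x$ is an observation of $F(x)$, and \autoref{lemma : DKW order} transfers an estimate built from coordinate $k^\star(x)$ into an estimate of every $F_k(x)$ at the price of the constant factor $\alpha_{k^\star(x),k}$. Selecting the most-observed coordinate $k^\star(x)$ then yields, uniformly in $k$ and $x$,
\begin{equation}
\big| \bar{F}_k^t(x) - F_k(x) \big| \leq \max_{k'}\alpha_{k',k}\cdot \tilde{\mathcal{O}}\Big(1/\sqrt{t_{k^\star(x)}(x)}\Big).
\end{equation}
The difficulty here is that $t_{k^\star(x)}(x)$ is random and set by the algorithm's own past bids, so the fixed-sample form of \autoref{lemma : DKW order} does not apply verbatim; I would instead use a localized, variance-adaptive DKW inequality in the spirit of \citep{bartl2023variance,blanchard2024tight}, whose width adapts pointwise to the realized count, together with a union bound over $t \in [T]$.

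The heart of the proof, and the step I expect to be the main obstacle, is to show that the greedy rule is self-exploring: at every bid value $x$ relevant to $\bvec^\star$ or to the played $\bvec^t$, the best available count obeys $t_{k^\star(x)}(x) = \tilde{\Omega}(t)$, with no dedicated exploration phase and hence no exploration regret. The mechanism is that the bidder observes the first rejected bid every round (\autoref{lemma : bandit feedback uniform decomposed formula 1}), which is always the value of some order statistic and therefore always a free sample of $F$; as the greedy bids concentrate around $\bvec^\star$, these observation windows repeatedly cover the critical price region. Making this quantitative — in particular ruling out that the greedy bids drift into a configuration whose windows starve the relevant points — is exactly where the coupling \eqref{eq : cdf of order statistic distribution} is indispensable, since a sample gathered through any coordinate $k'$ still refines $F_k$ at the queried point, so the single free price observation per round suffices to drive the relevant counts to $\Omega(t)$. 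This is also what separates the present rate from the worst case: no gap $b_i > b_{i+1}$ needs to be opened in order to learn, so the $T^{2/3}$ exploration cost underlying \autoref{thm : uniform price auction lower bound regret} is avoided.

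Given the $\tilde{\Omega}(t)$ count bound, the band above gives per-round regret $\tilde{\mathcal{O}}(1/\sqrt{t})$, and summing yields $\sum_{t=1}^T \tilde{\mathcal{O}}(1/\sqrt{t}) = \tilde{\mathcal{O}}(\sqrt{T})$; the dependence on $K$ and $N$ is absorbed into the constant through $\max_{k,k'}\alpha_{k,k'}$ and the $K$-term Lipschitz sum. A final high-probability bookkeeping step (choosing the band's failure probability as $1/T^2$ and bounding the low-probability event by the trivial $\mathcal{O}(T)$ regret) converts the event-conditional bound into the expected pseudo-regret of \eqref{def : pseudo regret}.
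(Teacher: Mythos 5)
Your overall architecture coincides with the paper's: a greedy certainty-equivalence decomposition bounding per-round regret by $2\sup_{\bvec}|\bar{u}^t(\bvec)-u(\bvec)|$, transfer of estimates across order statistics via \autoref{lemma : DKW order}, a per-point observation-count lower bound of order $t/K$, and DKW-type bands summed to $\tilde{\mathcal{O}}(\sqrt{T})$. The genuine gap sits exactly at what you identify as the heart of the proof: you propose to establish $t_{k^\star(x)}(x) = \tilde{\Omega}(t)$ by arguing that the greedy bids concentrate around $\bvec^\star$ so that the observation windows ``repeatedly cover the critical price region,'' and you flag the need to rule out drift that starves relevant points. As stated, this step is circular --- concentration of the greedy bids around $\bvec^\star$ is a consequence of the regret control you are trying to prove --- and, more importantly, no such argument is needed. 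The actual mechanism is deterministic: by \autoref{lemma : bandit feedback uniform decomposed formula 1} and \autoref{remark : feedback bandit}, at every round the intervals $(b_{i+1}^t, b_i^t]$ for $i \in [K]$ partition $(0, b_1^t]$, so every point $x$ in that range yields exactly one indicator observation $\indicator[\beta_{K-i+1}^t \leq x]$ per round, regardless of where the bids lie. Hence $\sum_k t_k(x) = t$ and pigeonhole gives $t_{k^\star(x)}(x) \geq t/K$ for every $x$ and every bid sequence. The only way a point can be starved is $x > b_1^t$, and the paper removes this case not by a concentration argument but by tie-breaking: by \autoref{lemma : truthfulness uniform} there is always a maximizer of $\bar{u}^t$ with $b_1 = v_1$, the algorithm is assumed to select it, and only $x \in [0, v_1]$ matters for the utility. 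Your proposal never fixes $b_1 = v_1$, and that restriction is precisely the missing ingredient that makes your starvation worry evaporate; without it, your count bound has no proof.

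Two smaller inaccuracies. First, your claim that the bidder ``observes the first rejected bid every round'' as a free sample of $F$ is wrong when the price is set by the bidder's own bid $b^t_{x(\bvec^t,\betavec^t)+1}$: no opposing bid value is revealed in that event, and the usable feedback is the indicator field of \autoref{remark : feedback bandit}, which is why the counting must be done at the level of indicator observations rather than price observations. Second, your concern about the randomness and adaptivity of $t_{k^\star(x)}(x)$ is reasonable --- the paper applies the fixed-sample form of \autoref{lemma : DKW order} with the deterministic floor $t/K$, even though which coordinate observes $x$ at each round is adapted to the past --- and your suggestion of an anytime band with a union bound over $t \in [T]$ is a legitimate, arguably more careful, repair costing only logarithmic factors. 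But these refinements do not rescue the central step: without the deterministic partition-and-pigeonhole argument and the $b_1 = v_1$ tie-breaking, the proposal does not constitute a proof.
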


\vspace{-1mm}

\begin{proof} [Proof sketch]
    The previous result is essentially proved as the bound for the full-information feedback, by noticing that for all $x \in [0,1], t_{k^\star (x)} \geq t/K$. This ensures that the estimates $\bar{F}$ concentrate as well as in the full-information feedback up to additional factors which only depedn on  $K$, techniques similar to the ones for \autoref{theorem : full-info algo regret garantee} then yield the result. 
\end{proof}

\vspace{-1mm}

\paragraph{Lower bounds} Because the opposing bids are order statistics, previous lower bounds may not apply. Intuitively, this is because constraining the opposing bids to be ordered i.i.d.\ can make learning to bid easier (as \autoref{theorem : regret i.i.d. uniform} shows). Let us consider the discriminatory auction first. Notice that for 1-unit auction, the general setting and the i.i.d.\ setting coincide. Therfore the lower bound form \autoref{lemma : lower bound first price} applies and regret is at least $\Omega \left (  T^{2/3} \right )$. The uniform auction requires a new lower bound, that we provide below:

\begin{restatable}{lemma}{Lemmalowarboundiid} \label{lemma : lower bound iid}
    When facing i.i.d.\ adversaries in the uniform auction with bandit feedback, any learning algorithm must incur at least a regret of $\Omega ( \sqrt{T} )$.
\end{restatable}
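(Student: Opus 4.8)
The goal is a lower bound of $\Omega(\sqrt{T})$ for learning to bid in the uniform auction against i.i.d.\ symmetric unit-demand adversaries under bandit feedback. Since this regret rate coincides with the full-information rate of \autoref{theorem : full-info algo regret garantee}, the bound cannot come from an exploration-versus-exploitation tension (as in the $T^{2/3}$ bounds); instead it must be a statistical estimation lower bound. The plan is therefore to reduce the bidding problem to a parametric estimation problem and apply a standard information-theoretic argument (e.g.\ Le Cam's two-point method, or a Bretagnolle--Huber / KL-based argument in the style of \cite{lattimore2020bandit}).

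First I would construct a family of hard instances. Since the adversaries are i.i.d.\ unit-demand, the whole instance is determined by the single-bidder CDF $F$ of $\mathcal{P}$ (equation \eqref{eq : cdf of order statistic distribution}), so I would pick a one-parameter family $\{F_\theta\}$ indexed by $\theta$ in a small interval around some $\theta_0$, chosen so that (i) the expected-utility function $U_u((F_k)_k,\cdot)$ is smooth and strictly concave in the relevant bid coordinate, and (ii) the location of the optimal bid $\bvec^\star(\theta)$ moves with $\theta$ at a nonzero rate. A clean choice is to make only the relevant marginal $F_k$ near the optimal bid depend on $\theta$, keeping the auction essentially one-dimensional, so the problem collapses to estimating $\theta$ from the observed order-statistic feedback. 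The key quantitative input is that, by local smoothness of $U_u$, mis-estimating $\theta$ by $\delta$ and consequently placing the bid $O(\delta)$ away from $\bvec^\star(\theta)$ costs per-round regret of order $\delta^2$ (second-order loss at an interior maximum), so the cumulative regret is $\sum_t \Theta(\delta_t^2)$ where $\delta_t$ is the estimation error at round $t$.

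Second, I would lower-bound the estimation error. The bandit feedback in this setting, per \autoref{lemma : bandit feedback uniform decomposed formula 1}, reveals at most one order-statistic sample per round (in the right interval), so after $t$ rounds the learner has at most $O(t)$ informative samples about $F$. A Fisher-information / Le Cam argument then shows that no estimator can distinguish $\theta_0$ from $\theta_0 \pm c/\sqrt{t}$ with the $O(t)$ samples available, forcing $\mathbb{E}[\delta_t^2] = \Omega(1/t)$ on one of the two instances. Summing the per-round cost gives $\sum_{t=1}^T \Omega(1/t)$, which is only logarithmic---so the naive two-point bound is too weak. The correct argument instead fixes a single gap scale: take two instances separated by $\theta_1 - \theta_0 = \Theta(T^{-1/4})$, so that on each instance the bid must be localized to precision $T^{-1/4}$ to avoid $\Theta(\sqrt{T})$ cumulative quadratic regret, yet the KL-divergence accumulated over $T$ rounds of feedback is $T \cdot \Theta((\theta_1-\theta_0)^2) = \Theta(\sqrt{T}) \cdot \text{const}$, which I would tune (via the separation constant) to remain $O(1)$ so the two instances stay statistically indistinguishable. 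By Bretagnolle--Huber, any algorithm then suffers the $\Theta(\sqrt{T})$ quadratic regret on at least one of the two instances.

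The main obstacle will be the feedback richness already flagged in the paper: in the uniform auction the bidder observes the actual value $\beta_{K-i+1}^t$ when it lands in the interval $(b_{i+1}^t, b_i^t]$, not merely a binary win/lose signal, and moreover the learner can adaptively choose these intervals to probe $F$ wherever it likes. I must therefore bound the per-round information gain about $\theta$ uniformly over all bid choices, showing that each round contributes $O((\theta_1-\theta_0)^2)$ to the KL-divergence of the feedback-sequence laws regardless of the adaptive interval placement---this is exactly the ``finely manage the feedback'' difficulty mentioned in the technical-novelty section. I would handle it by a chain-rule decomposition of the KL-divergence over rounds (conditioning on the history), bounding each conditional KL term by the KL between the two order-statistic laws restricted to the observed interval, and using that the $\theta$-dependence is Lipschitz so that a single informative sample carries $O((\theta_1-\theta_0)^2)$ information. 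Mirroring the $b_i^\star = b_{i+1}^\star$ construction used in \autoref{thm : uniform price auction lower bound regret} would additionally let me neutralize the extra observations by arranging the optimal bid to have collapsed coordinates, so that gathering information forces a bid perturbation that is itself penalized.
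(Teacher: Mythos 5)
There is a genuine gap, and it sits exactly where your own arithmetic flags it. You set the separation at $\theta_1-\theta_0=\Theta(T^{-1/4})$ and claim the accumulated KL, $T\cdot\Theta((\theta_1-\theta_0)^2)=\Theta(c^2\sqrt{T})$, can be "tuned via the separation constant to remain $O(1)$" --- it cannot: $c^2\sqrt{T}$ diverges for every fixed $c>0$, so after $O(\sqrt{T})$ rounds the two instances are statistically distinguishable and Bretagnolle--Huber gives you nothing for the remaining horizon. The failure is not cosmetic but structural: because you chose a \emph{smooth, strictly concave} utility with an interior maximum, the per-round loss of a misplaced bid is second order, $\Theta(\delta^2)$. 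Any two-point argument then faces the trade-off: indistinguishability over $T$ rounds forces $\Delta\lesssim T^{-1/2}$, and the regret you can extract is only $T\cdot\Delta^2=O(1)$. A quadratic-loss two-point method caps out at constant regret; to get $\Omega(\sqrt{T})$ you need instances whose regret is \emph{first order} in the perturbation (per-round cost $\Theta(\Delta)$ with per-sample information $\Theta(\Delta^2)$, so that $\Delta=T^{-1/2}$ yields $T\Delta=\sqrt{T}$). Indeed, for your smooth family the bidder could estimate $\theta$ at rate $1/\sqrt{t}$ and suffer only $\sum_t O(1/t)=O(\log T)$ regret, so your instance class is provably too easy to witness the bound.

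The paper's proof avoids all of this by a reduction rather than a fresh information-theoretic argument. It takes the hard instance of \cite{branzei2023learning} (Theorem 4) for the uniform auction under \emph{full-information} feedback --- where opposing bids take only the values $0$ and $2/3$, making the expected utility piecewise linear and hence first-order sensitive to perturbations --- and shows it can be realized by i.i.d.\ unit-demand adversaries: with two independent Bernoulli$(p)$ bids scaled by $2/3$ and $p=2/3$, the induced $\betavec$ equals $(0,0)$, $(\frac{2}{3},0)$, $(\frac{2}{3},\frac{2}{3})$ with probabilities $\frac{1}{9},\frac{4}{9},\frac{4}{9}$; the $(0,0)$ outcome is strategically inert, and since the probabilities of the other two outcomes have derivatives in $p$ of opposite signs, small changes of $p$ reproduce Branzei et al.'s opposing perturbations. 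Crucially, because the inherited bound already holds under full information, it holds a fortiori under bandit feedback (the Models section notes bandit feedback is strictly less informative), so the entire difficulty you anticipated --- uniformly bounding the per-round information gain over adaptively placed observation intervals via a chain-rule KL decomposition --- simply never arises; that machinery is needed for the $T^{2/3}$ bound of \autoref{thm : uniform price auction lower bound regret}, not here. If you want to salvage your plan, replace the smooth concave family with a piecewise-linear one (discrete support) so the loss is linear in $\Delta$, set $\Delta=\Theta(T^{-1/2})$, and grant the adversary full information; at that point you have essentially reconstructed the paper's reduction.
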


\vspace{-1mm}

\begin{proof} [Proof idea]
    Focusing on two-item auctions, we build upon the proof of the lower bound of \cite{branzei2023learning} for the uniform auction in the full information feedback. Since they use opposing bids which can only be worth $0$ or $2/3$, it is possible to build distributions that mimic the behavior of the ones used in their lower bound, even when using i.i.d.\ bids.
    \end{proof}

\begin{remark} We assume knowledge of the number of adversaries $N$ in this section. However, since the components of $\betavec$ are order statistics, $N$ can also be estimated from a finite number of samples. \end{remark}

\section{Conclusion}

We characterized the difficulty of learning to bid in multi-unit auctions through achievable regret rates. We show that, under full-information feedback, both auction formats can be learned with matching regret rates. In the bandit feedback setting, the picture is more nuanced: while worst-case regret guarantees are similar for both formats, the richer feedback structure of the uniform auction can make it easier to learn in instances that do not correspond to the worst-case distribution.

A key simplification in our analysis is the stochasticity assumption for opposing bids. It remains an open question whether similar results hold when adversaries are strategic, which would also enable the study of social welfare when multiple bidders learn simultaneously. Another interesting direction is to consider settings where bidders’ valuations must also be learned, and to investigate how auction transparency \citep{cesa2024transparency} impacts the learning process.
\section*{Acknowledgements}

Vianney Perchet acknowledges the support of ANR through the PEPR IA FOUNDRY project (ANR-23-PEIA-0003)  and the Doom project (ANR-23-CE23-0002), as well as the ERC through the Ocean project (ERC-2022-SYG-OCEAN-101071601).

\nocite{*}
\newpage
\printbibliography
\newpage

\appendix

\section*{Appendix}
\addcontentsline{toc}{section}{Appendix}

This appendix provides additional technical details, proofs, and supporting results for the main paper. It is organized as follows:

\begin{itemize}
    \item \textbf{Section~\ref{app : general simplification}} presents general simplifications and key properties of the auction models that are useful for our analysis.
    \item \textbf{Section~\ref{app : proofs regret upper bounds}} contains detailed proofs of the regret upper bounds for both the full-information and bandit feedback settings, including concentration inequalities and technical lemmas.
    \item \textbf{Section~\ref{app : proofs regret lower bounds}} provides the proofs of the main regret lower bounds for the discriminatory (first price) and uniform price auctions.
    \item \textbf{Section~\ref{app : instance specific}} covers proofs and results specific to certain instances, such as $\Delta$-separated distributions and the i.i.d. unit-demand adversary setting.
\end{itemize}

Throughout, all notation, definitions, and assumptions are consistent with those in the main text, unless explicitly stated otherwise. Where appropriate, we restate key lemmas and theorems for clarity. Cross-references to the main text and within the appendix are provided for ease of navigation.

\section{Appendix} \label{app : general simplification}

\subsection{General simplifications} 

This section presents general properties and simplifications of the auction models considered in the main paper. We highlight structural results that are useful for the design and analysis of learning algorithms, such as truthfulness in uniform price auctions and the implications for optimal bidding strategies. These results serve as foundational tools for the regret analyses in subsequent sections.

\paragraph{Uniform price auction with unit demand}
We focus our attention on the unit demand uniform price auction, in which the bidder only wishes to acquire one item. Formally, this means that $v_1 >0$ and for all $k >1$, $v_k =0$. One can notice that in this specific case, the optimal bid is straightforwardly $\bvec^* = (v_1,0,\hdots,0)$. Therefore, in this case, there is no need for an online learning algorithm and a bidder playing $\bvec^*$ for every time $t \in [T]$ incurs regret $0$. 

The following lemma shows a more general property of which the previous claim is a direct consequence. 

\begin{lemma}\label{lemma : truthfulness uniform}
    In the uniform price auction, for any distribution of opposing bid $\mathcal{D}$, let $\bvec^* = \left (b_1^*,b_2^*,\hdots,b_K^* \right ) := \argmax_{\bvec \in B} \underset{\betavec \sim \mathcal{D}}{\Expectation} \left [ u(\bvec,\betavec) \right ]$ then $\tilde{\bvec}^\star := \left (v_1, \min (b_2^\star,v_2), \hdots , \min (b_K^\star,v_K) \right )$ is also a maximizer. 
\end{lemma}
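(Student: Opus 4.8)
In the uniform price auction, starting from any maximizer $\bvec^\star$ of the expected utility, the modified vector $\tilde{\bvec}^\star = (v_1, \min(b_2^\star,v_2),\dots,\min(b_K^\star,v_K))$ is also a maximizer.

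**My plan.** The strategy is to show that each coordinate modification can only (weakly) increase the expected utility, so the fully modified vector is at least as good as $\bvec^\star$, hence also a maximizer. I would split the modifications into two families and handle them separately. The key structural fact I would exploit is the decomposition from Lemma~\ref{lemma : decomposed a one dimensional CDF}: the expected utility in the uniform auction can be written through the allocation indicators $\indicator\{b_i \geq \beta_{K-i+1}\}$ and the uniform price, which is the $(K+1)^{\text{th}}$ largest bid. So I want to track, coordinate by coordinate, how changing $b_i$ affects both whether the bidder wins item $i$ and what price is paid.

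First I would address the coordinates $i\geq 2$ where we replace $b_i^\star$ by $\min(b_i^\star, v_i)$, i.e.\ only the case $b_i^\star > v_i$ matters (otherwise nothing changes). The claim here is essentially that overbidding above one's marginal value is never helpful. I would argue pointwise in $\betavec$: lowering $b_i^\star$ down to $v_i$ can only change the allocation on item $i$ (and possibly the price). If lowering it does not lose item $i$, the only effect is a possible decrease in the uniform price (weakly helping). If lowering it loses item $i$, then item $i$ was being won at a price $p \geq b_i^{\text{new}} = v_i$ region, so the marginal term $v_i - p \leq 0$ was non-positive, and dropping it is weakly beneficial; I must also check the price paid on the remaining won items does not increase, which follows because removing a won unit can only lower or keep the first-rejected-bid price. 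Care is needed to keep the ordering $\bvec \in B$ valid after the modification, but since we replace $b_i^\star$ by $\min(b_i^\star,v_i)$ and the $v_i$ are non-increasing while $b_i^\star$ is non-increasing, monotonicity of the new vector needs a short verification (or one argues that the argmax can be taken within the sorted region and ties handled).

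Second I would handle the first coordinate, raising $b_1^\star$ up to $v_1$ (only the case $b_1^\star < v_1$ matters). Here the intuition is that underbidding below your top value on the first unit is never helpful either: raising $b_1$ to $v_1$ can only gain item $1$, and when it does, the marginal gain is $v_1 - p$ with $p \leq v_1$ in the relevant regime, hence non-negative; when it does not change the allocation, I must check the price paid weakly does not increase, again from the structure of the first-rejected-bid price. The main obstacle I anticipate is the price-coupling in the uniform format: unlike the discriminatory auction, the uniform price is shared across all won units, so a single coordinate change can shift the payment on every allocated item simultaneously. The crux of the argument is therefore a careful monotonicity claim — that lowering an over-value bid (or raising the top bid to $v_1$) can only move the first-rejected-bid price in the favorable direction on the units actually won — which I would prove by a pointwise case analysis on $\betavec$ comparing the sorted merged bid vectors before and after the modification, showing the allocation changes by at most one unit and the shared price moves monotonically. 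Once each single-coordinate step is shown to be weakly improving, composing them yields $\Expectation[u(\tilde{\bvec}^\star,\betavec)] \geq \Expectation[u(\bvec^\star,\betavec)]$, and since $\bvec^\star$ is optimal, equality holds and $\tilde{\bvec}^\star$ is a maximizer.
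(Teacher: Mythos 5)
Your proposal is correct and follows essentially the same route as the paper's proof: a pointwise (per-$\betavec$) dominance argument via single-coordinate modifications, a case analysis on whether the allocation changes, monotonicity of the first-rejected-bid price in the bid components, and raising $b_1$ to $v_1$ handled separately at the end. The only refinement worth noting is that the paper sidesteps your ``allocation changes by at most one unit'' concern by always modifying the \emph{largest} index $k^\star$ with $b^\star_{k^\star} > v_{k^\star}$ (so the lost term is exactly the last marginal term $v_{k^\star} - p \leq 0$), whereas under an arbitrary processing order the allocation can drop by more than one unit --- though your argument would still go through, since every lost margin satisfies $v_l \leq v_i < p$.
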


\begin{proof}
    We will provide a stronger result, which is that, given any opposing bids $\betavec \in B$, when we denote $u(\bvec^\star,\betavec)$ the utility of the bidder against these bids when bidding $\bvec^\star$, then $u(\bvec^\star,\betavec) \leq u(\tilde{\bvec}^\star,\betavec)$.
    Let $k^\star$ be the biggest index such that $b_{k^\star}^\star > v_{k^\star}$. Denote $\tilde{\bvec}^\star_k$ the bid with $b_{k^\star}^\star$ replaced by $v_{k^\star}$.
    Let us consider $x(\bvec^\star,\betavec)$ and $x(\tilde{\bvec}^\star_k,\betavec)$, either they are equals or $x(\bvec^\star,\betavec) > x(\tilde{\bvec}^\star_k,\betavec)$. 

    - In the first case, noticing that the price function $p(\cdot,\betavec)$ is increasing in all the components of the bids, yields $u(\bvec^\star,\betavec) \leq u(\tilde{\bvec}^\star_k,\betavec)$.

    - In the second case, we have $x(\bvec^\star,\betavec) > x(\tilde{\bvec}^\star_k,\betavec)$, which necessarily implies that $b_{k^\star}^\star \geq p(\bvec^\star,\betavec) \geq v_{k^\star}$. Therefore, $ v_{k^\star} -p(\bvec^\star,\betavec) \leq 0,$ noticing that this is the last term of the utility when playing $\bvec^\star$ and that the price is increasing in all the components of the bids, we can conclude that $u(\bvec^\star,\betavec) \leq u(\tilde{\bvec}^\star_k,\betavec)$.

    We can repeat this process for all the components of the bids. Combining this with the fact that the utility is only increasing in the first component of the bid vector (usually denoted $b_1$) allows us to get that $u(\bvec^\star,\betavec) \leq u(\tilde{\bvec}^\star,\betavec)$.
\end{proof}

Because of the previous property, when analyzing learning algorithms for the uniform price auction, we will always consider the algorithms' pick bids such that $b_1=v_1$ when maximizing estimates of expected utility.

We now move on to providing the proofs and formulas necessary to justify our approach. We recall the \autoref{lemma : decomposed a one dimensional CDF}.

\lemmaUtilityCdf* 

In fact, for the discriminatory auction, the utility can be written as follows:  \begin{equation} \label{eq : utility discriminatory: formula estimated}
\underset{\betavec \sim \mathcal{D}}{\Expectation} \left [ u(\bvec,\betavec) \right ]  =
         \sum_{i=1}^K F_{K-i+1}(b_i) \left ( v_i - b_i\right )
\end{equation}

In the uniform price auction, the expected utility can be written as follows :
\begin{align}
\underset{\betavec \sim \mathcal{D}}{\Expectation} \left [ u(\bvec,\betavec) \right ] & =          \sum_{i=1}^K \left ( F_{K-i}(b_{i+1})  -  F_{K-i+1}(b_{i+1})   \right )  \left ( \sum_{l=1}^i v_l - b_{i+1}\right )  \\
          & \hspace{-20pt} + \sum_{i=1}^K \underset{\beta_{K-i+1} \sim \mathcal{D}_{K-i+1}}{\Expectation} \left [  \indicator[b_{i} \geq  \beta_{K-i+1} \geq b_{i+1} ] \left ( \sum_{l=1}^i v_l - \beta_{K-i+1}\right ) \right ]
\end{align}

\begin{proof} \label{app : proof : lemma decomposed a one dimensional CDF uniform}
     We begin by proving the lemma for the discriminatory auction. The utility can be rewritten as $ u(\bvec,\betavec) = \sum_{i=1}^K \indicator[ b_i \geq \beta_{K-i+1}] \left ( v_i - b_i\right )$. Taking the expectation and rewriting by using the definition $F_k$ leads to the following equation, which concludes the proof.
 \begin{equation} 
\underset{\betavec \sim \mathcal{D}}{\Expectation} \left [ u(\bvec,\betavec) \right ]  = 
         \sum_{i=1}^K F_{K-i+1} \left ( b_i \right ) \left ( v_i - b_i\right )
\end{equation}

    We now move to the uniform price auction. To have a convenient formula, we write the utility of the bidder in the uniform price auction by decomposing it depending on the $K+1^\text{th}$ bid. 

    \begin{align*}
        u(\bvec,\betavec) =  & 
         \sum_{i=1}^K \indicator[\beta_{K-i} > b_{i+1} \geq \beta_{K-i+1}] \left ( \sum_{l=1}^i v_l - b_{i+1} \right ) \\
         & + \sum_{i=1}^K \indicator[ b_{i} \geq \beta_{K-i+1} > b_{i+1}] \left ( \sum_{l=1}^i v_l - \beta_{K-i+1} \right )
    \end{align*}

When taking the expectation, this leads to 
\begin{align*}
\underset{\betavec \sim \mathcal{D}}{\Expectation} \left [ u(\bvec,\betavec) \right ] & =          \sum_{i=1}^K \Proba[\beta_{K-i} > b_{i+1} \geq \beta_{K-i+1}]  \left ( \sum_{l=1}^i v_l - b_{i+1}\right )  \\
          & \hspace{-20pt} + \sum_{i=1}^K  \Proba \left ( b_{i} \geq  \beta_{K-i+1} > b_{i+1} \right ) \left ( \sum_{l=1}^i v_l \right ) - i \Expectation \left [ \indicator[b_{i} \geq  \beta_{K-i+1} > b_{i+1} ] \beta_{K-i+1} \right ] \end{align*}

Which we can simplify as \begin{equation} \label{eq : expectation utility uniform} \begin{split}
          \underset{\betavec \sim \mathcal{D}}{\Expectation} \left [ u(\bvec,\betavec) \right ] & =          \sum_{i=1}^K \left ( F_{K-i+1} (b_{i+1}) - F_{K-i} (b_{i+1}) \right )  \left ( \sum_{l=1}^i v_l - b_{i+1}\right )  \\
          & \hspace{10pt} + \sum_{i=1}^K   \left ( F_{K-i+1} (b_{i}) - F_{K-i+1} (b_{i+1}) \right ) \left ( \sum_{l=1}^i v_l \right ) \\
          & \hspace{10pt} - \sum_{i=1}^K  i \left (b_i F_{K-i+1}(b_i) - b_{i+1} F_{K-i+1} ( b_{i+1}) - \int_{b_{i+1}}^{b_i} F_{K-i+1} (t) dt \right )
\end{split} \end{equation} 
This concludes the proof.
\end{proof}

With the previous proof in mind, we can define $U_u$ and $U_d$, the corresponding functions that, given the CDFs, allow us to compute the expected utility.

Let $( G_k )_{k \in [K]}$ be a sequence of CDF functions. The following equations define the functions mentioned in \autoref{lemma : decomposed a one dimensional CDF} for:
\textbf{the discriminatory auction} for $\bvec \in B$: \begin{equation} \label{eq : utility : estimate discriminatory}
U_d(( G_k )_{k \in [K]},\bvec) = 
         \sum_{i=1}^K G_{K-i+1}^t \left (b_i \right ) \left ( v_i - b_i\right ),
\end{equation}

\textbf{the uniform auction} for $\bvec \in B$: \begin{equation} \label{eq : estimate of expected utility uniform price auction} \begin{split}
    U_u(( G_k )_{k \in [K]},\bvec) & :=  \sum_{i=1}^K \left (G_{K-i+1}(b_{i+1}) - G_{K-i}(b_{i+1}) \right ) \left ( \sum_{l=1}^i v_l -  b_{i+1}\right )  \\
    & + \sum_{i=1}^K \left ( G_{K-i+1}(b_{i}) - G_{K-i+1}(b_{i+1})  \right ) \sum_{l=1}^i v_l -  i \int_{(b_{i+1},b_i]} x dG_{K-i+1} 
    \end{split} 
    \end{equation}

\lemmaObersavtionBandit*

\begin{proof}
    At time $t$, let $i \in [K]$, then $\beta_{K-i+1}^t \in (b_{i+1}^t,b_i^t]$, is equivalent to $b_{i+1}^t < \beta_{K-i}^t \leq b_i^t$ which is in turn equivalent to $\{ x(\bvec^t, \betavec^t)=i \} \cap \{ p(\bvec^t, \betavec^t) \neq b_{i+1}^t\} $, which is an observable event. Note that when the event is realized, the value of $\beta_{K-i +1}^t$ is observed since it is the price.

    Therefore, since it is the case for all $i \in [K]$, the bidder has observed  $\left (\indicator \left \{ \beta_{K-i+1}^t \in [b_{i+1}^t,b_i^t] \right \} \beta_{K-i +1 }^t \right )_{i \in [K]} $.  
\end{proof}

\begin{remark} \label{remark : feedback bandit} 
    Notice that the previous \autoref{lemma : bandit feedback uniform decomposed formula 1} implies that at all times $t$ and for all $i \in [K]$, the following function is also observed $\forall x\in (b_{i+1}^t,b_{i}^t]$, $\indicator \left \{\beta_{K-i}^t \in [0,x] \right \}$.
\end{remark}

\section{Regret Upper Bounds}\label{app : proofs regret upper bounds}

In this section, we provide detailed proofs of the regret upper bounds stated in the main text for both the full-information and bandit feedback settings. We include all necessary technical lemmas, concentration inequalities, and step-by-step arguments for the algorithms analyzed. Our goal is to make the derivations transparent and self-contained, enabling readers to verify each step of the analysis.

\subsection{Full-information feedback}

Recall that in the full information feedback, the bidder observes $\betavec^t$ at the end of each time-step. therefore, at time $t$, it has observed $\betavec^1,\hdots,\betavec^{t-1}$. 

The following lemma, which links the quality of the estimates of the marginal CDF to the quality of the estimates of the utility, can be used to derive regret guarantees for both auction formats.  

\ref{eq : utility discriminatory: formula estimated}

\begin{lemma} \label{lemma : concentration of utility estimates}
    Given $(\hat{F}_k^t)_{k \in [K]}$, the empirircal estimates of the cumulative marginal distributions of $\mathcal{D}$ from $t$ samples, let $\alpha \in [0,1]$, then in both the uniform price auction (with $\hat{u}(\cdot):= U_u( (\hat{F}_k )_{k \in [K]}, \cdot)$) and in the discriminatory auction (with $\hat{u}(\cdot):= U_u( (\hat{F}_k )_{k \in [K]}, \cdot)$) . The following inequality holds :
    \begin{equation}
        \forall k\in [K], \Proba \left ( \sup_{\bvec \in B} \left | \hat{u} (\bvec)- \Expectation \left [ u(\bvec,\betavec) \right ] \right | \leq K \epsilon^t(\alpha) \right ) \geq  1 - K \alpha 
    \end{equation}
    Where for the discriminatory auction $\epsilon^t(\alpha) := \sqrt{\frac{\ln \left (\frac{2}{\alpha} \right )}{2t}}$ and for the uniform auction $\epsilon^t(\alpha) := 3 \sqrt{  C \frac{2K \log \left ( 2K\right ) + \log \left ( \frac{1}{\alpha} \right ) }{t}}$
\end{lemma}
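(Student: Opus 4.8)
The plan is to push a uniform control on the marginal CDF estimation error through the explicit utility formulas \eqref{eq : utility : estimate discriminatory} and \eqref{eq : estimate of expected utility uniform price auction}, treating the discriminatory case as the clean template and then upgrading the argument to absorb the extra weights appearing in the uniform case without losing a factor of $K$.

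For the discriminatory auction I would start from the fact that $U_d$ is linear in its CDF argument, so that
\[
\hat u(\bvec) - \Expectation[ u(\bvec,\betavec)] = \sum_{i=1}^K \big(\hat F_{K-i+1}^t(b_i) - F_{K-i+1}(b_i)\big)(v_i - b_i).
\]
Since $|v_i - b_i|\le 1$, the error is at most $\sum_{i=1}^K \sup_x |\hat F_{K-i+1}^t(x) - F_{K-i+1}(x)|$. Applying the DKW inequality \cite{massart1990tight} to each marginal gives $\sup_x|\hat F_k^t - F_k| \le \sqrt{\ln(2/\alpha)/(2t)}$ with probability $1-\alpha$; a union bound over $k \in [K]$ then yields the claimed $K\epsilon^t(\alpha)$ bound uniformly in $\bvec$ with probability $1 - K\alpha$. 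This part is routine.

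For the uniform auction the same linearity holds, but two features threaten a bound scaling as $K^2$: the coefficients $\sum_{l\le i} v_l$ can be as large as $i$, and the price term $-\,i\int_{(b_{i+1},b_i]} x\, dG_{K-i+1}$ carries an explicit factor $i$. I would dispatch the integral terms by integration by parts, $\int_{(b_{i+1},b_i]} x\, d(\hat F - F) = b_i(\hat F - F)(b_i) - b_{i+1}(\hat F - F)(b_{i+1}) - \int_{b_{i+1}}^{b_i}(\hat F - F)$, reducing each to sup-norm CDF errors, and I would redistribute the valuation weights by summation by parts so that each marginal error $\hat F_k - F_k$ is multiplied by a single $v_k \le 1$ rather than by a partial sum. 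The crucial point, however, is that the error must \emph{not} be estimated marginal-by-marginal: because $\betavec$ is sorted, the indicators $(\indicator[\beta_k \le x])_{k}$ form a monotone staircase encoded by the single count $N(x) = \#\{k : \beta_k \le x\}$, and the differences $F_{K-i+1}-F_{K-i}$ are exactly the probabilities $\Proba(N(x)=i)$ of a genuine, sum-one distribution. I would therefore view the utility as a bounded linear functional of the empirical count-distribution and invoke a multi-class uniform-convergence bound in terms of the Natarajan dimension \cite{shalev2014understanding} of the class of count maps $\betavec \mapsto (\#\{k:\beta_k \le b_i\})_{i\in[K]}$ indexed by the bid vector, which has Natarajan dimension $O(K)$ over its $K+1$ labels. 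This produces a single band of radius $\epsilon^t(\alpha)=3\sqrt{C(2K\log(2K)+\log(1/\alpha))/t}$, and summing the $K$ functional terms against coefficients bounded by $1$ gives the stated $K\epsilon^t(\alpha)$ bound with probability $1-K\alpha$.

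The main obstacle is precisely this last step: identifying the correct multi-class hypothesis class whose Natarajan dimension is $O(K)$, and checking that after the summation-by-parts rearrangement the utility is a Lipschitz linear functional of the corresponding empirical label distribution with coefficients bounded by $1$. A naive union bound treating each $F_k$ independently would lose either a factor $K$ from the weights or a factor $\sqrt K$ from the union; the negative correlation between the disjoint count events — captured jointly by the Natarajan-dimension band rather than coordinatewise — is what does the real work here.
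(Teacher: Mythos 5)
Your proposal is correct and follows essentially the same route as the paper: DKW plus a union bound over the $K$ marginals for the discriminatory case, and for the uniform case a reduction of the integral/price terms to sup-norm CDF errors combined with a single multiclass uniform-convergence band, via the Natarajan dimension of the family of sorted-count (disjoint-event) classifiers indexed by $\bvec$ (the paper bounds this dimension by $2K+2$ and invokes the Multiclass Fundamental Theorem of Shalev-Shwartz and Ben-David), yielding exactly the stated radius $\epsilon^t(\alpha)$ and the $K\epsilon^t(\alpha)$ assembly. The only difference is bookkeeping: you redistribute the valuation and payment weights by Abel summation so each term carries a coefficient at most $1$, whereas the paper keeps coefficients as large as $K$ on the disjoint events and absorbs the factor-$i$ price weights through the telescoping identity $\sum_{i} i\left(b_i - b_{i+1}\right) \le K$ --- the same mechanism in different clothing.
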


\begin{proof}
    We begin with the proof of the guarantees for the discriminatory auction, which is more concise:
    Let $T \in \mathbb{N}$, at time $t \in [T]$, we apply the result from \cite{massart1990tight}, which in our notations provides that, for $\alpha \in [0,1]$:
    \begin{equation}
        \forall k\in [K], \Proba \left ( \sup_{x\in [0,1]} | \hat{F_k}^t (x)- F_k(x) | \leq \sqrt{ \frac{\ln \left ( \frac{2}{\alpha} \right )}{2t}} \right ) \geq  1 - \alpha 
    \end{equation} 
    
    Doing a union over the opposite events, one easily gets : \begin{equation} \label{eq : bound all empirical cdfs same time}
         \Proba \left ( \forall k\in [K], \sup_{x\in [0,1]} \left | \hat{F_k}^t (x)- F_k(x) \right | \leq \sqrt{ \frac{\ln \left ( \frac{2}{\alpha} \right )}{2t}} \right ) \geq  1 - K \alpha 
    \end{equation}
    
     We can write the expected utility in the discriminatory price auction, for $\bvec = (b_1,\hdots,b_K) \in B$ as $\Expectation \left [ u(\bvec,\betavec) \right ] = \sum_{k \in [K]} F_k (b_k) \left ( v_k - b_k \right ) $. Using the expression of $\hat{u}$ , we get: \begin{align*}
        \sup_{\bvec \in B} \left |  \hat{u} \left ( \bvec \right ) - \Expectation \left [ u(\bvec,\betavec) \right ] \right | & \leq \sup_{\bvec \in B} \sum_{k \in [K]} \left | \hat{F}_k (b_k) - F_k(b_k) \right | \\
        & \leq \sum_{k \in [K]} \sup_{b_k \in [0,1]} \left | \hat{F}_k (b_k) - F_k(b_k) \right |
    \end{align*}
    Using \eqref{eq : bound all empirical cdfs same time} then yields the desired result.

    We now proceed with the proof for the uniform price auction. 
    We start by rewriting the utility from \eqref{eq : expectation utility uniform} as follows :\begin{align*}
        \underset{\betavec \sim \mathcal{D}}{\Expectation} \left [ u(\bvec,\betavec) \right ] & = \underset{\betavec \sim \mathcal{D}}{\Expectation} \left [ \sum_{i=1}^K \indicator \left \{ \beta_{K-i} > b_{i+1} \geq \beta_{K-i+1} \right \} \left ( \sum_{j=1}^i v_j - b_{i+1} \right ) \right. \\
        & \hspace{30pt} + \left . \sum_{i=1}^K \indicator \left \{b_i \geq \beta_{K-i+1} > b_{i+1} \right \} \left ( \sum_{j=1}^i v_j \right ) \right ] \\
        & \hspace{30pt} - \sum_{i=1}^K  i \left (b_i F_{K-i+1}(b_i) - b_{i+1} F_{K-i+1} ( b_{i+1}) - \int_{b_{i+1}}^{b_i} F_{K-i+1} (t) dt \right ) \\ 
 & = \underset{\betavec \sim \mathcal{D}}{\Expectation} \left [ \sum_{i=1}^K \indicator \left \{ \beta_{K-i} > b_{i+1} \geq \beta_{K-i+1} \right \} \left ( \sum_{j=1}^i v_j - b_{i+1} \right ) \right. \\
        & \hspace{30pt} + \left . \sum_{i=1}^K \indicator \left \{b_i \geq \beta_{K-i+1} > b_{i+1} \right \} \left ( \sum_{j=1}^i v_j - b_{i+1} \right ) \right ] \\
        & \hspace{30pt} - \sum_{i=1}^K  i \left (b_i -b_{i+1} \right ) F_{K-i+1}(b_i) \\
        & \hspace{30pt} - \sum_{i=1}^K i\int_{b_{i+1}}^{b_i} F_{K-i+1} (t) dt \\
    \end{align*} 
    The estimated utility can also be written similarly, yielding: \begin{align*}
         \hat{u}^t(\bvec) & = \hat{\Expectation}^t \left [ \sum_{i=1}^K \indicator \left \{ \beta_{K-i} > b_{i+1} \geq \beta_{K-i+1} \right \} \left ( \sum_{j=1}^i v_j - b_{i+1} \right ) \right . \\
        & \hspace{30pt} \left . +  \sum_{i=1}^K \indicator \left \{b_i \geq \beta_{K-i+1} > b_{i+1} \right \} \left ( \sum_{j=1}^i v_j -b_{i+1} \right ) \right ] \\
        & \hspace{30pt} - \sum_{i=1}^K  i \left (b_i -b_{i+1} \right ) \hat{F}_{K-i+1}(b_i) \\
        & \hspace{30pt} - \sum_{i=1}^K i\int_{b_{i+1}}^{b_i} \hat{F}_{K-i+1} (t) dt \\
    \end{align*} 
    Where $\hat{\Expectation}^t$ denotes the empirical expectation.

    We will show that the estimated utility is close to the true expectation by showing that it is the case for each of the terms of the sum. We begin with the terms that only require DKW bound, which we restate here:
    Let $T \in \mathbb{N}$, at time $t \in [T]$, the result from \cite{massart1990tight}, provides that, for $\alpha \in [0,1]$:
    \begin{equation}
        \forall k\in [K], \Proba \left ( \sup_{x\in [0,1]} | \hat{F_k}^t (x)- F_k(x) | \leq \sqrt{ \frac{\ln \left ( \frac{2}{\alpha} \right )}{2t}} \right ) \geq  1 - \alpha 
    \end{equation} 
    We can therefore get that with probability $1- K\alpha$: \begin{align} \label{eq : full info first term}
        \left | \sum_{i=1}^K i\int_{b_{i+1}}^{b_i} F_{K-i+1} (t) dt - \sum_{i=1}^K i\int_{b_{i+1}}^{b_i} \hat{F}_{K-i+1} (t) dt \right | \leq \sum_{i=1}^K i(b_i-b_{i+1}) \epsilon \leq K \epsilon 
    \end{align}
    Similarly, with probability $1- K\alpha$: \begin{equation} \label{eq : full info second term}
        \left | \sum_{i=1}^K  i \left (b_i -b_{i+1} \right ) F_{K-i+1}(b_i)  - \sum_{i=1}^K  i \left (b_i -b_{i+1} \right ) \hat{F}_{K-i+1}(b_i) \right | \leq \sum_{i=1}^K i(b_i-b_{i+1}) \epsilon \leq K \epsilon 
    \end{equation}
    Note that \eqref{eq : full info first term} and \eqref{eq : full info second term} are also valid when taking the supremum over $b\in B$.

    We now move to bounding the last difference needed. We first provide the required tools.

     Usual concentration inequality for the CDF can be obtained by considering the family of indicator functions : $x \mapsto \indicator[X \leq x ]$. In order to obtain tighter concentration bounds, we will use results on uniform convergence results on multiclass learnability from \cite{shalev2014understanding} to bound all the terms simultaneously.

    First notice that for any $\betavec \in B$, the family of events $ (\left \{ \beta_{K-i} > b_{i+1} \geq \beta_{K-i+1} \right \})_{i \in [K]} \cup \left ( \left \{b_i \geq \beta_{K-i+1} > b_{i+1} \right \} \right )_{i \in [K]} $ are disjoint for any $\bvec \in B$. We will consider the multiclass classification functions associated with these events for $\bvec \in B$, and denote $\mathcal{H}_B$ this family of classification functions. We also denote $f$ the function which maps an element of $B$ to the corresponding multiclassifier. 

    We begin by bounding the Natarajan dimension of $\mathcal{H}_B$, a generalisation of VC dimension. For a reminder of the definition, see \cite{shalev2014understanding} section \textbf{29.1}.

    Notice that amongst the events, there are interval indicator functions for each coordinate of the vector $\betavec$. Therefore, we can use the same argument as for obtaining the VC dimension of the family of functions $ \indicator[X \in [a,b] ]$. The main difference being that we require $2K+2$ element of $B$ in order to have at least three different values amongst one coordinate, than observing that one cannot obtain labels $[1,0,1]$ in the corresponding class yields the following upper bound on the Natarajan dimension $\mathcal{N}$ of the family of multiclassifier at hand:
    \begin{equation}
        \mathcal{N} \left (  \mathcal{H}_B \right ) \leq 2K+2
    \end{equation}

    Then, using the Multiclass Fundamental Theorem \cite{shalev2014understanding}, \textbf{Theorem 29.3} yields the following concentration inequality, with probability $1-\alpha$ : 
    \begin{equation}
        \sup_{\bvec\in {B}} \left | \frac{\sum_{t=1}^T f(\bvec)(\betavec^t)}{T} - \underset{\betavec \sim \mathcal{D}}{\Expectation} \left [f(\bvec) (\betavec) \right ]  \right | \leq \epsilon^t
    \end{equation}

    Where there is a universal constant $C>0$ such that $\epsilon^T := \sqrt{  C \frac{2K \log \left ( K\right ) + \log \left ( \frac{1}{\alpha} \right ) }{T}}$

    We can therefore use this bound in order to obtain the following with probability at least $1-\alpha$: 
    \begin{equation} \label{eq : full info third term} \begin{split}
& \sup_{\bvec \in B}\left | \hat{\Expectation}^t \left [ \sum_{i=1}^K  \left (\indicator \left \{ \beta_{K-i} > b_{i+1} \geq \beta_{K-i+1} \right \} + \indicator \left \{b_i \geq \beta_{K-i+1} > b_{i+1} \right \} \right ) \left ( \sum_{j=1}^i v_j -b_{i+1} \right ) \right ] \right. \\ 
& \hspace{20pt} \left . - \hat{\Expectation}^t \left [ \sum_{i=1}^K  \left (\indicator \left \{ \beta_{K-i} > b_{i+1} \geq \beta_{K-i+1} \right \} + \indicator \left \{b_i \geq \beta_{K-i+1} > b_{i+1} \right \} \right ) \left ( \sum_{j=1}^i v_j -b_{i+1} \right ) \right ] \right | \\
& \leq K \epsilon^t
\end{split}    \end{equation}

    With this equation, we can now bound the error of the expected utility using \eqref{eq : full info first term}, \eqref{eq : full info second term}, and \eqref{eq : full info third term}, with probability at least $1-\alpha$: \begin{equation}
        \sup_{\bvec \in B} \left | \hat{u}^t(\bvec) - \underset{\betavec \sim \mathcal{D}}{\Expectation} \left [ u(\bvec,\betavec) \right ]  \right | \leq 3 K \epsilon^t
    \end{equation}
    which concludes the proof. 
\end{proof}

We restate the theorem providing upper bounds on the regret in the full-information feedback : 

\TheoremFullInfo*

\begin{proof} \label{app : proof theorem full info}

\autoref{lemma : concentration of utility estimates} yields the following concentration inequality : \begin{equation}
    \Proba \left ( | \hat{u}^t(\bvec) - \underset{\betavec \sim \mathcal{D}}{\Expectation} \left [ u(\bvec,\betavec) \right ] | \leq  K \epsilon^t   \right ) \geq 1 - K \alpha
\end{equation}

Let's denote the optimal bid $\bvec^\star := \argmax_{\bvec \in B} \underset{\betavec \sim \mathcal{D}}{\Expectation} \left [ u(\bvec,\betavec) \right ] $. 
Since $\bvec^t$ is the maximum of $\hat{u}^t$, we known that $\hat{u}^t(\bvec^\star) \leq \hat{u}^t(\bvec^t)$.
Applying the previous concentration inequality to both $\bvec^\star$ and $\bvec^t$ and leveraging the optimality of $\bvec^t$ gives us with probability $1-K\alpha$: \begin{equation}
    \underset{\betavec \sim \mathcal{D}}{\Expectation} \left [ u(\bvec^t,\betavec) \right ] \geq \underset{\betavec \sim \mathcal{D}}{\Expectation} \left [ u(\bvec^\star,\betavec) \right ] + 6 K \epsilon^t
\end{equation}

Choosing $\alpha = \frac{1}{T^2}$ and summing over $t\in[T]$, then yields the desired regret bounds : $R_T =\tilde{\mathcal{O}} \left ( K\sqrt{T}\right )$. 

\end{proof}

\subsection{Bandits Feedback}

We provide in this section the details necessary in order to obtain the guarantees for the algorithms \ref{algorithm : Explore then commit} and \ref{algorithm : bandit K+1}. This subsection is therefore only focused on the uniform price auction, which has the most expressive bandit feedback of the two auction formats.

Let us recall that in this setting, the bidder only observes the outcome of the auctions at the end of each time step. Formally at time $t$, with $\betavec^t$ the bid of the adversary and $\bvec^t$ its own bid, the bidder observes $x(\betavec^t,\bvec^t)$ and $p(\betavec^t,\bvec^t)$. \autoref{lemma : bandit feedback uniform decomposed formula 1} provides an alternative representation, while \autoref{remark : feedback bandit} points out how it relates to indicator functions used for CDFs empirical estimates. 

We recall the formula of the estimates of the marginal CDFs for $k \in [K]$ : 
\begin{equation}
    \forall x \in [0,1], \tilde{F_k} \left ( x \right ) := \frac{\sum_{j=1}^t \indicator\left \{ x \in (b_{K+2 - k}^t, b_{K+1 - k}^t ] \right \} \indicator \left \{ \beta_k^t \leq x \right \}}{\sum_{j=1}^t \indicator\left \{ x \in (b_{K+2 - k}^t, b_{K+1 - k}^t ] \right \}}
\end{equation}

\subsubsection{Explore then commit algorithm}

We restate the theorem providing upper bounds on the regret in the bandit setting for the uniform auction.
\ExploreThenCommitRegret*

\begin{proof}
    In order to prove the regret of the algorithm, we provide a two-part bound which matches the two-phase behavior of the algorithm. Indeed, we use the following regret decomposition between exploration regret and commitment regret : 
    \begin{equation}
        R_T = R_{T}^{expl} + R_{T}^{com}
    \end{equation}
    Where we define the exploration regret as 
    \begin{equation}
         R_{T}^{expl} = T_{expl} \sup_{\bvec \in B} \left ( \underset{\betavec \sim \mathcal{D}}{\Expectation} \left [ u (\bvec,\betavec) \right ] \right )  - \sum_{t=1}^{T_{expl}} \underset{\betavec^t \sim \mathcal{D}}{\Expectation} \left [ u (\bvec^t,\betavec^t) \right],
    \end{equation}
   and the commitment regret as 
    \begin{equation}
         R_{T}^{com} = \left (T-T_{expl} \right ) \sup_{\bvec \in B} \left ( \underset{\betavec \sim \mathcal{D}}{\Expectation} \left [ u (\bvec,\betavec) \right ] \right )  - \sum_{t=T_{expl}}^{T} \underset{\betavec^t \sim \mathcal{D}}{\Expectation} \left [ u (\bvec^t,\betavec^t) \right].
    \end{equation}

We begin by upper-bounding the exploration regret by noticing that the utility is upper-bounded by $K$. This leads to \begin{equation} \label{eq : bound explo regret}
    R_T^{expl} \leq K T_{expl}
\end{equation}

Notice that during the exploration phase, the bidder ensures that for every $k \in [K]$ it has at least observed $\floor{T_{expl}/K}$ samples of $\beta_k$.

Notice that during the exploration phase, for every $k \in [K]$, the bidder has observed at least $\floor{T_{expl}/K}$ samples of $\beta_k$.  This follows directly from applying \autoref{lemma : bandit feedback uniform decomposed formula 1} to the bids in the exploration phase. This ensures our CDFS estimates $\tilde{F}$ are valid on the whole domain and that we can create concentration bands.

We will use concentration results on the empirical CDF to show that our estimate of the expected utility is close to the actual one. We begin by stating the concentration results that we will use :

We can use \cite{massart1990tight} to show that for all $k \in [K]$: : 
    \begin{equation} \label{eq : massart neme}
        \forall k\in [K], \Proba \left ( \sup_{x\in [0,1]} \left |  \tilde{F_k}^{T_{expl}} (x)- F_k(x) \right | \leq \sqrt{ \frac{\ln \left ( \frac{2}{\alpha} \right )}{2 \floor{ T_{expl}/K}}} \right ) \geq  1 - \alpha 
    \end{equation}

We will need another concentration result. It is well known that the family of indicator functions of intervals has VC dimensions $2$, therefore, it follows from section 12.5 of \cite{boucheron2013concentration} that for any $k \in [K]$: 
\begin{equation} \label{eq : concentration vriance aare}
    \Proba \left ( \sup_{ (a,b)^2 \in [0,1], a < b} \left | \frac{ \tilde{F_k}^{T_{expl}} (b) - \tilde{F_k}^{T_{expl}} (a) - F_k(b) + F_k(a) }{\sqrt{ F_k(b) - F_k(a)}}\right | \leq  \epsilon \left ( \floor{T_{expl}/K} \alpha \right )\right ) \geq  1 - \alpha 
\end{equation}
Where $\epsilon ( T , \alpha) = \sqrt{ 2 \frac{\ln \left ( \frac{2}{\alpha} \right ) + \log \left ( e T \right ) }{ T}} $. 

To upper bound  $\sup_{b\in B} \left | \underset{\betavec \sim \mathcal{D}}{\Expectation} (\bvec,\betavec) - \tilde{u}^{T_{expl}} (\bvec) \right |$, we will use the following formulation of the expected utility into terms well suited for showing uniform convergence. 

\begin{equation} \begin{split}
   \Expectation \left [u(\bvec,\betavec) \right ] & = \sum_{i=1}^K F_{K-i+1}(b_i) v_i - \sum_{i=1}^K i b_{i+1} \left ( F_{k-i+1}(b_{i+1}) - F_{k-i}(b_{i+1}) \right ) \\
   & \hspace{20pt} - \sum_{i=1}^K i b_{i+1} \left ( F_{k-i+1}(b_{i}) - F_{K-i+1}(b_{i+1}) \right ) \\ 
   & \hspace{20pt} - \sum_{i=1}^K i\int_{b_{i+1}}^{b_i} F_{K-i+1} (t) dt -  \sum_{i=1}^K  i \left (b_i -b_{i+1} \right ) F_{K-i+1}(b_i)  
\end{split} \end{equation}

Notice first that we can obtain the two following equations as in the proof of \autoref{lemma : concentration of utility estimates}, we can use \eqref{eq : massart neme} to obtain similar bounds as in \eqref{eq : full info first term} and \eqref{eq : full info second term}. 
\begin{align}
    \left | \sum_{i=1}^K i\int_{b_{i+1}}^{b_i} F_{K-i+1} (t) dt - \sum_{i=1}^K i\int_{b_{i+1}}^{b_i} \tilde{F}_{K-i+1}^{T_{expl}} (t) dt \right | & \leq \sum_{i=1}^K i(b_i-b_{i+1}) \epsilon \leq K \epsilon  \\
    \left | \sum_{i=1}^K  i \left (b_i -b_{i+1} \right ) F_{K-i+1}(b_i)  - \sum_{i=1}^K  i \left (b_i -b_{i+1} \right ) \tilde{F}_{K-i+1}^{T_{expl}}(b_i) \right | & \leq \sum_{i=1}^K i(b_i-b_{i+1}) \epsilon \leq K \epsilon 
\end{align}
We can also get the following bound: \begin{equation}
        \left | \sum_{i=1}^K \left ( F_{K-i+1}(b_i) - \tilde{F}_{K-i+1}^{T_{expl}}(b_i)  \right )v_i \right | \leq K \epsilon
\end{equation}

It only remains to get concentration bounds for the following term: \begin{equation} \label{eq : extended sum} \begin{split}
    \left | \sum_{i=1}^K i b_{i+1} \left (2 F_{k-i+1}(b_{i+1}) - F_{k-i}(b_{i+1}) - F_{K-i+1}(b_i) \right ) \right. \\ \hspace{30pt} \left. - i b_{i+1} \left (2 \tilde{F}_{k-i+1}(b_{i+1}) - \tilde{F}_{k-i}(b_{i+1}) - \tilde{F}_{K-i+1}(b_i) \right ) \right | \end{split}
\end{equation}

We start by the following computations :

\begin{align*}
    \sum_{i=1}^K i b_{i+1} \left ( F_{K-i+1}(b_{i+1}) - F_{K-i}(b_{i+1}) \right ) & = \sum_{i=1}^K i b_{i+1} \left ( F_{K-i+1}(b_{i+1}) \right) \\ 
    & \hspace{10pt}- \sum_{i=2}^{K+1} (i-1) b_{i} \left ( F_{K-i+1}(b_{i}) \right ) \\
    & \hspace{-90pt}= \sum_{i=1}^K b_{i+1} \left ( F_{K-i+1}(b_{i+1}) \right) +  \sum_{i=2}^K b_{i+1}(i-1) \left ( F_{K-i+1}(b_{i+1}) - F_{K-i+1}(b_{i}) \right) \\ & \hspace{10pt} - \sum_{i=2}^{K+1} (i-1) (b_{i} - b_{i+1}) \left ( F_{K-i+1}(b_{i}) \right ) - K b_{K+1} \\
\end{align*}

Adding the other terms in $F$ of \eqref{eq : extended sum} yields : \begin{equation} \begin{split}
     \sum_{i=1}^K i b_{i+1} \left (2 F_{k-i+1}(b_{i+1}) - F_{k-i}(b_{i+1}) - F_{K-i+1}(b_i) \right ) & = \sum_{i=1}^K b_{i+1} \left ( F_{K-i+1}(b_{i+1}) \right) \\
     & \hspace{-30pt} +  \sum_{i=2}^K b_{i+1}(2i-1) \left ( F_{K-i+1}(b_{i+1}) - F_{K-i+1}(b_{i}) \right) \\ & \hspace{-30pt} - \sum_{i=2}^{K+1} (i-1) (b_{i} - b_{i+1}) \left ( F_{K-i+1}(b_{i}) \right ) - K b_{K+1}  \\
\end{split} \end{equation}

Naturally, $\tilde{u}$ can be written the same way, we will therefore bound \eqref{eq : extended sum} by bounding the error on each sum independently.
By upper bounding using both \eqref{eq : massart neme} and \eqref{eq : concentration vriance aare} we can upper bound the last terms as follows, with probability $1-2K\alpha$ : \begin{equation} \begin{split}
        & \sup_{\bvec \in B} \left | \sum_{i=1}^K i b_{i+1} \left (2 F_{k-i+1}(b_{i+1}) - F_{k-i}(b_{i+1}) - F_{K-i+1}(b_i) \right ) \right . \\
        &\hspace{30pt} \left . - i b_{i+1} \left (2 \tilde{F}_{k-i+1}(b_{i+1}) - \tilde{F}_{k-i}(b_{i+1}) - \tilde{F}_{K-i+1}(b_i) \right ) \right | \leq K \epsilon + K^{3/2} \epsilon + K \epsilon 
    \end{split} \end{equation}

Combining equations therefore yields with probability $1- 2K \alpha$ \begin{equation} \label{eq : concentration estimated utility explore then commit}
    \sup_{b\in B} \left | \underset{\betavec \sim \mathcal{D}}{\Expectation} (\bvec,\betavec) - \tilde{u}^{T_{expl}} (\bvec) \right |  \leq 6 K^{3/2} \epsilon
\end{equation}
Where $\epsilon = \sqrt{ 2 \frac{\ln \left ( \frac{2}{\alpha} \right ) + \log \left ( e \floor{T_{expl}/K} \right ) }{ \floor{T_{expl}/K}}} $.

Then notice from \autoref{algorithm : Explore then commit} that for all $t > T_{expl}$, all the bids $\bvec^t$ are the same (because they are the solution of the same maximization problem). By optimality we get $\tilde{u}^{T_{expl}} ( \bvec^{T_{expl}} ) \geq \tilde{u}^{T_{expl}} (\bvec^\star)$.
Combining with \eqref{eq : concentration estimated utility explore then commit} then yields for all $t \geq T_{expl}$ with probability $1-2K\alpha$: 
\begin{equation}
 \Expectation \left [ u (\bvec^\star,\betavec) \right ] - \Expectation \left [ u (\bvec^t,\betavec) \right ] \leq  6 K^{3/2} \epsilon 
\end{equation}
By summing over $T_{expl} \leq t \leq T$ and recognizing the formula of $R_T^{com}$ we get, with probability $1-2K\alpha$:
\begin{equation}
        R_T^{com} \leq  12 K^{5/3} T^{2/3}  \sqrt{\ln \left ( \frac{2}{\alpha} \right ) + \ln (e \floor{T_{expl}/K} ) }.
\end{equation}
Since $T_{expl}= K^{2/3} T^{2/3}$.

And from \eqref{eq : bound explo regret} we get:
\begin{equation}
        R_T^{expl} \leq K^{5/3} T^{2/3}
\end{equation}

Choosing $\alpha = \frac{1}{T^2}$ and summing $R_T^{com}$ and $R^{expl}_T$  yields the desired regret bounds.

\end{proof}

\subsubsection{Improved bandit algorithm}

We now move to the improved bandit algorithm, which can allow for a better regret bound when the instance it faces makes it possible.

We restate the theorem providing upper bounds on the regret in the bandit setting, including the instance-dependent bounds. Before providing the proof, we introduce several instance-dependent quantities which are relevant for the instance-dependent bounds and the proofs. 

\SmartAlgoBanditUniform* 

\begin{proof}

    For the analysis, we will consider two cases, one that we call the worst-case, and one that is the good instance. We talk about good instances if during the first phase of the algorithm for some $t^\star$, $\Delta^{t^\star}$ is positive, and if not, it is a worst-case instance. 

    \paragraph{Worst-case} Let us first analyse what happens in worst-case instances : $\forall t \in [T_{expl}], \Delta^t\leq 0$. As for the proof of \autoref{theorem : explore then commit guarantee}, we decompose the regret into exploration and rest regret. 
    \begin{equation}
        R_T = R_{T}^{expl} + R_{T}^{com}
    \end{equation}
    We can, by noticing the utility is positive and smaller than $K$, upper bound the exploration regret by $KT_{expl}$, which yields \begin{equation}
         R_{T}^{expl} \leq K^{5/3}T^{2/3}
    \end{equation}

    In order to provide a similar analysis of the regret of the second phase as for \autoref{algorithm : Explore then commit}, we need to show that after the exploration phase, the bidder has effectively observed enough samples of each component of $\betavec \sim \mathcal{D}$ on the right intervals (because observation is no longer guaranteed on $[0,1]$). 

    We recall the formula of the marginal CDF estimators for $k \in [K]$:
    \begin{equation}
    \forall x \in [0,1], \tilde{F_k} \left ( x \right ) := \frac{\sum_{j=1}^t \indicator\left \{ x \in (b_{K+2 - k}^t, b_{K+1 - k}^t ] \right \} \indicator \left \{ \beta_k^t \leq x \right \}}{\sum_{j=1}^t \indicator\left \{ x \in (b_{K+2 - k}^t, b_{K+1 - k}^t ] \right \}}
\end{equation}

    It suffices to notice that during the exploration phase, at time $t$, when $k_t := t - K\floor{t/K} +1$, the interval corresponding to $\tilde{F}_{K-k_t+1}$, that reads $\indicator\left \{ x \in (b_{ k_t+1}^t, b_{ k_t}^t ] \right \}$ strictly contains $ \mathcal{I}^{t}_{k_t} \cup \mathcal{I}^{t}_{k_t+1}  $, and by monotonicity of the intervals, contains  $ \mathcal{I}^{t'}_k \cup \mathcal{I}^{t'}_{k+1}  $ for any $t'\geq t$.  
    We can therefore write : For any $t \in [T]$, for any $k \in [K]$, $\forall x \in \mathcal{I}^{t}_k \cup \mathcal{I}^{t}_{k+1}$, $\sum_{j=1}^t \indicator\left \{ x \in (b_{k_t+1}^t, b_{k_t}^t ] \right \} \geq \min (\floor{t/K},\floor{T_{expl}/K})$.

    Notice from \eqref{eq : estimate of expected utility uniform price auction} that for $t' \geq t$ , for $\bvec \in \prod_{k \in [K]} \mathcal{I}^{t'}_k$, $\tilde{F}_{K-k_t+1}$ is only evaluated on $ \mathcal{I}^{t'}_k \cup \mathcal{I}^{t'}_{k+1}$. The previous statement is therefore sufficient for the concentration results of \autoref{theorem : explore then commit guarantee} to be reproduced. 

    The only modification needed is to notice that we can only guarantee that the optimal bid $\bvec^\star$ remains in the $\prod_{k \in [K]} \mathcal{I}^{t'}_k$ with probability $1-K\alpha$ at each time. The probability of the bad event "the best bids are out of our confidence intervals" is at most $K\alpha T$

    We can therefore obtain that with probability $1-2 K\alpha T$ :
    \begin{equation}
        R_T^{com} \leq  12 K^{5/3} T^{2/3}  \sqrt{\ln \left ( \frac{2}{\alpha} \right ) + \ln (e \floor{T_{expl}/K} ) }.
\end{equation}

Choosing $\alpha = \frac{1}{T^3}$ and summing $R_T^{com}$ and $R^{expl}_T$  yields the desired worst-case regret bounds of $K^{5/3}T^{2/3}$.

\paragraph{Instance dependent}

In this section of the proofs, we examine the case where for some $t^* \in [T_{expl}]$, we have $\Delta^{t^*} >0$. First note that because intervals are \emph{shrinking}, this implies that for all $ t \geq t^*$, $\Delta^t >0$. 

As for the previous case, we focus our attention on how samples that allow us to estimate the marginal CDFs are collected. The same proof as in the previous section holds to show that, for times $t \leq t^*$, at least $\floor{t/K}$ samples of each $\beta_{K-k+1}$ are observed, on the relevent intervals: $\mathcal{I}^{t}_k \cup \mathcal{I}^{t}_{k+1}$. \newline 
We now focus on the behavior of the samples in the \emph{else} statement, i.e. when $t \geq t^\star$. Let $t \in [T]$ such that $\Delta^t >0$ and denote $k'_t := t - 2K \floor{t/2K} +2$ and $k_t := \floor{k'_t  / 2}$. Then notice that if $k'_t$ is even $ \mathcal{I}^{t}_{k_t} \subseteq (b_{k'_t+1}^t,b_{k'_t+1}^t] $ and if $k'_t$ is odd $ \mathcal{I}^{t}_{k_t+1} \subseteq ( b_{k'_t+1}^t,b_{k'_t+1}^t ] $. 
Therefore, after one round (i.e. $2K$ steps) the bidder observes at least one sample of $\beta_{K-k+1}$ on $ \mathcal{I}^{t}_k \cup \mathcal{I}^{t}_{k+1} $.

We can therefore write the two following concentration inequalities as in the proof of \autoref{algorithm : Explore then commit}, here on specific intervals, and in any time variant. 
For any $t \in [T]$ and $k \in [K]$ :

The first from \cite{massart1990tight}: 
    \begin{equation} \label{eq : massart neme bis}
        \forall k\in [K], \Proba \left ( \sup_{x \in \mathcal{I}^{t}_k \cup \mathcal{I}^{t}_{k+1} } \left |  \tilde{F}_{K-k+1}^{t} (x)- F_{K-k+1}(x) \right | \leq \sqrt{ \frac{\ln \left ( \frac{2}{\alpha} \right )}{2 \floor{t/2K}}} \right ) \geq  1 - \alpha 
    \end{equation}

The second : 
\begin{equation} \label{eq : concentration vriance aare bis}
    \Proba \left ( \sup_{ (a,b)^2 \in \mathcal{I}^{t}_k \cup \mathcal{I}^{t}_{k+1} } \left | \frac{ \tilde{F}_{K-k+1}^{t} (b) - \tilde{F}_{K-k+1}^{t} (a) - F_{K-k+1}(b) + F_{K-k+1}(a) }{\sqrt{ F_k(b) - F_k(a)}}\right | \leq  \epsilon \left ( \floor{t/2K} \alpha \right )\right ) \geq  1 - \alpha 
\end{equation}
Where $\epsilon ( t , \alpha) = \sqrt{ 2 \frac{\ln \left ( \frac{2}{\alpha} \right ) + \log \left ( e t \right ) }{ t}} $. 

This leads, using the same procedure as in \autoref{theorem : explore then commit guarantee} to with probability $1- 2K \alpha$ \begin{equation} \label{eq : concentration estimated utility explore then commit bis}
    \sup_{\bvec \in \prod_{i\in [K]} \mathcal{I}_k^t} \left | \underset{\betavec \sim \mathcal{D}}{\Expectation} \left [ u (\bvec,\betavec) \right ] - \tilde{u}^{t} (\bvec) \right |  \leq 6 K^{3/2} \epsilon
\end{equation}
Where $\epsilon = \sqrt{ 2 \frac{\ln \left ( \frac{2}{\alpha} \right ) + \log \left ( e \floor{T_{expl}/K} \right ) }{ \floor{T_{expl}/K}}} $.

Using the optimality of $u^t_{max}$, and the fact that as long as \eqref{eq : concentration estimated utility explore then commit bis} is true, $\bvec^\star$ remains in $\prod_{i\in [K]} \mathcal{I}_k^t$ ensures that, for $t \geq t'$ with proba $1-tK\alpha$ : \begin{equation}
    \underset{\betavec \sim \mathcal{D}}{\Expectation} \left [ u(\bvec^t,\betavec) \right ] \geq \underset{\betavec \sim \mathcal{D}}{\Expectation} \left [ u(\bvec^\star,\betavec) \right ] + 12 K^{3/2} \epsilon^t
\end{equation}

The cumulative regret therefore follows by summing and choosing $\alpha = \frac{1}{T^3}$. 
\begin{equation}
    R_T \leq K t^* + \sum_{t=t^*}^T 12 K^{3/2} \epsilon^t \leq K t^* + \sum_{t=t^*}^T 12 K^2 \sqrt{\frac{12 \log{t} }{t}} 
\end{equation}

All that is left is to show that we can upper bound $t^*$ based on some instance-dependent quantity. We introduce these quantities here: we denote $B^\star \subset B$ the set of maximizers of the expected utility and the optimal intervals $\mathcal{I}_k^\star \subset \mathbf{R}$ the smallest closed intervals such that $B^\star \subset \prod_{k=1}^K \mathcal{I}_k^\star$. We also introduce the interval gap $\Delta : = \min_{k\in \{2,\hdots,K\} } \left ( \min \mathcal{I}_k^\star - \max \mathcal{I}_{k+1}^\star \right )$, and the utility gaps: $ \delta_k^+ := \Expectation \left [ u(\bvec^\star,\betavec) \right ] - \underset{\bvec s.t. b_k\leq \min b^\star_k - \Delta/2}{\max} \Expectation \left [ u \left ( ( \bvec ) ,\betavec\right )\right ]$. and $ \delta_k^- := \Expectation \left [ u(\bvec^\star,\betavec) \right ] - \underset{\bvec s.t. b_k\geq \max b^\star_k + \Delta/2}{\max} \Expectation \left [ u \left ( ( \bvec ) ,\betavec\right )\right ]$. 

Notice from equation \eqref{eq : concentration estimated utility explore then commit bis}, that necessarily, as soon as $12 K^{3/2} \epsilon^t \leq \min ( (\delta_k+,\delta_k^- )_{k \in [K]})$,for all $k \in [K]$, the intervals $\mathcal{I}_k^t$ are at most $\Delta/2$ wider than $\mathcal{I}_k^\star$, and therefore $\Delta^t >0$. Therefore, $t^* \leq  \frac{12 K}{\min ( (\delta_k+,\delta_k^- )_{k \in [K]})}^{2}$.

With these quantities, we can now write the following regret bound, which completes the proof : 

\begin{equation}
    R_T \leq \tilde{\mathcal{O}} \left ( \min \left (K^{5/3}T^{2/3},  \frac{12 K}{\min ( (\delta_k+,\delta_k^- )_{k \in [K]},\max(\Delta,0)}^{2} + K^2 \sqrt{T} \right )\right ) 
\end{equation} 
    
\end{proof}

\section{Regret lower bouds} \label{app : proofs regret lower bounds}
This section contains the proofs of the main regret lower bounds for the discriminatory (first price) and uniform price auctions, namely \autoref{lemma : lower bound first price} and \autoref{thm : uniform price auction lower bound regret}. We construct hard instances and use information-theoretic arguments to establish fundamental limits on the achievable regret in these settings. The techniques used here are inspired by and extend prior work, as referenced in the main text.
The \emph{difficult instance} that we consider is significantly more straightforward to describe for the proof of \autoref{lemma : lower bound first price}. We therefore begin with this one, as it should make understanding both proofs easier.

\subsection{Lower bound for the first price auction}

Let us restate the lemma before providing the proof.

\LemmaLowerBoundFirstPrice*

\begin{proof}

The first-price auction is a special case of the discriminatory auction when only one item is available. The utility simply writes: $u(b,\beta)=\indicator[b \geq \beta] \left ( v-b\right )$. Notably, when $\beta$ has a random distribution of Cumulative Distribution Function $F$, $\mathbb{E} [ u(b,\beta) ]= F(b) \left ( v-b\right )$.

Throughout this proof, we will use these notations, with the bid of the bidder $b \in [0,1]$ and the opposing bid $\beta \in [0,1]$. 

To prove a lower bound on the regret of learning in the repeated first price auction, we use techniques similar to those in \cite{cesa2024transparency} (ie, we show that in this problem we can embed $T^{1/3}$ "bandit arms" into our continuous actions space).

\paragraph{Outline of the proof}

We devise a base probability distribution of $\beta$, the adversary's bid, which is described by its CDF $F$. This distribution is such that the utility of the bidder is constant on a wide range of bids (ie, $[0,c]$ with $c$ a constant). From this base probability, we derive a family of  $(\tilde{F}_i)$, multiple slightly perturbed versions of $F$. The main characteristic of the perturbation is that it leaves the CDF of $F$ unchanged outside of an interval of width $\sim T^{-1/3}$. We provide $\tilde{\mathcal{O}} \left ( T^{1/3} \right )$ of such perturbation. The instance we consider is a bidder such that the distribution of the opposing bids is drawn uniformly at random amongst the perturbed distribution. Quantifying the distance between the feedback distribution generated by each perturbed distribution allows us to prove the lower bound.

The base distribution $\mathcal{D}$ is characterized by its CDF $F$ that we define to be: \begin{equation}F(b)=\frac{1}{3(1-b)} \indicator[b < \frac{1}{3}] +  \left ( \frac{1}{4} + \frac{3b}{4} \right )\indicator[b \geq \frac{1}{3}]\end{equation}
It is easy to check that $\Expectation_{\beta \sim \mathcal{D}} \left [ u(\cdot, \beta) \right ]$ is constant on $[0,\frac{1}{3}]$, and is decreasing on $[\frac{1}{3},1]$.

We define the intervals $\left ( \mathcal{I}_i^T\right )_{(i \in \left [ 0,\left \lfloor 3 T^{1/3} \right \rfloor \right ] )}$ on which the $i^{\text{th}}$ perturbation takes place as follows : $\mathcal{I}_i^T:=\left [\frac{i}{9} T^{-1/3}, \frac{i+1}{9} T^{-1/3} \right )$.\newline

Let's then define the corresponding perturbations $\varepsilon_i^T:[0,1] \longrightarrow \mathbb{R}$: 
\begin{equation}
    \varepsilon_i^T(b)=  \left ( F\left ((i+1) T^{-1/3})\right ) - F(b) \right  ) \indicator \left \{b \in [i \in \mathcal{I}^T_i] \right \}
\end{equation}

We can now define $F_i^T:= F + \varepsilon_i^T$, the $i^{\text{th}}$ perturbed CDF, and $\mathcal{D}_i^T$ the corresponding distribution. It is straightforward to see that $F_i^T$ and $F$ coincide outside of $\mathcal{I}_i^T$. It is also clear, since $F_i^T$ is constant on  $\mathcal{I}_i^T$, that the utility is maximized in $\frac{i}{9} T^{-1/3}$.

We can now describe the instance faced by the bidder :
Let $T \in \mathbb{N}$, the difficult instance is as follows: 
\begin{itemize}
\item An index $i \in \left [ \floor{3 T^{1/3}} \right ] $ is drawn uniformly at random.
\item For t in $[T]$: \begin{itemize}
\item The bidder picks a bid $b^t$ based on the past feedback $Z_1,...,Z_{t-1}$
\item The adversaries bid $\beta^t$ is drawn according to  $\mathcal{D}_i^T$, the distribution with CDF $F_i^T$.
\item The bidder receives the feedback $Z_t=\indicator \{ b^t \geq \beta^t \}$, and the utility $u_t:=\left(v-b^t\right) \indicator[b^t \geq \beta^t]$
\end{itemize}
\end{itemize}

From here onward, we focus on determining the minimum regret any algorithm must incur when facing the previously described instance of the first-price auction. We also restrict our analysis to \emph{sensible} algorithms, that is, whose bids are always in $[0,\frac{1}{2}]$. These assumptions are relaxed at the end of the proof.

\paragraph{Some Notations:}
We denote $\mathbb{P}^0$ the probability measure over $\beta$ when its drawn according to $\mathcal{D}$ and $\mathbb{P}^i$ the probability measure of $\beta$ conditionally on the value of $i$ (ie when its drawn according to $\mathcal{D}_i^T$). The notation extends naturally to expectations $\mathbb{E}^0$ and $\mathbb{E}^i$ as well as regret $R_T^0$ and $R_T^i$.

For $t \in [T]$, we denote $Y_t : [0,1] \longrightarrow \{0,1\}$ such that $Y_t(b)=\indicator \{ b \geq \beta_t\}$. And $ Z_t$ is the feedback received at time $t$, it is a random variable and depends on the bid played by the algorithm $\mathcal{A}$ at time $t$. For now, we restrict ourselves to deterministic algorithms, which allows us to write $b^t: \{0,1\}^{t-1} \longrightarrow [0,1]$ the function which maps the feedback received up to time $t-1$ to the bid played by the algorithm $\mathcal{A}$ at time $t$.
With these notation, notice that  $Z_t = Y_t(b^t(Z_1,...,Z_{t-1}))$.

With these in mind, the following lemma constitutes the core of the proof. We postpone its proofs just after the end of this proof, essentially, one uses the fact that distributions of feedbacks $Z_t$ are too close to be distinguished. 

\begin{restatable}{lemma}{differenceExpected}

    For all $i \leq \ceil{3 T^{-1/3}}$ and any deterministic algorithm $\mathcal{A}$, denoting $N_T(i)$ the number of times algorithm $\mathcal{A}$ produces a bid in $\mathcal{I}_i^T$, we have :
    \begin{equation}\mathbb{E}^i\left [ N_T(i) \right ] -  \mathbb{E}^0\left [ N_T(i) \right ] \leq \frac{9}{40} T^{2/3} \sqrt{\mathbb{E}^0 [N_T(i)]}
    \end{equation}
    
\end{restatable}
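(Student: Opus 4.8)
The plan is to regard $N_T(i)$ as a deterministic function of the observed feedback string $(Z_1,\dots,Z_T)\in\{0,1\}^T$. This is legitimate because we have restricted to deterministic algorithms, so each bid $b^t$ is a fixed function of $Z_{1:t-1}$ and hence $N_T(i)=\sum_{t}\mathds{1}\{b^t\in\mathcal{I}_i^T\}$ is measurable with respect to the feedback. Since $0\le N_T(i)\le T$, the standard change-of-measure inequality applied to the two laws of the full feedback sequence gives
\[
\mathbb{E}^i\!\left[N_T(i)\right]-\mathbb{E}^0\!\left[N_T(i)\right]\;\le\; T\cdot \mathrm{TV}\!\left(\mathbb{P}^i,\mathbb{P}^0\right)\;\le\; T\sqrt{\tfrac12\,\mathrm{KL}\!\left(\mathbb{P}^0\,\|\,\mathbb{P}^i\right)},
\]
the last step being Pinsker's inequality. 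Everything therefore reduces to controlling the KL divergence between the feedback laws, and we use the KL-direction $\mathrm{KL}(\mathbb{P}^0\|\mathbb{P}^i)$ precisely because its chain-rule expansion is taken under $\mathbb{P}^0$, which is where $N_T(i)$ naturally appears.

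For the divergence I would invoke the chain rule. Conditionally on the history $Z_{1:t-1}$ the bid $b^t$ is determined, and $Z_t=\mathds{1}\{b^t\ge\beta^t\}$ is Bernoulli with parameter $F(b^t)$ under $\mathbb{P}^0$ and $F_i^T(b^t)$ under $\mathbb{P}^i$. The key structural fact is that $F$ and $F_i^T$ coincide outside $\mathcal{I}_i^T$, so the per-round conditional KL vanishes whenever $b^t\notin\mathcal{I}_i^T$; this localizes the whole divergence to the rounds spent inside $\mathcal{I}_i^T$:
\[
\mathrm{KL}\!\left(\mathbb{P}^0\,\|\,\mathbb{P}^i\right)=\sum_{t=1}^T\mathbb{E}^0\!\left[\mathrm{KL}\!\left(\mathrm{Ber}(F(b^t))\,\|\,\mathrm{Ber}(F_i^T(b^t))\right)\right]\;\le\;\Big(\sup_{b\in\mathcal{I}_i^T}\mathrm{KL}\!\left(\mathrm{Ber}(F(b))\,\|\,\mathrm{Ber}(F_i^T(b))\right)\Big)\,\mathbb{E}^0\!\left[N_T(i)\right].
\]
To bound the supremum I would use $\mathrm{KL}(\mathrm{Ber}(p)\|\mathrm{Ber}(q))\le (p-q)^2/\big(q(1-q)\big)$ with $p=F(b)$ and $q=F_i^T(b)$, together with two facts: (i) $F_i^T$ is constant on $\mathcal{I}_i^T$, so $|F(b)-F_i^T(b)|$ is at most the oscillation of $F$ across $\mathcal{I}_i^T$, i.e.\ at most the Lipschitz constant of $F$ on $[0,\tfrac13]$ times the width $\tfrac19 T^{-1/3}$ of the interval; and (ii) since every relevant $\mathcal{I}_i^T$ lies inside $[0,\tfrac13]$, the value $q=F_i^T(b)$ stays in $[\tfrac13,\tfrac12]$, so $q(1-q)$ is bounded below by a constant (this is where restricting to sensible bids in $[0,\tfrac12]$ is convenient). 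This gives a per-round bound of the form $c\,T^{-2/3}$, hence $\mathrm{KL}(\mathbb{P}^0\|\mathbb{P}^i)\le c\,T^{-2/3}\,\mathbb{E}^0[N_T(i)]$ and finally
\[
\mathbb{E}^i\!\left[N_T(i)\right]-\mathbb{E}^0\!\left[N_T(i)\right]\;\le\; T\sqrt{\tfrac{c}{2}\,T^{-2/3}\,\mathbb{E}^0\!\left[N_T(i)\right]}\;=\;\sqrt{\tfrac{c}{2}}\;T^{2/3}\sqrt{\mathbb{E}^0\!\left[N_T(i)\right]},
\]
which is exactly the claimed form.

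The conceptual steps — the deterministic reduction, the change of measure, Pinsker, and the chain rule that confines the divergence to rounds inside $\mathcal{I}_i^T$ — are routine. The only genuine obstacle is quantitative: to reach the advertised constant $\tfrac{9}{40}$ one must carefully compute the Lipschitz constant of $F(b)=\tfrac{1}{3(1-b)}$ on $[0,\tfrac13]$ (its slope is at most $\tfrac34$), the precise lower bound on $q(1-q)$, and possibly sharpen the crude $(p-q)^2/\big(q(1-q)\big)$ estimate by a second-order expansion, so as to certify $\sqrt{c/2}\le\tfrac{9}{40}$. Everything else is bookkeeping.
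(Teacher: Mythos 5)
Your proposal is correct and follows essentially the same route as the paper's proof: a change of measure via total variation, Pinsker's inequality, the KL chain rule for deterministic algorithms (which localizes the divergence to rounds with $b^t \in \mathcal{I}_i^T$), the Bernoulli bound $\mathrm{KL}(\mathrm{Ber}(p)\,\|\,\mathrm{Ber}(q)) \le (p-q)^2/\bigl(q(1-q)\bigr)$ with $q(1-q)\ge \tfrac{2}{9}$ from the restriction to bids in $[0,\tfrac12]$, and an $O(T^{-1/3})$ bound on the CDF gap over $\mathcal{I}_i^T$. The only differences are bookkeeping: the paper applies Pinsker per round to the prefix feedback laws and sums $\tfrac{9}{40}T^{-1/3}\sqrt{\mathbb{E}^0[N_t(i)]}$ over $t$, whereas you apply it once to the full sequence, and your Lipschitz estimate (slope of $F$ at most $\tfrac34$, width $\tfrac19 T^{-1/3}$, hence gap at most $\tfrac{1}{12}T^{-1/3}$) is in fact slightly sharper than the paper's explicit computation (gap at most $\tfrac{3}{20}T^{-1/3}$ in its auxiliary lemma), so it already certifies $\sqrt{c/2} \le \tfrac18 \le \tfrac{9}{40}$ without the second-order sharpening you worried might be needed.
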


We now use the fact that we know that in the hard instance, $i$ is drawn uniformly at random. 

Averaging for $i \leq  \ceil{3T^{1/3}}$, we get:

\begin{align}\frac{\sum_{i=1}^{\ceil{3T^{1/3}}} \mathbb{E}^i\left [ N_T(i) \right ]  }{\ceil{3T^{1/3}}} & \leq \frac{\sum_{i=1}^{\ceil{3T^{1/3}}} \mathbb{E}^0\left [ N_T(i) \right ]  }{\ceil{3T^{1/3}}}  + \frac{9}{40} T^{2/3} \sqrt{ \frac{\sum_{i=1}^{\ceil{3T^{1/3}}} \mathbb{E}^0 [N_T(i)]  }{\ceil{3T^{1/3}}}  } \\
& \leq \frac{ T  }{\ceil{3T^{1/3}}}  + \frac{9}{40} T^{2/3} \sqrt{ \frac{T   }{\ceil{3T^{1/3}}}  }  \end{align}

\begin{align}  R_T &= \frac{1}{\ceil{3T^{1/3}}} \sum_{i=1}^{\ceil{3T^{1/3}}} r_i \mathbb{E}^i \left [ T- N_T(i)\right ] \\
    & \geq \min_{i} r_i \left ( T- \frac{ T  }{\ceil{3T^{1/3}}}  - \frac{9}{40} T^{2/3} \sqrt{ \frac{T   }{\ceil{3T^{1/3}}}  } \right )  \\
    & \geq  \min_{i} r_i \left ( T- \frac{ T  }{3T^{1/3}}  -\frac{9}{40} T^{2/3} \sqrt{ \frac{T   }{3T^{1/3}}  } \right ) \\
    & \geq  \min_{i} r_i \left( T - \frac{1}{3} T^{2/3} - \frac{3\sqrt{3}}{40} T \right )
\end{align}

Where $r_i$ is the sub-optimality of not playing interval $i$ when it is optimal. 

$$r_i = \frac{1}{3\left (1-\frac{i+1}{9}T^{-1/3} \right ) } \left (1-\frac{i}{9}T^{-1/3} \right ) -\frac{1}{3} $$
$$r_i =\frac{1}{9} T^{-1/3} \frac{1}{3\left (1-\frac{i+1}{9}T^{-1/3} \right ) } \geq \frac{1}{27} T^{-1/3}  $$

Reinjecting this lower bound on $r_i$ yields the desired lower bound of $\Omega \left ( T^{2/3} \right ) $. 
This bound is valid for any algorithm which issues bids in $[0,\frac{1}{2}]$, \autoref{lemma : localized bid} generalises it to any deterministic bid. 
The lower bound generalizes to random algorithms by Yao's minimax principle. 

\end{proof}

\subsubsection{Technical lemma}

\differenceExpected*

\begin{proof}

We want to upper bound the following quantity : 
\begin{equation}
 \mathbb{E}^i\left [ N_T(i) \right ] -  \mathbb{E}^0\left [ N_T(i) \right ] =  \sum_{t=1}^T \mathbb{P}^i \left [ b^t \left (Z_1,..,Z_{t-1} \right ) \in \mathcal{I}_i^T \right ] - \mathbb{P}^0 \left [ b^t \left (Z_1,..,Z_{t-1} \right ) \in \mathcal{I}_i^T \right ]  \label{eq: lower bound suboptimal arms}
\end{equation}

We will upper each term of the sum independently. The algorithm can be anything deterministic; therefore, $b^t(.)$ can be any function, which is why we bound the previous quantity as follows :

\begin{align}
 \mathbb{P}^i \left [ b^t \left (Z_1,..,Z_{t-1} \right ) \in \mathcal{I}_i^T \right ] - \mathbb{P}^0 \left [ b^t \left (Z_1,..,Z_{t-1} \right ) \in \mathcal{I}_i^T \right ] & \hspace{12em} \\
   \hspace{8em}  \leq \left \lVert \mathbb{P}^i_{( Z_1,..,Z_{t-1} )} - \mathbb{P}^0_{(Z_1,..,Z_{t-1})} \right \rVert_{TV}  \\
   \hspace{8em}  \leq \sqrt{\frac{1}{2} KL \left ( \mathbb{P}^i_{( Z_1,..,Z_{t-1} )} ; \mathbb{P}^0_{(Z_1,..,Z_{t-1})} \right ) } \\
   \leq \sqrt{\frac{1}{2} \left ( KL\left ( \mathbb{P}^i_{Z_1} ; \mathbb{P}^0_{Z_1} \right ) + \sum_{j=2}^t KL \left ( \mathbb{P}^i_{Z_j \mid Z_1,..,Z_{j-1}} ; \mathbb{P}^0_{Z_j \mid Z_1,..,Z_{j-1}} \right ) \right )} \label{eq : TV upper bounded by sum of conditional KL}
\end{align}

Where $\mathbb{P}^i_{( Z_1,..,Z_{t-1} )}$ is the push-forward probability of $( Z_1,..,Z_{t-1} )$.

Since we focus on deterministic algorithms, $b^j( Z_1,..,Z_{j-1})$ conditionally on $Z_1,..,Z_{t-1} $ is fixed. 
\begin{align}
    KL \left ( \mathbb{P}^i_{Z_j \mid Z_1,..,Z_{j-1}} ; \mathbb{P}^0_{Z_j \mid Z_1,..,Z_{j-1}} \right ) & = \mathbb{E}^0 \left [ \ln \left ( \frac{\mathbb{P}^0 \left [ Z_j =0 \mid Z_1,..,Z_{j-1} \right ]}{\mathbb{P}^i \left [  Z_j =0 \mid Z_1,..,Z_{j-1}   \right ] } \right ) \mathbb{P}^0 \left [ Z_j =0 \mid Z_1,..,Z_{j-1}  \right ] \right. \\
    & + \left. \ln \left ( \frac{\mathbb{P}^0 \left [Z_j =1 \mid Z_1,..,Z_{j-1} \right ]}{\mathbb{P}^i \left [  Z_j =1 \mid Z_1,..,Z_{j-1}   \right ] } \right ) \mathbb{P}^0 \left [ Z_j =1 \mid Z_1,..,Z_{j-1}  \right ] \right ] \\
     & = \mathbb{E}^0 \left [ \left ( \ln \left ( \frac{\mathbb{P}^0 \left [ Y_j(b^j) =0 \right ]}{\mathbb{P}^i \left [  Y_j(b^j) =0   \right ] } \right ) \mathbb{P}^0 \left [ Y_j(b^j) =0 \right ] \right. \right. \\
    & + \left. \left .\ln \left ( \frac{\mathbb{P}^0 \left [ Y_j(b^j) =1 \right ]}{\mathbb{P}^i \left [  Y_j(b^j) =1   \right ] } \right ) \mathbb{P}^0 \left [ Y_j(b^j) =1  \right ] \right ) \indicator \left  \{ b^j \in \mathcal{I}_i^T  \right \} \right ] \\
         & \leq \frac{\left ( F^0(b^j) -  F^i (b^j)\right )^2}{F^i (b^j) \left ( 1 - F^i (b^j) \right )} \Proba^0 \left  ( b^j \in \mathcal{I}_i^T  \right ) \label{eq: upper bound KL_bernoulli}
\end{align}

Where $ \indicator \left  \{ b^j \in \mathcal{I}_i^T  \right \}$ appears because for other values of $b^j$, the logarithmic terms are $0$. 

We obtain \eqref{eq: upper bound KL_bernoulli} from the fact that the $KL$ between two Bernoulli $\mathcal{P},\mathcal{Q}$ of parameters $p,q$ is $KL(\mathcal{P},\mathcal{Q})\leq \frac{\left (p-q\right ) ^2}{q(1-q)}$. 

Since $F^i$ is non-decreasing and $F^i(b^t) \in [\frac{1}{3},\frac{1}{2}]$, by concavity of $x \mapsto  x(1-x)$ on this interval.

\begin{align}
    KL \left ( \mathbb{P}^i_{Z_j \mid Z_1,..,Z_{j-1}} ; \mathbb{P}^0_{Z_j \mid Z_1,..,Z_{j-1}} \right ) & \leq \frac{\left ( F^0(b^t) -  F^i (b^t)\right )^2}{\frac{2}{9}} \Proba^0 \left  ( b^t \in \mathcal{I}_i^T  \right )  \\ 
    & \leq \frac{9}{2} \left ( F^0\left (\frac{i}{9}T^{-1/3} \right ) -  F^0 \left (\frac{i+1}{9}T^{-1/3} \right )\right )^2 \Proba^0 \left  ( b^t \in \mathcal{I}_i^T  \right ) \label{eq : last_ligne KL}
\end{align}

We can then use \autoref{lemma : low bound squared fidd cdf intervals} in combination with \eqref{eq : TV upper bounded by sum of conditional KL} and \eqref{eq : last_ligne KL} to obtain: 
\begin{align}
    \mathbb{P}^i \left [ b^t \left (Z_1,..,Z_{t-1} \right ) \in \mathcal{I}_i^T \right ] - \mathbb{P}^0 \left [ b^t \left (Z_1,..,Z_{t-1} \right ) \in \mathcal{I}_i^T \right ] & \leq  \frac{9}{40} T^{-1/3} \sqrt{\mathbb{E}^0 [N_t(i)]}
\end{align}

We can therefore re-inject in \eqref{eq: lower bound suboptimal arms} to get : 
\begin{equation}
     \mathbb{E}^i\left [ N_T(i) \right ] -  \mathbb{E}^0\left [ N_T(i) \right ] \leq \sum_{t=1}^T \frac{9}{40} T^{-1/3} \sqrt{\mathbb{E}^0 [N_t(i)]} \leq \frac{9}{40} T^{2/3} \sqrt{\mathbb{E}^0 [N_T(i)]}
\end{equation}

\end{proof}

\begin{lemma} \label{lemma : low bound squared fidd cdf intervals}
    For $i\leq \ceil{3T^{1/3}}$, we have the following inequality : 
    \begin{equation}
        \left( F^0(\frac{i}{9} T^{-1/3}) - F^0(\frac{i+1}{9} T^{-1/3}) \right)^2 \leq \frac{9 T^{-2/3}}{400}
    \end{equation}
\end{lemma}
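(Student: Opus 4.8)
The plan is to bound the increment of $F^0 = F$ across the interval $[\frac{i}{9}T^{-1/3}, \frac{i+1}{9}T^{-1/3}]$ by exploiting that $F$ is globally Lipschitz with constant $\frac{3}{4}$ on $[0,1]$, together with the fact that this interval has width exactly $\frac{1}{9}T^{-1/3}$. To obtain the Lipschitz constant I would differentiate the two pieces of $F$: on $[0,\frac{1}{3})$, $F(b) = \frac{1}{3(1-b)}$ gives $F'(b) = \frac{1}{3(1-b)^2}$, which is increasing in $b$ and is bounded on this piece by its limiting value at $\frac{1}{3}$, namely $\frac{1}{3(2/3)^2} = \frac{3}{4}$; on $[\frac{1}{3},1]$, $F(b) = \frac{1}{4} + \frac{3b}{4}$ has constant derivative $\frac{3}{4}$. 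Since the two pieces agree in value ($\frac{1}{2}$) and in slope ($\frac{3}{4}$) at $b = \frac{1}{3}$, $F$ is $C^1$ and $\frac{3}{4}$-Lipschitz throughout $[0,1]$. Working with this global constant keeps the argument clean, because for $i \le \ceil{3T^{1/3}}$ the upper endpoint $\frac{i+1}{9}T^{-1/3}$ may slightly exceed $\frac{1}{3}$, and I want to avoid casework about which piece each endpoint lies in.

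With the Lipschitz bound in hand, the two partition points differ by $\frac{1}{9}T^{-1/3}$, so
$$\left| F^0\!\left(\tfrac{i+1}{9}T^{-1/3}\right) - F^0\!\left(\tfrac{i}{9}T^{-1/3}\right) \right| \le \frac{3}{4}\cdot\frac{1}{9}T^{-1/3} = \frac{1}{12}T^{-1/3}.$$
Squaring gives $\frac{1}{144}T^{-2/3}$, and since $\frac{1}{144} \le \frac{9}{400}$ the claimed inequality follows immediately (the bound in the statement is in fact loose by a constant factor, which is harmless since it is only used as an upper bound in \autoref{eq : last_ligne KL}).

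The only step requiring genuine care — though it remains elementary — is the uniform Lipschitz bound: one must use that $F'(b) = \frac{1}{3(1-b)^2}$ stays controlled only because $b$ remains near $\frac{1}{3}$ (it blows up as $b \to 1$), and confirm the $C^1$-matching at the junction so that no larger slope is introduced there. If one prefers to avoid the mean-value argument entirely, an equivalent route valid whenever both endpoints lie in the left piece is the direct computation $F(a') - F(a) = \frac{1}{3}\cdot\frac{a'-a}{(1-a')(1-a)}$ with $a' - a = \frac{1}{9}T^{-1/3}$ and $(1-a)(1-a') \ge (2/3)^2$, which again yields the $\frac{1}{12}T^{-1/3}$ bound.
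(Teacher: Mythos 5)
Your proposal is correct, and it takes a genuinely (if mildly) different route from the paper. The paper proves the lemma by direct computation: it substitutes $F^0(b)=\frac{1}{3(1-b)}$ at both grid points, simplifies the difference to $\frac{3T^{-1/3}}{\left( 9- i T^{-1/3} \right) \left( 9- \left(i+1 \right) T^{-1/3} \right)}$, and lower-bounds the denominator product by $20$ for $i\leq\ceil{3T^{1/3}}$ and $T>1$, which produces exactly the stated constant $\frac{9}{400}$; your fallback argument at the end --- the identity $F(a')-F(a)=\frac{1}{3}\cdot\frac{a'-a}{(1-a)(1-a')}$ with $(1-a)(1-a')\geq (2/3)^2$ --- is this same computation in different clothing. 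Your primary route instead establishes the global $\frac{3}{4}$-Lipschitz property of $F$ (bounding $F'(b)=\frac{1}{3(1-b)^2}\leq\frac{3}{4}$ on the left piece, noting the linear piece has slope exactly $\frac{3}{4}$, and checking the $C^1$ matching at $b=\frac{1}{3}$), then applies it to an interval of width $\frac{1}{9}T^{-1/3}$ to get $\frac{1}{144}T^{-2/3}$, which is strictly stronger than the claimed $\frac{9}{400}T^{-2/3}$. Your route buys two things: a sharper constant, and --- more substantively --- correctness at the boundary. For indices $i$ near $\ceil{3T^{1/3}}$ the points $\frac{i}{9}T^{-1/3}$ and $\frac{i+1}{9}T^{-1/3}$ can reach or exceed $\frac{1}{3}$, where $F^0(b)=\frac{1}{4}+\frac{3b}{4}$, so the paper's opening equality is not literally valid there (its conclusion survives only because the extended formula $\frac{1}{3(1-b)}$ has a larger slope past $\frac{1}{3}$ and hence overestimates the true increment); your global Lipschitz constant covers all of $[0,1]$ uniformly with no casework, as you correctly anticipated. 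The paper's computation, in exchange, yields the exact expression for the gap that explains where the constant $20$ (hence $\frac{9}{400}$) comes from; either argument fully suffices for the downstream use in \eqref{eq : last_ligne KL}.
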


\begin{proof}
\begin{align}
    \left( F^0 \left (\frac{i}{9} T^{-1/3}\right) - F^0\left(\frac{i+1}{9} T^{-1/3}\right) \right)^2 & = \left (\frac{1}{3\left ( 1- \frac{i}{9} T^{-1/3} \right )} - \frac{1}{3\left ( 1- \frac{i+1 }{9} T^{-1/3}  \right )} \right )^2\\
    & = \left ( 3 \frac{1}{ 9- i T^{-1/3} } - \frac{1}{ 9- \left (i+1 \right ) T^{-1/3} } \right )^2 \\
    & = \left( \frac{3 T^{-1/3}}{\left ( 9- i T^{-1/3} \right ) \left ( 9- \left (i+1 \right ) T^{-1/3} \right )}\right )^2 \\
    & \leq \frac{9 T^{-2/3}}{400} \label{eq : diff CDF last}
\end{align}
Where we get \eqref{eq : diff CDF last} from the fact that $i\leq \ceil{3 T^{1/3}} \implies \left ( 9- i T^{-1/3} \right ) \left ( 9- \left (i+1 \right ) T^{-1/3} \right ) \geq 20$, (given $T>1$ but thats always the case)
\end{proof}

\begin{lemma} \label{lemma : localized bid}
    Let $\mathcal{A}$ be any algorithm for the first price auction, their exists an algorithm $\mathcal{A}^*$ achieving a lower regret that only plays bids in $[0,\frac{1}{2}]$.
\end{lemma}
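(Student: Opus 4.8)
The plan is to show that clipping every bid of an arbitrary algorithm into $[0,\tfrac12]$ can only help, so that the $\Omega(T^{2/3})$ bound already established for sensible algorithms (those bidding in $[0,\tfrac12]$) transfers to all algorithms. Concretely, given $\mathcal{A}$ I would build $\mathcal{A}^\star$ that internally runs $\mathcal{A}$ and, at each round, if $\mathcal{A}$ intends to bid $b^t$, plays $\min(b^t,\tfrac12)$. The reduction logic is then immediate: since $R_T(\mathcal{A}^\star)\le R_T(\mathcal{A})$ and $\mathcal{A}^\star$ only bids in $[0,\tfrac12]$, the infimum of the regret over all algorithms equals the infimum over sensible ones, which is lower bounded by $\Omega(T^{2/3})$.

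Two structural facts drive the comparison. First, as already recorded in the proof, $b\mapsto \Expectation_{\beta\sim\mathcal D}[u(b,\beta)]$ is constant on $[0,\tfrac13]$ and decreasing on $[\tfrac13,1]$; hence for any $b\ge\tfrac12$ we have $\Expectation_{\beta\sim\mathcal D}[u(\tfrac12,\beta)]\ge \Expectation_{\beta\sim\mathcal D}[u(b,\beta)]$. Second, every perturbed CDF $F_i^T$ coincides with $F$ outside $\mathcal{I}_i^T\subseteq[0,\tfrac13+\tfrac19 T^{-1/3}]$, so for $b\ge\tfrac12$ one has $F_i^T(b)=F(b)$ for every $i$. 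This has two consequences: the expected utility of a clipped round is identical under all instances and equals its base-distribution value (so the monotonicity applies uniformly in $i$), and the feedback $Z^t=\indicator[b^t\ge\beta^t]$ collected on a bid $b^t>\tfrac12$ is a $\mathrm{Bernoulli}(F(b^t))$ variable whose law does not depend on the hidden index $i$.

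I would then run the comparison round by round. On rounds with $b^t\le\tfrac12$, $\mathcal{A}^\star$ plays exactly $b^t$, receives the genuine feedback, and incurs the same expected utility as $\mathcal{A}$. On rounds with $b^t>\tfrac12$, $\mathcal{A}^\star$ plays $\tfrac12$, and by the first fact its expected per-round utility is at least that of $b^t$. The main obstacle is feedback consistency: after clipping, $\mathcal{A}^\star$ observes $\indicator[\tfrac12\ge\beta^t]$ rather than $\indicator[b^t\ge\beta^t]$, and when $\beta^t>\tfrac12$ it cannot recover the bit $\mathcal{A}$ expects. This is exactly where the second fact is used: since that bit is an instance-independent $\mathrm{Bernoulli}(F(b^t))$ draw, $\mathcal{A}^\star$ can feed $\mathcal{A}$ a surrogate sample from the known law $\mathrm{Bernoulli}(F(b^t))$; by induction this keeps the joint distribution of the intended-bid sequence $(b^t)_t$ identical to its law in a genuine run of $\mathcal{A}$, so the per-round utility inequality integrates to $R_T(\mathcal{A}^\star)\le R_T(\mathcal{A})$. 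The surrogate draw makes $\mathcal{A}^\star$ randomized, but this is harmless: the $\Omega(T^{2/3})$ bound holds for every deterministic sensible algorithm, hence for any mixture, so it applies to $\mathcal{A}^\star$ as well (equivalently, one fixes the surrogate tape to derandomize and invokes Yao's principle, as is done at the end of the main proof).
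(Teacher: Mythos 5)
Your proposal is correct and takes essentially the same route as the paper's proof: you replace each out-of-range bid by an in-range bid with weakly larger expected utility (the paper clips to $\tfrac{1}{4}$, you to $\tfrac{1}{2}$; both dominate since the expected utility is decreasing on $[\tfrac{1}{3},1]$ and the perturbations vanish above $\tfrac{1}{2}$), and you feed $\mathcal{A}$ a simulated $\mathrm{Bernoulli}(F(b^t))$ bit in place of the genuine feedback, which preserves the trajectory law exactly because this law is instance-independent for $b^t > \tfrac{1}{2}$. Your explicit handling of the surrogate randomness (derandomizing the tape and invoking Yao's principle) is a detail the paper leaves implicit, but it does not constitute a different argument.
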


\begin{proof}
    For any algorithm $\mathcal{A}$, if at time $t$ the algorithms picks a bid $b^t$ outside of $[0,\frac{1}{2}]$, replace $b^t$ by $\tilde{b^t} = \frac{1}{4}$, ignore the feedback received from the auction and provide a simulated feedback drawn according to $(F(b^t))$ to $\mathcal{A}$.
    Because the feedback we provide follows the same distribution, and both are drawn independently from any other randomness (namely other $\beta^t$), on expectation, the algorithm behaves identically for the following time steps, and utility-wise the cumulative reward receive by the algorithms has strictly increased, since $u(\tilde{b}^t) \geq u(b^t)  $.

\end{proof}

\subsection{Lower bound for the uniform price auction}

This section is dedicated to showing the lower bound on the rate of learning in uniform price auctions. The lower bounds is valid when the learner is willing to acquire at least 3 items (ie $v_3 >0$, and $K\geq 3$). 

We restate the theorem : 
\LowerBoundUniform*

\begin{proof}

For simplicity, we prove the lower bounds in the uniform 3-item auction. The lower bound naturally extends to $K>3$ by considering adversary bid distribution where for all $k \leq K-3$, $\Proba (\beta_k =1) = 1$. The proof revolves around constructing a hard instance of the learning to bid problem, and then showing one cannot avoid regret, which grows as $T^{2/3}$ in this instance. 

\paragraph{The hard instance} 
In the instance of the problem, we focus on the bidder who has the following valuations $v_1,v_2,v_3=1,\frac{1}{3},\frac{1}{3}$. As before, $\betavec = (\beta_1,\beta_2,\beta_3)$ is the adversary's bid and for $k\in \{1,2,3\}$, $F_k$ is the CDF of $\beta_k$, we also denote $f_k$ the corresponding density. 
Since the quantity that matters in our problems are the marginals CDFs (\autoref{lemma : decomposed a one dimensional CDF}), we first define our hard instance by the densities and marginals (checking then that there indeed exists a distribution with these is then enough).

The overview of the hard instance is as follows: we characterize (via the marginals) a distribution $\mathcal{D}$ (called base instance) of adversary bids such that there exists an interval $I \subset [0,1]$ such that the set of maximizers of the expected utility can be written as $v_1,y,y$ for $y \in I$. We then build a family of $\tilde{\mathcal{O}} ( T^{1/3} )$ distributions based on perturbed versions of $\mathcal{D}$ (via additive perturbation of the marginals) for which there exists a single expected utility maximizer. The hard instance then consists in selecting uniformly at random one of the perturbed distributions when $t=0$.

The densities of the base instance are as follows: 
\begin{align*}
   f_3 (y) = \left \{ \begin{matrix}
        101 & \text{if $y \in [0,\frac{5}{900}]$} \\
        \frac{1}{3(v_3-y)} & \text{if $y \in (\frac{5}{900},\frac{1}{6}]$} \\
        0 & \text{if $y \in \left (\frac{1}{6},\frac{1}{3} \right ]$} \\
        \alpha_3 & \text{if $y \in \left (\frac{1}{3},1 \right ]$}
    \end{matrix} \right. \\
       f_2 (y) = \left \{ \begin{matrix}
        21 & \text{if $y \in [0,\frac{5}{900}]$} \\
        \frac{1}{3(v_2-y)} & \text{if $y \in (\frac{5}{900},\frac{1}{6}]$} \\
        0 & \text{if $y \in \left (\frac{1}{6},\frac{1}{3} \right ]$} \\
        \alpha_2 & \text{if $y \in \left (\frac{1}{3},1 \right ]$}
    \end{matrix} \right. \\
       f_1 (y) = \left \{ \begin{matrix}
        1 & \text{if $y \in [0,\frac{5}{900}]$} \\
        \frac{1}{3(v_2-y)} & \text{if $y \in (\frac{5}{900},\frac{1}{6}]$} \\
        0 & \text{if $y \in \left (\frac{1}{6},\frac{1}{3} \right ]$} \\
        \alpha_1 & \text{if $y \in \left (\frac{1}{3},1 \right ]$}
    \end{matrix} \right. 
\end{align*}

Where $\alpha_1,\alpha_2,\alpha_3$ are normalisation constant. \textit{The values of $\boldsymbol{\alpha}$ are not important because they define the density above $v_2$ and $v_3$, ie in regions of the bid space $B$ that we know are suboptimal from \autoref{lemma : truthfulness uniform}.}

This results in the following CDFs for $y \in \left [ 0, v_2 \right ]$ : 

\begin{align*}
    F_3(y) & = \left \{ \begin{matrix}
        101y & \text{if $y \in \left [0,\frac{5}{900}\right ]$} \\
         \frac{5}{9} + F_1(y) & \text{if $y \in \left (\frac{5}{900},\frac{1}{6}\right ]$} \\
         \frac{5}{9} + F_1(\frac{1}{6}) & \text{if $y \in \left (\frac{1}{6},\frac{1}{3} \right ]$} 
        \end{matrix} \right. \\
    F_2(y) & = \left \{ \begin{matrix}
    21y & \text{if $y \in \left [0,\frac{5}{900}\right ]$} \\
     \frac{1}{9} + F_1(y) & \text{if $y \in \left (\frac{5}{900},\frac{1}{6} \right ]$}  \\
     \frac{1}{9} + F_1(\frac{1}{6}) & \text{if $y \in \left (\frac{1}{6},\frac{1}{3} \right ]$} 
    \end{matrix} \right. \\
    F_1(y) & = \left \{ \begin{matrix}
    y & \text{if $y \in \left [0,\frac{5}{900}\right ]$} \\
    \frac{1}{3} \left [ \ln \left (v_2 -\frac{5}{900} \right ) - \ln \left ( v_2 - y \right )  \right ] + \frac{5}{900} & \text{if $y \in \left (\frac{5}{900},\frac{1}{6} \right ]$} \\
    F_1(\frac{1}{6}) & \text{if $y \in \left (\frac{1}{6},\frac{1}{3} \right ]$} 
    \end{matrix} \right.
\end{align*}

As before, the opponents' bids follow a distribution, $\betavec \sim \mathcal{D}$. Given a sample of this random variable $\betavec = ( \beta_1,\beta_2,\beta_3)$ the utility of the learner is the following function : 
\begin{align*}
    u_{\betavec} (b_1,b_2,b_3) & = \indicator[b_1 \geq \beta_3 \geq b_2] \left (v_1 - \beta_3 \right ) \\
    & + \indicator[\beta_2 \geq b_2\geq \beta_3] \left (v_1 - b_2 \right )  \\
    & + \indicator[b_2 \geq \beta_2 \geq b_1] \left (v_1+v_2 - 2\beta_2 \right ) \\
    & + \indicator[\beta_1 \geq b_3\geq \beta_2] \left (v_1 + v_2 - 2b_3 \right ) \\
    & +  \indicator[b_3 \geq \beta_1] \left (v_1+v_2+v_3 - 3\beta_1 \right )
\end{align*}

For $k \in [K]$, $F_k : [0,1] \rightarrow [0,1]$ is the CDF of the $k^{th}$ element of the vector $\betavec$. And we denote the corresponding density $f_k$. The expected utility then writes :

\begin{align*}
    \underset{\betavec \sim \mathcal{D}}{\mathbb{E}} \left [ u_{\betavec} (b_1,b_2,b_3) \right ]  & = v_1 \left ( F_3(b_1) - F_3 (b_2) \right ) - \int_{b_2}^{b_1} y f_3(y) dy  \\
    & + \left [ F_3(b_2) - F_2(b_2) \right ] \left (v_1 - b_2 \right )  \\
    & + \left (v_1+v_2 \right ) \left ( F_2(b_2) - F_2 (b_1) \right ) - 2 \int_{b_3}^{b_2} y f_2(y) dy \\
    & +  \left [ F_2(b_3) - F_1(b_3) \right ] \left (v_1 +v_2 - 2 b_3 \right ) \\
    & +  \left(v_1+v_2+v_3 \right) F_1(b_3) - 3 \int_{0}^{b_3} y f_1(y) dy
\end{align*}

We can hence compute partial derivatives : 
\begin{equation}
    \frac{\partial }{\partial b_1} \underset{\betavec \sim \mathcal{D}}{\mathbb{E}} \left [ u_{\betavec} (b_1,b_2,b_3) \right ] = \left (v_1-b_1 \right ) f_3(b_1) 
\end{equation}
\begin{equation}
    \frac{\partial }{\partial b_2} \underset{\betavec \sim \mathcal{D}}{\mathbb{E}} \left [ u_{\betavec} (b_1,b_2,b_3) \right ] = F_2(b_2) - F_3(b_2) + \left (v_2-b_2 \right ) f_2(b_2) 
\end{equation}
\begin{equation}
    \frac{\partial }{\partial b_3} \underset{\betavec \sim \mathcal{D}}{\mathbb{E}} \left [ u_{\betavec} (b_1,b_2,b_3) \right ] =  2 \left ( F_1(b_3) - F_2(b_3) \right ) + \left (v_3-b_3 \right ) f_1(b_3) 
\end{equation}

We can therefore write the partial derivative explicitly : 
\begin{equation} \label{eq : derivative uniform lower bound 2}
    \frac{\partial }{\partial b_2} \underset{\betavec \sim \mathcal{D}}{\mathbb{E}} \left [ u_{\betavec} (b_1,b_2,b_3) \right ] = \left \{  \begin{matrix}
        \frac{21}{3} -101y & \text{if $y \in \left [0,\frac{5}{900}\right ]$} \\
       - \frac{1}{9} & \text{if $y \in \left (\frac{5}{900},\frac{1}{6} \right ]$} \\
       -\frac{4}{9} & \text{if $y \in \left (\frac{1}{6},\frac{1}{3} \right ]$} 
    \end{matrix}\right.
\end{equation}
\begin{equation} \label{eq : derivative uniform lower bound 3}
    \frac{\partial }{\partial b_3} \underset{\betavec \sim \mathcal{D}}{\mathbb{E}} \left [ u_{\betavec} (b_1,b_2,b_3) \right ] = \left \{  \begin{matrix}
        \frac{1}{3}-21y & \text{if $y \in \left [0,\frac{5}{900}\right ]$} \\
       + \frac{1}{9} & \text{if $y \in \left (\frac{5}{900},\frac{1}{6} \right ]$} \\
       -\frac{2}{9} & \text{if $y \in \left (\frac{1}{6},\frac{1}{3} \right ]$} 
    \end{matrix}\right.
\end{equation}

Recall, that $\bvec \in B \implies b_2 \geq b_3$, it should be clear form \eqref{eq : derivative uniform lower bound 3} and \eqref{eq : derivative uniform lower bound 2} that the maximum is attained for all $b_2,b_3 \in \left [\frac{5}{900}, \frac{1}{6} \right ]$, such that $b_2 = b_3$. Furthermore, there exist $c \in \mathbb{R}$ such that for all $b_2,b_3 \in \left [\frac{5}{900},\frac{1}{6} \right ]$, if $b_2 \neq b_3$ the utility at least $c (b_2-b_3)$ sub-optimal. 

Let $\epsilon >0$, for $k \in \left [0,\left \lfloor \frac{1}{7\epsilon} \right \rfloor \right ]$ we denote $I_{\epsilon}^k : =  (\frac{5}{900}+k\epsilon,\frac{5}{900}+\left (k+1 \right)\epsilon) $. We will use these intervals to define local perturbations of the CDFs.
Note that with these notation, $I_{\frac{\epsilon}{ 2}} ^{2k}$ and $I_{\frac{\epsilon}{ 2}} ^{2k+1}$ are respectively the first and second half of $I_{\epsilon}^k$. 

For the purpose of the lower bound, we only need to introduce a perturbation of $F_2$. We therefore define $F_{2,\epsilon}^k$ as follows : 
\begin{equation}
    F_{2,\epsilon}^k (y) =  \left \{  \begin{matrix}
        F_2(y) + (y - k\epsilon ) & \text{if $y \in I_{\frac{\epsilon}{ 2}} ^{2k}  $} \\
        F_2(y) + \epsilon - (y - k\epsilon)   & \text{if $y \in I_{\frac{\epsilon}{ 2}} ^{2k+1}  $} \\
        F_2(y) & \text{otherwise} \\
    \end{matrix}\right.
\end{equation}

With the following density 
\begin{equation}
    f_{2,\epsilon}^k (y) =  \left \{  \begin{matrix}
        f_2(y) + 1 & \text{if $y \in I_{\frac{\epsilon}{ 2}} ^{2k}  $} \\
        f_2(y) - 1 & \text{if $y \in I_{\frac{\epsilon}{ 2}} ^{2k+1}  $} \\
        f_2(y) & \text{otherwise} \\
    \end{matrix}\right.
\end{equation}

We therefore define $\mathcal{D}_\epsilon^k \in \mathcal{P} \left ( \left [0,1\right ]^3\right )$, a distribution such that if $\left ( \beta_1,\beta_2,\beta_3 \right ) \sim \mathcal{D}_\epsilon^k $ then the corresponding CDFs of $\beta_1,\beta_2,\beta_3 $ are $F_1,F_{2,\epsilon}^k,F_3$.

We have now defined what we need to define the hard instance. It is defined by the following process: \begin{itemize}
    \item Given $T$, set $\epsilon$
    \item Draw $k$ uniformly at random in $\left [ 0, \frac{1}{7 \epsilon}\right ]$
    \item For t in $[T]$: \begin{itemize}
\item The bidder picks bids $b_1^t,b_2^t,b_3^t$ based on the past feedbacks $Z_1,...,Z_{t-1}$.
\item The adversaries bids are drawn :$\betavec^t \sim \mathcal{D}_\epsilon^k$.
\item The bidder receives the utility $u_{\betavec^t}(b_1^t,b_2^t,b_3^t)$ and the feedback $Z_t:=Z(\betavec^t,\bvec^t)$
\end{itemize}
\end{itemize}

Note that $\epsilon$ depends on $T$, intuitively, it should behave as $T^{-\frac{1}{3}}$. The exact dependency required is determined later in the proof.

Because in this instance, the only thing a bidder can learn is $F_{2,\epsilon}^k$ or equivalently which distribution $\mathcal{D}_\epsilon^k$ it faces, we will discard in the analysis the part of the feedback which doesn't provide information about the second marginal distribution. Reducing to $Z_t:=Z(\betavec^t,\bvec^t) = \left (  \indicator \{ \beta_2^t  \leq b_3\}, \indicator \{ \beta_2^t  \in (b_3,b_2]\} \beta_2^t,  \indicator \{ \beta_2^t  > b_2\} \right ) $ (this is directly implied by \autoref{lemma : bandit feedback uniform decomposed formula 1}). To provide a simpler analysis, it is convenient to work with a one dimesional feedback, notice that the previously described feedback is equivalent (one can define a bijection) with $Z(\betavec^t,\bvec^t) = \max \left ( \indicator \{ \beta_2^t  \in (b_3,b_2]\} \beta_2^t,  \indicator \{ \beta_2^t  > b_2\} \right )$, this is the formulation we use in the following proofs.   We also only consider algorithms that, at all times $t$, bid $b_1^t = v_1$. This is without loss of generality, as not doing so would only add additional regret and provide no additional useful feedback.

We also only consider deterministic algorithms, the lower bound then generalizes to randomized algorithms by Yao's minimax principle. 

Our final reduction is the following: we consider in the following analysis only algorithms such that $ \sum_{t=1}^T b_2^t - b_3^t = o(T^{2/3})$. This is without loss of generality: since the optimal bids $b_2^\star$ and $b_3^\star$ are always equal and belong to $I_{\epsilon}^k$. The expression of the derivatives \eqref{eq : derivative uniform lower bound 2} and \eqref{eq : derivative uniform lower bound 3} ensures that for all bids with $b_2$ and $b_3$ in the interval $[5/900,\frac{1}{6}]$ the algorithms incurs a regret which scales with $b_2-b_3$. Furthermore, if either $b_2$ or $b_3$ is not in $[5/900,\frac{1}{6}]$, the algorithms incur an additional regret term of constant size. Therefore any algorithms such that we do not have $ \sum_{t=1}^T b_2^t - b_3^t = o(T^{2/3})$ incurs a regret which scales as $\Omega (T^{2/3})$ in this instance.

\paragraph{Some Notations:}
We denote $\mathbb{P}^0$ a probability measure over $\betavec$ when the CDFs are respectively $F_1,F_2,F_3$ and $\mathbb{P}^k$ a probability measure over $\betavec$ conditionally on the value of $k$ (ie when its distributed according to $\mathcal{D}_\epsilon^k$). The notation extends naturally to expectations $\mathbb{E}^0$ and $\mathbb{E}^k$.

\begin{restatable}{lemma}{differenceExpectedUniform}

    For all $i \leq \ceil{3 T^{-1/3}}$ and any deterministic algorithm $\mathcal{A}$, denoting $N_T(i)$ the number of times algorithm $\mathcal{A}$ produces a bid such that $ (b_3^t, b_2^t] \cap \mathcal{I}_i^T \neq \emptyset$, we have :
    \begin{equation}\mathbb{E}^i\left [ N_T(i) \right ] -  \mathbb{E}^0\left [ N_T(i) \right ] \leq 10 \epsilon T \sqrt{\mathbb{E}^0 [N_T(i)]}
    \end{equation}
    
\end{restatable}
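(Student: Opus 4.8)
The plan is to mirror the technical lemma behind \autoref{lemma : lower bound first price} (its first-price analogue), adapting it to the richer uniform-auction feedback. First I would telescope the left-hand side over time,
\begin{equation*}
\mathbb{E}^i[N_T(i)] - \mathbb{E}^0[N_T(i)] = \sum_{t=1}^T \Big( \mathbb{P}^i\big[(b_3^t, b_2^t] \cap I_\epsilon^i \neq \emptyset\big] - \mathbb{P}^0\big[(b_3^t, b_2^t] \cap I_\epsilon^i \neq \emptyset\big] \Big),
\end{equation*}
where $I_\epsilon^i$ is the support $\mathcal{I}_i^T$ of the $i$-th perturbation, and bound each summand by the total variation distance between the laws of the feedback history $Z_{1:t-1}$ under $\mathbb{P}^i$ and $\mathbb{P}^0$ (the bid $\bvec^t$ is a deterministic function of that history). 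Applying Pinsker in the form $\|\mathbb{P}^i - \mathbb{P}^0\|_{TV} \leq \sqrt{\tfrac12\, \mathrm{KL}(\mathbb{P}^0_{Z_{1:t-1}} \,\|\, \mathbb{P}^i_{Z_{1:t-1}})}$ (the direction whose chain rule unfolds into an expectation under $\mathbb{P}^0$) reduces everything to the per-round conditional divergences $\mathrm{KL}(\mathbb{P}^0_{Z_j \mid Z_{1:j-1}} \,\|\, \mathbb{P}^i_{Z_j \mid Z_{1:j-1}})$. The key structural fact, exactly as in the first-price case, is that since $F_{2,\epsilon}^i - F_2$ is supported on $I_\epsilon^i$ and integrates to zero, this conditional divergence vanishes on every round whose bid window $(b_3^j, b_2^j]$ misses $I_\epsilon^i$; only the $\mathbb{E}^0[N_{t-1}(i)]$ intersecting rounds contribute.

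The heart of the argument, where it departs from the first-price proof, is bounding the conditional KL on an intersecting round. I would split the feedback $Z_j = \max(\indicator\{\beta_2 \in (b_3^j, b_2^j]\}\beta_2,\, \indicator\{\beta_2 > b_2^j\})$ into its point-mass outcomes (the events $\{\beta_2 \leq b_3^j\}$ and $\{\beta_2 > b_2^j\}$) and its continuous outcome, which reveals the exact value of $\beta_2$ on $(b_3^j, b_2^j]$. The point-mass part only depends on the CDF values $F_{2,\epsilon}^i(b_3^j)$ and $F_{2,\epsilon}^i(b_2^j)$, which differ from the unperturbed ones by at most the tent height $\epsilon/2$; feeding this into the same Bernoulli/chi-square estimate $\mathrm{KL}(p\|q)\leq (p-q)^2/(q(1-q))$ used for \eqref{eq: upper bound KL_bernoulli} yields an $O(\epsilon^2)$ contribution per intersecting round, precisely the order needed to produce the prefactor $10\epsilon T$ after summation.

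The main obstacle is the continuous part of the feedback: observing $\beta_2$ exactly exposes the $\pm 1$ density perturbation $f_{2,\epsilon}^i - f_2$, whose chi-square contribution $\int_{(b_3^j,b_2^j]\cap I_\epsilon^i} (f_{2,\epsilon}^i - f_2)^2/f_{2,\epsilon}^i\,dy$ scales \emph{linearly} with the overlap length $|(b_3^j, b_2^j] \cap I_\epsilon^i|$ rather than with $\epsilon^2$. Taken alone this would inflate the per-round divergence to $O(\epsilon)$ and only give a $\sqrt{\epsilon}$ rate. The resolution is that this informative feedback is collected solely over the opened window, so the overlap is at most $b_2^j - b_3^j$; since the analysis has already been reduced to algorithms with $\sum_t (b_2^t - b_3^t) = o(T^{2/3})$, the total continuous contribution over all rounds — and, after averaging over the uniformly random index $i$, which spreads this $o(T^{2/3})$ budget across the $\Theta(1/\epsilon)$ intervals, its share for a typical $i$ — is of strictly lower order and cannot affect the $T^{2/3}$ scaling. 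Hence the dominant per-round divergence is the $O(\epsilon^2)$ point-mass term.

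Finally I would assemble the pieces. The chain rule together with the $O(\epsilon^2)$ conditional bound gives $\mathrm{KL}(\mathbb{P}^0_{Z_{1:t-1}} \,\|\, \mathbb{P}^i_{Z_{1:t-1}}) \leq C\epsilon^2\, \mathbb{E}^0[N_{t-1}(i)] \leq C\epsilon^2\, \mathbb{E}^0[N_T(i)]$, using that $N_{t-1}(i)\leq N_T(i)$. Pinsker then bounds each of the $T$ telescoped summands by $\sqrt{C/2}\,\epsilon\,\sqrt{\mathbb{E}^0[N_T(i)]}$, and summing over $t \in [T]$, with the universal constant absorbed so that $\sqrt{C/2}\leq 10$, yields $\mathbb{E}^i[N_T(i)] - \mathbb{E}^0[N_T(i)] \leq 10\,\epsilon\, T\,\sqrt{\mathbb{E}^0[N_T(i)]}$, as claimed.
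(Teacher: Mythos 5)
Your proposal tracks the paper's own proof step for step through the main machinery: the telescoping of $\mathbb{E}^i[N_T(i)]-\mathbb{E}^0[N_T(i)]$ over rounds, the bound on each summand by the total variation between the laws of the feedback histories, Pinsker, the chain rule exploiting that a deterministic algorithm's bid is a function of past feedback, the locality of the perturbation forcing the indicator $\indicator \left \{ I^i_{\epsilon} \cap (b^j_3,b^j_2] \neq \emptyset \right \}$, and the Bernoulli/chi-square estimate for the two point masses of $Z_j$ (your $O(\epsilon^2)$ point-mass bound matches the paper's $81\epsilon^2$ term). The genuine departure, and the genuine gap, is your treatment of the continuous part of the feedback. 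The paper bounds the corresponding chi-square integral $\int_{I^i_{\epsilon}} (f_2-f^i_{2,\epsilon})^2/f^i_{2,\epsilon}\,du$ by $6\epsilon^2$ and folds it into a single per-round charge of $87\epsilon^2\,\mathbb{P}^0 ( I^i_{\epsilon} \cap (b^j_3,b^j_2] \neq \emptyset )$; it is precisely this $O(\epsilon^2)$ per-round charge that, after the chain rule and Pinsker, yields the $10\epsilon T\sqrt{\mathbb{E}^0[N_T(i)]}$ of the lemma. You instead charge the continuous part linearly in the overlap length — which is indeed the honest value of the chi-square for the $\pm 1$ density tent, since $(f_2-f^i_{2,\epsilon})^2=1$ on $I^i_{\epsilon}$, so your scrutiny of this step is warranted — and then defer it to the reduction $\sum_t (b_2^t-b_3^t)=o(T^{2/3})$, asserting it is of strictly lower order.

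That deferral does not work quantitatively, and the stated inequality does not follow from your intermediate bounds. Set $S_i := C'\,\mathbb{E}^0 \left [ \sum_{t\le T} \left | (b_3^t,b_2^t]\cap I^i_{\epsilon} \right | \right ]$, the total continuous KL budget for index $i$. Your chain-rule bound is $\mathrm{KL}\le C\epsilon^2\,\mathbb{E}^0[N_T(i)]+S_i$, so Pinsker leaves an additive extra term, after summing over the $T$ rounds, of $T\sqrt{S_i/2}$ on the right-hand side of the lemma. The width reduction only guarantees $\sum_i S_i = o(T^{2/3})$, and in the averaging step over the $M=\Theta(1/\epsilon)=\Theta(T^{1/3})$ indices, Cauchy--Schwarz gives $\frac{1}{M}\sum_i T\sqrt{S_i} \le T\sqrt{\sum_i S_i/M}$, which can be as large as $T\cdot o(T^{1/6}) = o(T^{7/6})$ when the budget is spread evenly across intervals (e.g., by round-robin probing with windows of width $\epsilon$, which respects your reduction). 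That term dwarfs $T$ itself, so the averaged bound becomes vacuous and the $\Omega(T^{2/3})$ conclusion collapses; for your accounting to close, one would need a total width budget of $O(T^{1/3})$, not $o(T^{2/3})$. Seen per algorithm rather than on average: a deterministic algorithm whose window has width exactly $\epsilon$ on intersecting rounds accumulates continuous KL of order $\epsilon$ (not $\epsilon^2$) per such round, for which your method delivers only $T\sqrt{\epsilon\,\mathbb{E}^0[N_T(i)]}$ — short of the lemma's bound by a factor $\epsilon^{-1/2}=T^{1/6}$ — while still satisfying $\sum_t(b_2^t-b_3^t)=o(T^{2/3})$ provided it probes for $o(T)$ rounds. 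In short, the continuous term must be controlled at the $\epsilon^2$ scale inside the per-round KL, as the paper's proof does; it cannot be discharged by an end-of-proof scaling remark, and as written your proposal proves a strictly weaker statement than the lemma.
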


Averaging for $i \leq  \floor{\frac{1}{7\epsilon}}$, we get:

\begin{align}\frac{\sum_{i=1}^{\floor{\frac{1}{7\epsilon}}} \mathbb{E}^i\left [ N_T(i) \right ]  }{\floor{\frac{1}{7\epsilon}}} & \leq \frac{\sum_{i=1}^{\floor{\frac{1}{7\epsilon}}} \mathbb{E}^0\left [ N_T(i) \right ]  }{\floor{\frac{1}{7\epsilon}}}  + 10 \epsilon T \sqrt{ \frac{\sum_{i=1}^{\floor{\frac{1}{7\epsilon}}} \mathbb{E}^0 [N_T(i)]  }{\floor{\frac{1}{7\epsilon}}}  } \\
& \leq \frac{ T  }{\floor{\frac{1}{7\epsilon}}}  + 10 \epsilon T \sqrt{ \frac{T   }{\floor{\frac{1}{7\epsilon}}}  }  \end{align}

Let us denote $\tilde{N_T(i)}$ the number of times algorithm $\mathcal{A}$ produces a bid such that $ (b_3^t, b_2^t] \cap \mathcal{I}_i^T \neq \emptyset$, and $i$ is the smallest such index. We can recover for this one that $\sum_{i=1}^{\floor{\frac{1}{7\epsilon}}} \Expectation^0 \left [ \tilde{N}_T(i) \right ] = T $. 
Notice that because we focus on algorithms such that $ \sum_{t=1}^T b_2^t - b_3^t = o(T^{2/3})$, the difference $\sum_{i=1}^{\floor{\frac{1}{7\epsilon}}} \Expectation^0 \left [ \tilde{N}_T(i) \right ] - \sum_{i=1}^{\floor{\frac{1}{7\epsilon}}} \Expectation^0 \left [ N_T(i) \right ] = o \left (\frac{T^{2/3}}{\epsilon} \right ) $. We continue the computation without this additional term of size $o \left (\frac{T^{2/3}}{\epsilon} \right )$ as it is clear the bound remains unchanged from this addition (it would only add to a term of order $o \left (\frac{T^{2/3}}{\epsilon} \right )$ in the right hand side of the inequality).

\begin{align}  R_T & \geq \frac{1}{\floor{\frac{1}{7\epsilon}}} \sum_{i=1}^{\floor{\frac{1}{7\epsilon}}} r_i \mathbb{E}^i \left [ T- N_T(i)\right ] \\
    & \geq \min_{i} r_i \left ( T- \frac{ T  }{\floor{\frac{1}{7\epsilon}}}  - 10 \epsilon T \sqrt{ \frac{T   }{\floor{\frac{1}{7\epsilon}}}  } \right )
\end{align}

Where $r_i$ is the sub-optimality of not playing bids in interval $i$ when it is optimal. 
Using \eqref{eq : derivative uniform lower bound 2} and \eqref{eq : derivative uniform lower bound 3} we obtain 
$$r_i \geq \frac{\epsilon}{12} $$

Choosing $\epsilon = \frac{1}{700} T^{-1/3}$ yields the desired lower bound of $\Omega \left (T^{2/3} \right )$

\end{proof}

\subsubsection{Technical lemma}

\differenceExpectedUniform*

\begin{proof}

For convenience and to have shorter expression, we denote $(b_3^t,b_2^t] (\cdot ) := \left (b_3^t(\cdot),b_2^t(\cdot)\right ]$.

We want to upper bound the following quantity : 
\begin{equation} \begin{split}
 \mathbb{E}^i\left [ N_T(i) \right ] -  \mathbb{E}^0\left [ N_T(i) \right ] &=  \sum_{t=1}^T \mathbb{P}^i \left [ (b_3^t,b_2^t] \left (Z_1,..,Z_{t-1} \right ) \cap \mathcal{I}_i^T \neq \emptyset \right ]  \\ 
 & \hspace{30pt }- \mathbb{P}^0 \left [ (b_3^t,b_2^t] \left (Z_1,..,Z_{t-1} \right ) \cap \mathcal{I}_i^T \neq \emptyset \right ]  \label{eq: lower bound suboptimal arms uniform}
\end{split} \end{equation}

We will upper each term of the sum independently. The algorithm can be anything deterministic, therefore $(b_3^t,b_2^t](.)$ can be any function, which is why we bound the previous quantity as follows :

\begin{align}
 \mathbb{P}^i \left [ (b_3^t,b_2^t] \left (Z_1,..,Z_{t-1} \right ) \cap \mathcal{I}_i^T \neq \emptyset \right ] - \mathbb{P}^0 \left [ (b_3^t,b_2^t] \left (Z_1,..,Z_{t-1} \right ) \cap \mathcal{I}_i^T \neq \emptyset \right ] & \hspace{12em} \\
   \hspace{8em}  \leq \left \lVert \mathbb{P}^i_{( Z_1,..,Z_{t-1} )} - \mathbb{P}^0_{(Z_1,..,Z_{t-1})} \right \rVert_{TV}  \\
   \hspace{8em}  \leq \sqrt{\frac{1}{2} KL \left ( \mathbb{P}^i_{( Z_1,..,Z_{t-1} )} ; \mathbb{P}^0_{(Z_1,..,Z_{t-1})} \right ) } \\
   \leq \sqrt{\frac{1}{2} \left ( KL\left ( \mathbb{P}^i_{Z_1} ; \mathbb{P}^0_{Z_1} \right ) + \sum_{j=2}^t KL \left ( \mathbb{P}^i_{Z_j \mid Z_1,..,Z_{j-1}} ; \mathbb{P}^0_{Z_j \mid Z_1,..,Z_{j-1}} \right ) \right )} \label{eq : TV upper bounded by sum of conditional KL uniform}
\end{align}

Where $\mathbb{P}^i_{( Z_1,..,Z_{t-1} )}$ is the push-forward probability of $( Z_1,..,Z_{t-1} )$.

Since we focus on deterministic algorithms, $b^j( Z_1,..,Z_{j-1})$ conditionally on $Z_1,..,Z_{j-1} $ is fixed. We can therefore write the following equation to bound terms in the square root of \eqref{eq : TV upper bounded by sum of conditional KL uniform}.
\begin{align}
    KL \left ( \mathbb{P}^i_{Z_j \mid Z_1,..,Z_{j-1}} ; \mathbb{P}^0_{Z_j \mid Z_1,..,Z_{j-1}} \right ) & = \mathbb{E}^0 \left [ \ln \left ( \frac{\mathbb{P}^0 \left [ Z_j =0 \mid Z_1,..,Z_{j-1} \right ]}{\mathbb{P}^i \left [  Z_j =0 \mid Z_1,..,Z_{j-1}   \right ] } \right ) \mathbb{P}^0 \left [ Z_j =0 \mid Z_1,..,Z_{j-1}  \right ] \right. \\
    & +  \int_{0}^1 \ln \left ( \frac{p^0 \left [Z_j =u \mid Z_1,..,Z_{j-1} \right ]}{p^i \left [  Z_j =u \mid Z_1,..,Z_{j-1}   \right ] } \right ) p^0 \left [ Z_j =u \mid Z_1,..,Z_{j-1}  \right ] du \\
    & + \left. \ln \left ( \frac{\mathbb{P}^0 \left [ Z_j =1 \mid Z_1,..,Z_{j-1} \right ]}{\mathbb{P}^i \left [  Z_j =1 \mid Z_1,..,Z_{j-1}   \right ] } \right ) \mathbb{P}^0 \left [ Z_j =1 \mid Z_1,..,Z_{j-1}  \right ]  \right ] \end{align}
Note that this can be written as such because $Z_j$ takes values $1$ and $0$ with positive probability, it can also take values in $b_3^{j},b_2^{j}$ and its distribution admits a density because the distribution of $\beta_2$ admits a density. We continue forward in our calculation, using the fact that the distribution of $\beta_2$ and hence $Z_j$ between $\Proba^0$ and $\Proba^i$ only differs on $\mathcal{I}_\epsilon^i$.

    \begin{align}
    KL \left ( \mathbb{P}^i_{Z_j \mid Z_1,..,Z_{j-1}} ; \mathbb{P}^0_{Z_j \mid Z_1,..,Z_{j-1}} \right ) & = \mathbb{E}^0 \left [ \ln \left ( \frac{\mathbb{P}^0 \left [ Z_j =0 \mid Z_1,..,Z_{j-1} \right ]}{\mathbb{P}^i \left [  Z_j =0 \mid Z_1,..,Z_{j-1}   \right ] } \right ) \mathbb{P}^0 \left [ Z_j =0 \mid Z_1,..,Z_{j-1}  \right ] \right. \\
   &  +  \sum_{k=0}^{\frac{1}{7\epsilon}} \int_{I^k_{\epsilon}} \ln \left ( \frac{p^0 \left [Z_j =u \mid Z_1,..,Z_{j-1} \right ]}{p^i \left [  Z_j =u \mid Z_1,..,Z_{j-1}   \right ] } \right )
    p^0 \left [ Z_j =u \mid Z_1,..,Z_{j-1}  \right ] du\\
    &+ \left. \ln \left ( \frac{\mathbb{P}^0 \left [ Z_j =1 \mid Z_1,..,Z_{j-1} \right ]}{\mathbb{P}^i \left [  Z_j =1 \mid Z_1,..,Z_{j-1}   \right ] } \right ) \mathbb{P}^0 \left [ Z_j =1 \mid Z_1,..,Z_{j-1}  \right ]  \right ] \\
    & = \mathbb{E}^0 \left [ \ln \left ( \frac{\mathbb{P}^0 \left [ Z_j =0 \mid Z_1,..,Z_{j-1} \right ]}{\mathbb{P}^i \left [  Z_j =0 \mid Z_1,..,Z_{j-1}   \right ] } \right ) \mathbb{P}^0 \left [ Z_j =0 \mid Z_1,..,Z_{j-1}  \right ] \right. \\
   &  +  \int_{I^i_{\epsilon}} \ln \left ( \frac{p^0 \left [Z_j =u \mid Z_1,..,Z_{j-1} \right ]}{p^i \left [  Z_j =u \mid Z_1,..,Z_{j-1}   \right ] } \right )
    p^0 \left [ Z_j =u \mid Z_1,..,Z_{j-1}  \right ] du \\
    &+ \left. \ln \left ( \frac{\mathbb{P}^0 \left [ Z_j =1 \mid Z_1,..,Z_{j-1} \right ]}{\mathbb{P}^i \left [  Z_j =1 \mid Z_1,..,Z_{j-1}   \right ] } \right ) \mathbb{P}^0 \left [ Z_j =1 \mid Z_1,..,Z_{j-1}  \right ]  \right ] \\
     & \hspace{-90pt} = \mathbb{E}^0 \left [  \left ( \ln \left ( \frac{\mathbb{P}^0 \left [ Z_j =0 \mid Z_1,..,Z_{j-1} \right ]}{\mathbb{P}^i \left [  Z_j =0 \mid Z_1,..,Z_{j-1}   \right ] } \right ) \mathbb{P}^0 \left [ Z_j =0 \mid Z_1,..,Z_{j-1}  \right ] \right. \right. \\
     &\hspace{-90pt} + \left. \ln \left ( \frac{\mathbb{P}^0 \left [ Z_j =1 \mid Z_1,..,Z_{j-1} \right ]}{\mathbb{P}^i \left [  Z_j =1 \mid Z_1,..,Z_{j-1}   \right ] } \right ) \mathbb{P}^0 \left [ Z_j =1 \mid Z_1,..,Z_{j-1}  \right ]  \right ) \indicator \left \{ I^i_{\epsilon} \cap [b^j_3,b^j_2] \neq \emptyset \right \} \\
    & \hspace{-110pt} +  \left.  \int_{I^i_{\epsilon}} \ln \left ( \frac{p^0 \left [Z_j =u \mid Z_1,..,Z_{j-1} \right ]}{p^i \left [  Z_j =u \mid Z_1,..,Z_{j-1}   \right ] } \right )
    p^0 \left [ Z_j =u \mid Z_1,..,Z_{j-1}  \right ] du \indicator \left \{ I^i_{\epsilon} \cap [b^j_3,b^j_2] \neq \emptyset \right \} \right ]\end{align}

Let us denote $Y_j(\bvec^j)$ is the function which has values $1$ when $\beta_2^j \leq b_3^j$ and value $0$ when $\beta_2^j > b_2^j$. And let us use the inequality between the KL divergence and $\chi^2$ distance.
\begin{align*}
     KL \left ( \mathbb{P}^i_{Z_j \mid Z_1,..,Z_{j-1}} ; \mathbb{P}^0_{Z_j \mid Z_1,..,Z_{j-1}} \right )    & \leq \mathbb{E}^0 \left [ \left ( \ln \left ( \frac{\mathbb{P}^0 \left [ Y_j(b^j) =0 \right ]}{\mathbb{P}^i \left [  Y_j(b^j) =0   \right ] } \right ) \mathbb{P}^0 \left [ Y_j(b^j) =0 \right ] \right. \right. \\
    & + \left. \ln \left ( \frac{\mathbb{P}^0 \left [ Y_j(b^j) =1 \right ]}{\mathbb{P}^i \left [  Y_j(b^j) =1   \right ] } \right ) \mathbb{P}^0 \left [ Y_j(b^j) =1  \right ] \right ) \indicator \left \{ I^i_{\epsilon} \cap [b^j_3,b^j_2] \neq \emptyset \right \} \\
   & + \left. \int_{I^i_{\epsilon}} \frac{\left ( f_2(u) - f_{2,\epsilon}^i(u) \right ) ^2}{f_{2,\epsilon}^i(u) } du \indicator \left \{ I^i_{\epsilon} \cap (b^j_3,b^j_2] \neq \emptyset \right \} \right ] \\
         & \leq \left ( \max_{x \in I^i_{\epsilon}} \frac{\left ( F^0_2(x) -  F^i_2 (x)\right )^2}{F^i_2 (x) \left ( 1 - F^i_2 (x) \right )}  + 6 \epsilon^2   \right ) \Proba^0 \left  (I^i_{\epsilon} \cap (b^j_3,b^j_2] \neq \emptyset \right ) \\
         & \leq  \epsilon^2 (81 + 6) \Proba^0 \left  (I^i_{\epsilon} \cap (b^j_3,b^j_2] \neq \emptyset \right )  \label{eq: upper bound KL_bernoulli uniform} \numberthis
\end{align*}

We can now use \eqref{eq : TV upper bounded by sum of conditional KL uniform} and \eqref{eq: upper bound KL_bernoulli uniform}  to get : 
\begin{equation}
     \mathbb{P}^i \left [ (b_3^t,b_2^t] \left (Z_1,..,Z_{t-1} \right ) \cap \mathcal{I}_i^T \neq \emptyset \right ] - \mathbb{P}^0 \left [ (b_3^t,b_2^t] \left (Z_1,..,Z_{t-1} \right ) \cap \mathcal{I}_i^T \neq \emptyset \right ] \leq 10 \epsilon \sqrt {\Expectation^0 \left [ N_t(i) \right ]}
\end{equation}

Reinjecting in \eqref{eq: lower bound suboptimal arms uniform} allows us to recover : 
\begin{equation}
     \mathbb{E}^i\left [ N_T(i) \right ] -  \mathbb{E}^0\left [ N_T(i) \right ] \leq 10 T \epsilon \sqrt {\Expectation^0 \left [ N_T(i) \right ]}
\end{equation}
 
\end{proof}

\section{Instance specific proofs} \label{app : instance specific}

Here, we present proofs and results that are specific to certain instances of the auction problem, such as $\Delta$-separated distributions and the i.i.d. unit-demand adversary setting. These specialized analyses provide further insight into the behavior of learning algorithms under specific structural assumptions.

We begin by providing the proof of \autoref{par : delta separated}, which we first restate for convenience. 

\lemmadeltaseparated*

\begin{proof}
We first provide the proof of the lower bound in the discriminatory auction. 
We begin by providing the necessary tools for the reader to understand the reduction of the case of $\Delta$-separated distribution with $\Delta \leq \frac{1}{2K}$. Our first remark is the following: while the proof of \autoref{lemma : lower bound first price} makes use of an adversary distribution whose support is $[0,1]$, one can easily provide another difficult instance for adversary bids on $[0,\frac{1}{2}]$. Indeed, consider, with the notations used in the proof,  $\tilde{F}(b) = \max (1, F(2b)) $, and a valuation $v=\frac{1}{2}$. Then the bidder faces a problem identical to the original one, scaled down on $[0,\frac{1}{2}]$, and the lower bound holds. \newline
All that is left to show for the lower bound to extend to $\Delta$ sepearted distribution is that there exists such instances that reduces to the above \textit{scaled down} difficult instance.

With this in mind, consider the multi-unit discriminatory auction, when the bids of the adversary are such that $\beta := (1, 1-\frac{1}{2K}, \hdots, 1-\frac{K-2}{2K}, b) $ where $b$ follows the distribution corresponding to the scaled down first price auction mentioned above. Note that this constitutes a $\Delta$-separated distribution for $\Delta \leq \frac{1}{2K}$. Let us consider the case when the bidders value $\textbf{v} \in B$ is such that $v_1 \leq \frac{1}{2}$.  It is straightforward when writing the utility function that for any bidder such that $v_1 \leq \frac{1}{2}$, it is strategy dominant to only emit bids smaller or equal to $\frac{1}{2}$. Therefore, from the bidder's perspective, this reduces to a first-price auction problem, with opposing distribution $\tilde{F}(b)$ and valuation $v_1$. The lower bound of $\Omega (T^{2/3})$ from \autoref{lemma : lower bound first price} therefore extends to this setting for the discriminatory auction. \newline 

We now move to providing an upper bound for the uniform price auction. \newline \newline

We begin by showing that if the opposing bids follow a $\Delta$ separated distribution, then there exists a utility maximizing bid $\bvec^\star \in B$ such that $\bvec^\star$ values are only 0 and/or 1. Let $\mathcal{D}$ be a $\Delta$ separated distribution.
For convenience, let us rewrite the utility : \begin{equation} \begin{split}
    u(\bvec,\betavec) &= \sum_{i=1}^K \indicator \left \{b_i \geq \beta_{K-i+1} > b_{i+1} \right \} \left (\sum_{j=1}^i v_j - \beta_{K-i+1}\right ) \\
    & \hspace{10pt} +\sum_{i=1}^K \indicator \left \{ \beta_{K-i} > b_{i+1} \geq \beta_{K-i+1} \right \} \left ( \sum_{j=1}^i v_j - b_{i+1} \right )
\end{split} \end{equation}.

Since $\mathcal{D}$ is a $\Delta$ separated distribution, given any $\bvec \in B$, there is at most one $i \in [K]$ such that  $\Proba (b_i \geq \beta_{K-i+1} > b_{i+1} )$ and/or $ \Proba (  \beta_{K-i} > b_{i+1} \geq \beta_{K-i+1} )$ can be nonzero. Furthermore, notice that given $i \in [K]$,  we always have
\begin{equation} \begin{split}
    \Expectation \left [ \indicator \left \{b_i \geq \beta_{K-i+1} > b_{i+1} \right \} \left (\sum_{j=1}^i v_j - \beta_{K-i+1}\right ) \right . \\  
    & \hspace{-160pt} + \left. \indicator \left \{ \beta_{K-i} > b_{i+1} \geq \beta_{K-i+1} \right \} \left ( \sum_{j=1}^i v_j - b_{i+1} \right ) \right ]  \\
     & \hspace{-130pt} \leq   \Expectation \left [ \left (\sum_{j=1}^i v_j - \beta_{K-i+1}\right ) \right ]
\end{split} \end{equation}.

The previous equation is obtained by noticing that $\left \{ \beta_{K-i} > b_{i+1} \geq \beta_{K-i+1} \right \}$ and $ \left \{b_i \geq \beta_{K-i+1} > b_{i+1} \right \}$ are disjoint and the second indicator includes the condition $b_{i+1} \geq \beta_{K-i+1}$.

Therefore, by defining $i^\star := \argmax_{i \in [K]} \Expectation \left [\sum_{j=1}^i v_j - \beta_{K-i+1}\right ]$. We ensure that for any $\bvec \in B$, we have $\Expectation [u(\bvec,\betavec) ] \leq  \Expectation \left [\sum_{j=1}^{i^\star} v_j - \beta_{K-i^{\star}+1}\right ]$.

Now, by definition of $\mathcal{I}_{K-i^{\star}+1}$, if $\bvec \in B$ is such that $\mathcal{I}_{K-i^{\star}+1} \subseteq (b_{i^{\star}+1},b_{i^{\star}}]$, then $\Expectation \left [ u(\bvec,\betavec) \right ] = \Expectation \left [\sum_{j=1}^{i^\star} v_j - \beta_{K-i^{\star}+1}\right ]$, which is the maximum utility the learner can achieve. Notably, this means that one can always construct an optimal bid $\bvec \in B$ that consists only of 0s and 1s. 

We can further notice from the formula of $(\tilde{F}_k)_{k \in [K]}$ in \eqref{} that if $\mathcal{D}$ is $\Delta$ separated the empirical distribution described by $(\tilde{F}_k)_{k \in [K]}$ is also $\Delta$ separated.

Let us now suppose that, in \autoref{algorithm : bandit K+1}, when there is a tie between several maximizers, the algorithm chooses, if possible, bids with values $0$ or $1$. 

With this in mind and the fact that the estimates built for the algorithm $(\tilde{F}_k)_{k \in [K]}$ necessarily correspond to a $\Delta$ separated distribution, it should be clear that, at any time $t$, the algorithm only plays bids composed of $0$ and $1$. Because there are only $K$ such bids, the bidder essentially plays a $ K$-armed bandit problem.

Furthermore, the concentration bounds developed in the proof of \autoref{theorem : instance dependant bounds uniform} ensure with probability $1- 2K \alpha$ \begin{equation}
    \sup_{\bvec \in \prod_{i\in [K]} \mathcal{I}_k^t} \left | \underset{\betavec \sim \mathcal{D}}{\Expectation} (\bvec,\betavec) - \tilde{u}^{t} (\bvec) \right |  \leq 6 K^{3/2} \epsilon
\end{equation}

And therefore, if we denote $ \Delta{_i^{\star}} := \min_{i \in[K],i \neq i^\star}  \Expectation \left [\sum_{j=1}^K v_j - \beta_{K-i^\star+1}\right ] -  \Expectation \left [\sum_{j=1}^K v_j - \beta_{K-i+1}\right ] $, as soon as $6 K^{3/2} \epsilon \leq \Delta_i^{\star}$, the bidder only plays the optimal arm. 

The regret therefore scales as $\tilde{\mathcal{O}} (\sqrt{T} )$ in an instance dependent fashion (depending on $\Delta{_i^{\star}}$).

\end{proof}

We now move to providing the postponed proofs for the unit-demand symmetric case. We begin by proving our concentration result \autoref{lemma : DKW order}. 

\DKWforOrderStat*

\begin{proof}
    Let $k,k' \in [K]^2$, and $t \in [T]$ such that one has observed $X^1_{(k)}, \hdots, X^t_{(k)}$, the $k^{\text{th}}$ ordered statistic $t$ times. We can apply the DKW inequality from \cite{massart1990tight} to bound the distance from its empirical CDF to its actual CDFs denoted $\hat{F}_{(k)}$ and $F_{(k)}$. This yields the following \begin{equation}
         \Proba \left ( \sup_{x\in [0,1]} | \hat{F_{(k)}}^t (x)- F_{(k)}(x) | \leq \sqrt{ \frac{\ln \left ( \frac{2}{\alpha} \right )}{2t}} \right ) \geq  1 - \alpha 
    \end{equation} 
    We denote $\epsilon = \sqrt{ \frac{\ln \left ( \frac{2}{\alpha} \right )}{2t}} $
    Using \eqref{eq : cdf of order statistic distribution}, one has that for $F$ the CDF corresponding to $\mathcal{P}$, $P_k(F)=F_{(k)}$, where $P_k$ is the polynomial defined by \eqref{eq : cdf of order statistic distribution}. This polynomial is strictly increasing on $(0,1)$, we can therefore get the following inequalities with probability $1-\alpha$, $\forall x \in [0,1]$:
    \begin{equation}
       P_k^{-1} \left ( F_{(k)} (x) + \epsilon \right )  \leq  F(x) \leq P_k^{-1} \left ( F_{(k)} (x) + \epsilon \right ) 
    \end{equation}
    Using \eqref{eq : cdf of order statistic distribution} once again allows us to get with probability $1-\alpha$, $\forall x \in [0,1]$: 
    \begin{equation}
       P_{k'} \left ( P_k^{-1} \left ( F_{(k)} (x) + \epsilon \right ) \right )  \leq  F_{(k')}(x) \leq P_{k'} \left ( P_k^{-1} \left ( F_{(k)} (x) + \epsilon \right ) \right )
    \end{equation}
    We can then obtain concentration inequality in the usual formulation, with probability $1-\alpha$: 
        \begin{equation}
       \left | P_{k'} \left ( P_k^{-1} \left ( F_{(k)} (x) \right ) \right )  -  F_{(k')}(x) \right | \leq \epsilon \alpha_{k,k'}
    \end{equation}
    Where $\alpha_{k,k'}$ only depends on the upper bound of the derivatives of both $P_{k'}$ and $P_{k}^{-1}$ on (0,1).
\end{proof}

With this, we now have the necessary tools to prove the \autoref{theorem : regret i.i.d. uniform}. Let us first restate the theorem : 

\BanditIIdUniformAlgo*

\begin{proof}
    As noted in \autoref{lemma : truthfulness uniform}, it is known in advance that bidding such that $b_1=v_1$ is optimal. Since this is compatible with our \autoref{algorithm : Uniform Bid I.I.D}, we assume that amongst the maximizers of $\tilde{u}^t$, the algorithm always chooses one such that $b_1=v_1$.
    Let $t \in [T]$, recall that $\bar{u}^t(\cdot) =: U_u( (\bar{F}_k)_{k \in [K]}, \cdot )$ and $\underset{\betavec \sim \mathcal{D} }{\Expectation} \left [ u (\bvec, \betavec ) \right ]: = U_u( (F_k)_{k \in [K]}, \cdot )$. 
    Notice that the observations formalized in \autoref{lemma : bandit feedback uniform decomposed formula 1} ensure that for any $x \in [0,v_1]$, at time $t$, there exists $k'$ such that $t^{k'(x)} \geq t/K$. We can therefore leverage \autoref{lemma : DKW order} to obtain  the following for $k \in [K]$ with probability $1-\alpha$: \begin{equation}
       \sup_{x \in [0,v_1]} \left | \bar{F}^t_k(x) - F_k(x) \right | \leq \epsilon^t \gamma
    \end{equation}
    Where $\gamma = \max_{k,k' \in [K]^2} \alpha_{k,k'}$ and $\epsilon^t := \sqrt{ \frac{\ln \left ( \frac{2}{\alpha} \right )}{2t/K}}  $. 

    Note that we can ensure the bound is true across all $k \in [K]$ with probability $1-K\alpha$
    
    Reinjecting directly and using the formula from \eqref{eq : estimate of expected utility uniform price auction}, one gets the following : with probability $1-K\alpha$ : \begin{equation}
       \sup_{\substack{\bvec \in B \\ b_1 =v_1}} \left | \bar{u}^t(\bvec) - \underset{\betavec \sim \mathcal{D} }{\Expectation} \left [ u (\bvec, \betavec ) \right ]  \right | \leq 3 K^2 \epsilon^t \gamma
    \end{equation}
    Where $\gamma = \max_{k,k' \in [K]^2} \alpha_{k,k'}$ and $\epsilon^t := \sqrt{ \frac{\ln \left ( \frac{2}{\alpha} \right )}{2t/K}}  $.

    Using the optimality of $\bvec^t$ with respect to $\bar{u}^t(\bvec)$, we get that with probability $1 -K\alpha$: \begin{equation}
        \underset{\betavec \sim \mathcal{D} }{\Expectation} \left [ u (\bvec^t, \betavec ) \right ]  \geq \underset{\betavec \sim \mathcal{D} }{\Expectation} \left [ u (\bvec^\star, \betavec ) \right ] - 6 K^2 \epsilon^t \gamma
    \end{equation}

    Summing over $t \in [T]$ and taking $\alpha = \frac{1}{T^3}$ yields the desired regret bound.

\end{proof}

We finally provide the reduction necessary to obtain the lower bound of \autoref{lemma : lower bound iid}, which we restate first : 

\Lemmalowarboundiid*

\begin{proof}
    We focus on showing that we can build a reduction to one of the \emph{hard instances} used in the proof of the full-information feedback lower bound in \cite{branzei2023learning}. We focus on the 2 item auction. 

    Let $p=\frac{2}{3}$, if we sample two independant binomial random variable $X_1,X_2$ with probability $p$ multiply them by $\frac{2}{3}$ and denote that $\betavec := \left ( \frac{2}{3} \max(X_1,X_2), \frac{2}{3} \min(X_1,X_2) \right )$ then with probability $\frac{1}{9}$, $\betavec = (0,0)$, with probability $\frac{4}{9}$, $\betavec = (\frac{2}{3},0)$ and with probability  $\frac{4}{9}$, $\betavec = (\frac{2}{3},\frac{2}{3})$. 
    First notice that having with probability $\frac{1}{9}$, $\betavec = (0,0)$ does not change the best strategy (since the utility function  $u(\cdot,(0,0))$ is constant). 
    Then notice that the probability of $\betavec = (\frac{2}{3},0)$ and $\betavec = (\frac{2}{3},\frac{2}{3})$ have derivatives in $p = \frac{2}{3}$ of opposing signs which ensures we can create perturbations of size $\delta$ in opposing direction by small changes in $p$. 
    These are sufficient to ensure one can reproduce the same proofs as in \cite{branzei2023learning} Theorem 4 and lower bound the regret in this setting by $\Omega(\sqrt{T})$. 
\end{proof}

\end{document}